\newif\if@fewtab\@fewtabtrue
\xdef\hourmin{\number\count255}
\xdef\hourmin{\hourmin:\ifnum\count255<10 0\fi\the\count255}}
\def\ps@draft{\let\@mkboth\@gobbletwo
    \def\@oddfoot{\hbox to 7 cm{\tiny \versionno
       \hfil}\hskip -7cm\hfil\rm\thepage \hfil {\tiny\draftdate}}
    \def\@oddhead{}
    \def\@evenhead{}\let\@evenfoot\@oddfoot}
\def\draftdate{\number\month/\number\day/\number\year\ \ \ \hourmin }
\def\citen#1{\if@filesw \immediate\write \@auxout {\string\citation{#1}}\fi%
\@tempcntb\m@ne \let\@h@ld\relax \def\@citea{}%
\@for \@citeb:=#1\do {\@ifundefined {b@\@citeb}%
    {\@h@ld\@citea\@tempcntb\m@ne{\bf ?}%
    \@warning {Citation `\@citeb ' on page \thepage \space undefined}}%
    {\@tempcnta\@tempcntb \advance\@tempcnta\@ne
    \setbox\z@\hbox\bgroup\ifcat0\csname b@\@citeb \endcsname \relax
    \egroup \@tempcntb\number\csname b@\@citeb \endcsname \relax
    \else \egroup \@tempcntb\m@ne \fi \ifnum\@tempcnta=\@tempcntb
    \ifx\@h@ld\relax \edef \@h@ld{\@citea\csname b@\@citeb\endcsname}%
    \else \edef\@h@ld{\hbox{--}\penalty\@highpenalty
    \csname b@\@citeb\endcsname}\fi
    \else \@h@ld\@citea\csname b@\@citeb \endcsname \let\@h@ld\relax \fi}%
\def\@citea{,\penalty\@highpenalty\hskip.13em plus.13em minus.13em}}\@h@ld}
\def\@citex[#1]#2{\@cite{\citen{#2}}{#1}}%
\def\@cite#1#2{\leavevmode\unskip\ifnum\lastpenalty=\z@\penalty\@highpenalty\fi%
  \ [{\multiply\@highpenalty 3 #1%
  \if@tempswa,\penalty\@highpenalty\ #2\fi}]}   %
\newtheorem{thm}{Theorem}
\newtheorem{lemma}[thm]{Lemma}
\newtheorem{cor}[thm]{Corollary}
\newtheorem{proposition}[thm]{Proposition}
\newtheorem{theorem}[thm]{Theorem}
\newtheorem{defi}[thm]{Definition}
\theoremstyle{definition}
\newtheorem{definition}[thm]{Definition}
\newtheorem{remark}[thm]{Remark}
\def\proof {\noindent{{\it Proof.}}\hspace{7pt}}
\def\endofproof {\hfill{$\Box$}\\}
\def\be{\begin{equation}}
\def\ee{\end{equation}}
\newcommand\eqpic[4]{\begin{eqnarray}
                   \begin{picture}(#2,#3){}\end{picture}\nonumber\\
                   \raisebox{-#3pt}{ \begin{picture}(#2,#3) #4 \end{picture} }
                   \label{#1} \\~\nonumber \end{eqnarray} }
\newcommand\Eqpic[4]{\begin{eqnarray}
                   \begin{picture}(#2,#3){}\end{picture}\nonumber\\
                   \raisebox{-#3pt}{ \begin{picture}(#2,#3) #4 \end{picture} }
                   \nonumber \\[3pt]~\label{#1} \end{eqnarray} }
\newcommand\Includepic[1]   {{\begin{picture}(0,0)(0,0)
                            \scalebox{.38}{\includegraphics{imgs/def_fact_#1.eps}}\end{picture}}}
\newcommand\includepic[1]   {{\begin{picture}(0,0)(0,0)
                            \scalebox{.32}{\includegraphics{imgs/def_fact_#1.eps}}\end{picture}}}
\newcommand\INcludepic[1] {{\begin{picture}(0,0)(0,0)
                   \scalebox{.456}{\includegraphics{imgs/def_fact_#1.eps}}\end{picture}}}
\newcommand\Includepicfj[2]   {{\begin{picture}(0,0)(0,0)
                            \scalebox{.#1}{\includegraphics{imgs/def_fact_#2.eps}}\end{picture}}}
\def\assa          {\ensuremath{\mathrm a}}
\def\assb          {\ensuremath{\mathrm x}}
\def\assc          {\ensuremath{\mathrm m}}
\def\assd          {\ensuremath{d}}
\def\assp          {\ensuremath{\mathrm p}}
\def\bearl         {\begin{array}{l}}
\def\bearll        {\begin{array}{ll}}
\def\cir           {\,{\circ}\,}
\def\wcS           {{\cS_\circ}}
\def\D             {\varGamma}
\def\dD            {\assd(\D)}
\def\DX            {\D_{\!\times}}
\def\eear          {\end{array}}
\def\eq            {\,{=}\,}
\def\essential     {intrinsic}
\def\Essential     {Intrinsic}
\def\essentially   {intrinsically}
\def\fusion        {network}
\def\hatt          {\widehat}
\newcommand\Hatwsf[2]{\widehat{\ws~}_{\!\!\!#1#2}}
\def\iN            {\,{\in}\,}
\def\M             {{\mathcal M}}
\def\Mapw          {\mathrm{Map}_{\mathrm w.s.}}
\def\Mapsto        {\,{\mapsto}\,}
\def\MGL           {\M_{G\!\ell}}
\def\MST           {\M_{\varSigma,\torus}}
\def\nregion       {network region}
\def\nxt           {\raisebox{.08em}{\rule{.41em}{.41em}}~\,}
\def\nxT           {\raisebox{.08em}{\rule{.41em}{.41em}}$\;$}
\def\nxx           {\raisebox{.11em}{\rule{.24em}{.24em}}$\;$}
\def\P             {\varPi}
\newcommand\Thet[4]{\Theta^{#1}_{#2#3,#4}}
\def\Times         {\,{\times}\,}
\def\To            {\,{\to}\,}
\def\TOR           {\TT_{\!X}}
\def\torus         {{\ensuremath{{\mathrm T}^2}}}
\def\TORS          {\mathcal T_{\!X}}
\def\TOT           {\TT_{X;pq,\alpha\beta}}
\def\TT            {{\mathcal {\tilde T}}}
\def\tunit         {{\ensuremath{\text{\bf 1}}}}
\def\tws           {{\mathring{\ws}}}
\def\ve            {^{\mathrm v}}
\newcommand\iline[1]{\begin{itemize}\item[\nxT] #1 \end{itemize}~\\[-3.1em]}
\def\vv            {^\pm}
\def\vvi	     {^\mp}
\def\ws            {{\ensuremath{\varSigma}}}
\newcommand\cdef[5]{(c^{\mathrm{def}}_{#1,#2#3})_{#4#5}^{\phantom|}}
\newcommand\cdefinv[5]{{(c^{\mathrm{def}\;-1}_{#1,#2#3})}_{\!#4#5}^{}}
\def\dim{\mathrm{dim}}
\def\dsty{\displaystyle}
\def\eps{\varepsilon}
    \newcommand\eTA[1]  {e_{#1}}
\newcommand\ExtTdef[1]    {\ensuremath{T_{#1,#1}}}
\def\Fiso   {\Phi_{\!A}^{}}
\def\Fisoi  {\Phi_{\!A}^{-1}}
   \def\GH    {\M^{\rm A}_{G\!\ell}}
\newcommand\GLL[2] {G\!\ell_{#1#2}}
\def\Hom{\ensuremath{\mathrm{Hom}}}
\def\HomP{\ensuremath{\mathrm{Hom}^P}}
\newcommand\Homaa[2]{\ensuremath{\mathrm{Hom}_{A|A}(#1,#2)}}
\newcommand\Homab[2]{\ensuremath{\mathrm{Hom}_{A|B}(#1,#2)}}
\def\I{\ensuremath{\mathcal I}}
\def\ia                     {{\ensuremath{\imath}}}
\def\ib                     {\ensuremath{{\bar\imath}}}
\def\id		                {\text{id}}
\newcommand\instord[5]  {\ensuremath{\mathcal{\overline L}^{#1}_{#2,#3,#4,#5}}}
\def\jb{\ensuremath         {{\bar\jmath}}}
\newcommand\K[6]            {{\em #6}, {#1} {#2} ({#3}) {#4} {\tt[#5]} }
\def\kb                     {\ensuremath {{\bar k}}}
\def\nl{\cdot}
\def\oti{{\otimes}}
\newcommand\PA[2] {P_{#1\oti\cdots\oti#2}}
\def\pb{{\bar p}}
\def\proj{\mathrm{P}}
\def\projMF{\mathcal{P}}
\def\pS {\pi}
\def\qb{{\bar q}}
\newcommand\RR[5]   {{\sf R}^{(#1\,#2)#3}_{#4\,#5}}
\newcommand\Rm[5]  {{\sf R}^{-\,(#1\,#2)#3}_{\;#4\,#5}}
    \newcommand\rTA[1]  {r_{#1}}
\def\Scut          {S_{\text{cut}}}
\def\Sp            {S^1}
\newcommand\tpbm[2]    {B^{-}_{#1#2}}
\newcommand\wsf[4]{\ws_{#1#2,#3#4}}
\newcommand\hatwsf[4]{\widehat{\ws_{#1#2,#3#4}}}
\newcommand\wsSD[5]{S^{2}_{#1;#2#3,#4#5}}
\def\wst {{\mathrm A}}
\def\wstc {{\mathrm A}'}
\def\wstcc {{\mathrm A}''}
\def\one {{\mathbf 1}}
\def\defas {\equiv}
\def\TD {\ensuremath{\D_{\!-}}}
\def\DTD {\ensuremath{\D_{\!+}}}
\def\R {\mathbb{R}}
\def\cC {\ensuremath{\mathcal{C}}}
\def\cS {\ensuremath{\mathcal{S}}}
\def\im {\mathrm{Im}}
\def\bl {\ensuremath{\mathcal{H}}}
\def\cH {\ensuremath{\mathcal{H}}}
\numberwithin{equation}{section}
\begin{document}

\def\cir{\,{\circ}\,} 
\numberwithin{equation}{section}
\numberwithin{thm}{section}

\begin{flushright}
    {\sf NITS-PHY-2012001}\\[2mm]
    February 2012
\end{flushright}
 \vskip 3.5em

\begin{center} \Large\bf RCFT WITH DEFECTS: FACTORIZATION \\[2mm]
               \Large\bf AND FUNDAMENTAL WORLD SHEETS
\end{center}

\vskip 2.8em

{\large
\begin{center}
  ~Jens Fjelstad\,$^{\,a}$,~
  ~J\"urgen Fuchs\,$^{\,b}$,~
  ~Carl Stigner\,$^{\,b}$
\end{center}
}

\vskip 12mm

\begin{center}\it$^a$
  Department of Physics, \ Nanjing University\\
  22 Hankou Road, \ Nanjing,\ 210093\, China
\end{center}
\begin{center}\it$^b$
  Teoretisk fysik, \ Karlstads Universitet\\
  Universitetsgatan 21, \ S\,--\,651\,88\, Karlstad
\end{center}

\vskip 5.5em

\noindent{\sc Abstract}
\\[3pt]
It is known that for any full rational conformal field theory,
the correlation functions that are obtained by the TFT construction
satisfy all locality, modular invariance and factorization
	conditions,
 and that there is a small set of fundamental correlators to which all others
are related via factorization -- provided that the world sheets considered
do not contain any non-trivial defect lines. In this paper we generalize
both results to oriented world sheets with an arbitrary network of topological
defect lines.

\newpage


\section{Introduction}

The correlation functions of a full rational conformal field theory are strongly
constrained by consistency requirements: the locality, modular invariance and
factorization (or sewing) constraints.  Indeed, as has been shown in \cite{fjfrs2},
for any solution to the sewing constraints (with nondegenerate closed state
vacuum and nondegenerate two-point functions of boundary fields on the disk and
of bulk fields on the sphere), all correlators on arbitrary oriented world sheets are
already uniquely determined by the one-, two- and three-point functions on the disk.
The factorization constraints also allow one to obtain all correlators via sewing
from a small number of fundamental correlators \cite{sono2,lewe3,fips}.

A procedure for constructing all correlation functions as elements of the
appropriate spaces of conformal blocks of the corresponding chiral CFT (which,
in turn, are the spaces of solutions to the chiral Ward identities)
has been established in \cite{fuRs4,fuRs10,fjfrs}.
This procedure, called the \emph{TFT construction}, uses as an
input the systems of conformal blocks together with certain Frobenius algebras
in the category of representations of the chiral symmetry algebra. In
\cite{fjfrs} it was demonstrated explicitly that the correlators obtained from
the TFT construction do obey all locality, modular invariance and factorization
constraints.

The statement just made is in need of a further qualification, though. Namely,
the TFT construction gives the correlators on all world sheets,
including not only arbitrary field insertions in the bulk and on the boundary,
and arbitrary boundary conditions preserving the chiral symmetry, but also
arbitrary topological defect lines.
In contrast, the verification of the sewing identities in \cite{fjfrs} has
been carried out only for world sheets without any defect lines.
Or put differently, only factorization across \emph{trivial} defect lines
has been considered. To put the latter statement into context, recall that
in the TFT construction topological defect lines are labeled by bimodules
over the Frobenius algebras that characterize the full CFTs in the adjacent
regions of the world sheet. A trivial defect line separates two regions
characterized by one and the same Frobenius algebra and is labeled by that
Frobenius algebra.
Similarly, the list of fundamental correlators from which all others can be
obtained by sewing will no longer be exhausted by those considered in
\cite{sono2,lewe3,fips,fuRs10} when non-trivial defect lines are admitted.

The purpose of this paper is to close both of these gaps. First, we extend
the proof of factorization to world sheets with an arbitrary configuration
of defect lines. Second, we complete the list of fundamental world sheets
to include also world sheets with defects, taking into account that there are
various ways for world sheets to be equivalent, i.e.\ to have the same correlator.
Concretely, we show that the resulting list is still finite, and in fact not
much more complicated than the one obtained in the absence of defects. It
is worth stressing that it is not a priori guaranteed that the factorization
constraints in the presence of defects still give rise to a finite set of
fundamental world sheets. Indeed one has to allow for world sheets with
arbitrarily complicated defect networks, and one might have feared that the
factorization procedure is not sufficiently flexible for ending up with a
finite list.

We note that topological defects arise naturally e.g.\ in statistical systems
modeling condensed matter systems of physical interest, see e.g.\ \cite{savi,osaf2}.
The behavior of such a system can be very hard to analyze when a complicated defect
network is present. Our results allow one to discuss aspects of such a behavior by
considering instead only a finite set of correlators with simple defect configurations.
More generally, defect lines, as well as defect domain walls, are of much
current interest in other areas of quantum field theory as well (see e.g.\
\cite{kaWi,kaSau3,gamn,kiKon}); this provides an additional incentive for
studying basic properties of quantum field theories in the presence of defects.
In particular, the concise mathematical framework arising in our discussion may
still be suited in situations in which more heuristic concepts from conformal
field theory can no longer be applied.
    
To the best of our knowledge, aspects of factorization in the presence of
defects have so far only been addressed in \cite{petk4}, where the particular
case of crossing relations for four-point correlators on the sphere was
discussed. In our analysis the only restriction is that we take
the world sheets to be oriented, and accordingly the term CFT will be tacitly
understood as oriented CFT. But this restriction is only made for the sake
of brevity; our results can in fact easily be generalized to include
unoriented world sheets as well.
At a technical level our proof of factorization in principle follows the lines
of the corresponding proof in \cite{fjfrs}. We have, however, reformulated the
basic idea of factorization in such a manner that our arguments should be
accessible even without a full familiarity with the considerations in
\cite{fuRs4,fuRs10,fjfrs}.

\medskip

This paper is organized as follows. Section \ref{sec:bulkfac} is devoted to bulk
factorization. We first present, in Sections \ref{BF_outline} and
\ref{sec:gluingcob}, the basic ingredients of bulk factorization as well as
details about one important structure, the gluing cobordism. The precise
factorization statement is formulated in Section \ref{FactorizationIdentity}
(Theorem \ref{thm:bulkfac}, formula \eqref{fact_rel}), and its proof is given
in Section \ref{sec:factorizationproof}. As a preparation for the discussion of
fundamental correlators, Section
\ref{Worldsheets} deals with various properties of world sheets with defect lines.
First we provide in Section \ref{sec:defws} a precise definition of what we mean
by a world sheet with defects. In Section \ref{sec:eqWS} we then list a number
of ways in which world sheets can differ while still having equal correlators.
Section \ref{sec:bdfact} contains a brief discussion of boundary factorization
in the presence of defect lines, with the main result stated in Theorem
\ref{thm:bdfact}. Section \ref{sec:fundcorr} is devoted to fundamental
correlators. In Section \ref{sec:modcov} we state and prove the covariance and
invariance properties of correlators with defects (Theorem \ref{thm:modcov} and
Corollary \ref{cor:modinv}). In Section \ref{sec:fundws} we then present a list of
fundamental world sheets, and show that every correlator can be expressed in terms
of the correlators of these fundamental world sheets (Theorem \ref{thm:cS}). A
number of issues of more technical nature are collected in an Appendix.


\section{Bulk factorization}\label{sec:bulkfac}

\subsection{Cutting and gluing}\label{BF_outline}

We start with a brief survey of the ingredients needed for the analysis of bulk
factorization. A correlator of a full CFT is associated to a \emph{world sheet}.
This is a surface with a conformal structure and with an embedded graph that
carries information about all field insertions, boundary conditions and
topological defect lines. Here, following \cite{baGa}, by a \emph{topological}
defect line we mean a defect line across which all chiral symmetries are continuous.
A topological defect line is thus in particular totally transmissive for the
stress-energy tensor. As a consequence, it can be deformed without affecting
the value of a correlator, as long as it is not taken across any field insertion
or through another defect line. All defect lines considered in the sequel will
be topological; hence we will usually refer to them just as defect lines, and
refer to (segments of) defect lines that are located next to each other as
running `parallel'.

For our purposes it is sufficient to regard world sheets as topological
manifolds. In the absence of defect lines, the structure of such
a world sheet is specified in detail in Definition B.2 of \cite{fjfrs}.
We will present a complete description including defect lines. Since, as it
turns out, factorization can be analyzed without using all details, this
description is postponed to Section \ref{sec:defws}.

\emph{Factorization} associates to a given world sheet $\ws$ a new world
sheet $\ws'$. In the case of \emph{bulk} factorization,
$\ws' \eq \wsf pq\alpha\beta$ is obtained as follows. The factorization is
performed along an embedded circle $\Scut$ that is contained in a cylindrical
region in the interior of $\ws$. In a first step, the world sheet is cut
along the circle $\Scut$, which gives rise to two new circular components
of the boundary of the world sheet. In the second step, the holes created
this way are closed by gluing a suitable hemisphere to each of these circular
boundary components.

Here we are interested in the situation that the cutting circle $\Scut$ is
crossed by finitely many defects which are running parallel.
Since topological defect lines can be fused (see \cite{pezu5,ffrs5} and also Section \ref{sec:equivfus} below),
it is actually enough to consider just a single defect line $X$ crossing the circle $\Scut$.
The hemispheres that are to be glued in the second step of the factorization
procedure are then obtained by cutting along the equator a specific world sheet
$\wsSD Xpq\alpha\beta$. As a surface, $\wsSD Xpq\alpha\beta$ is the two-sphere,
and it comes with two marked points (say, the North and South pole), at which disorder
fields $\Thet Xpq\alpha$ and $\Thet X{\pb}{\qb}\beta$ reside; and these disorder fields
are connected by the defect line $X$.
The resulting factorization is schematically displayed in the following picture:
  \eqpic{dis_sphere}{370}{16}{
  \put(0,0)     {\Includepic{29a}}
  \put(48,7)   {\begin{turn}{90}\scriptsize$\Scut$\end{turn}}
  \put(128,20)  {$\longmapsto$}
  \put(172,0)   {\Includepic{29b}}
  \put(77,33)   {\scriptsize$X$}
  \put(187,33)  {\scriptsize$X$}
  \put(347,33)  {\scriptsize$X$}
  \put(245,22)  {\scriptsize$\Thet Xpq\alpha$}
  \put(278,22)  {\scriptsize$\Thet X\pb\qb\beta$}
  }
Two major impacts of the factorization are that the
new world sheet $\ws'$ has a different topology from that of $\ws$,
and that the set of field insertions has increased by two disorder fields.

Our task is now to relate the correlators on the world sheets $\ws$ and $\ws'$. The
correlator $C(\ws)$ of a world sheet $\ws$ is an element in the space $\cH(\widehat\ws)$
of conformal blocks on the complex double $\widehat\ws$ of $\ws$ \cite{ales,bcdcd,fuSc6}.
Since factorization changes the topology as well as the number of marked points on
the double, the spaces $\cH(\widehat\ws)$ and $\cH(\widehat{\ws'})$ are not
isomorphic. Still the correlators on $\ws$ and $\ws' \eq \wsf pq\alpha\beta$ can be
compared, with the help of a so-called \emph{gluing homomorphism}, which is a linear map
  \be\label{defGLL}
  \GLL pq:\quad \cH(\Hatwsf pq) \to \cH(\widehat\ws) \,.
  \ee
Here we write $\Hatwsf pq$, rather than $\hatwsf pq\alpha\beta$, for the double of the
world sheet $\wsf pq\alpha\beta$, in order to indicate that this two-manifold (and, as a
consequence, the associated space of conformal blocks) does not depend on the
labels $\alpha,\,\beta$ of the multiplicity space of disorder fields with chiral
labels $p$ and $q$.
The map $\GLL pq$ in \eqref{defGLL} is in fact precisely the same as the one defined in
formula (2.49) of \cite{fjfrs} and already used there in the proof of bulk factorization.
That this is still the correct gluing homomorphism in the more general situation
considered here is due to the fact that according to the TFT construction of
correlators, the relevant space of conformal blocks does not depend at all on
whether we deal with a trivial defect or with a non-trivial one.

The vector $\GLL pq\big( C(\wsf pq\alpha\beta )\big)$ lies in the same space
$\cH(\widehat\ws)$ as the original correlator $C(\ws)$. Indeed there is
\cite[Def.\,5.1.13(iv)]{BAki} an isomorphism
  \be
  \bigoplus_{p,q} \GLL pq:\quad
  \bigoplus_{p,q} \cH(\Hatwsf pq)\stackrel{\cong}{\longrightarrow} \cH(\widehat\ws)
  \ee
of vector spaces. In addition, further analysis with the help of the TFT
construction shows that the correlator $C(\ws)$ can be expressed as a
linear combination of the images $\GLL pq \big( C(\wsf pq\alpha\beta)\big)$
of the correlators for the factorized world sheets, i.e.\ one has
  \be\label{fact_rel_ansatz}
  C(\ws) = \sum_{p,q,\alpha,\beta} \xi_{pq,\alpha\beta}\; \GLL pq(C(\wsf pq\alpha\beta))
  \ee
with $\xi_{pq,\alpha\beta}\iN\mathbb C$. At the same time the TFT construction allows
one to express the coefficients $\xi_{pq,\alpha\beta}$ in terms of basic data of the CFT.

For explaining how the particular linear combination in question is found, we need to
recall the following information about the TFT construction. The category of
representations of the chiral symmetry algebra is a modular tensor category, to which
there is associated a three-dimensional topological field theory (TFT). In the TFT
construction the correlator $C(\ws)$ is interpreted as the invariant assigned by
that TFT to the \emph{connecting manifold} $\M_\ws$, a certain cobordism
$\M_\ws\colon \emptyset\To \partial\M_\ws \eq \widehat\ws$ with embedded ribbon graph
that is constructed from the data of the world sheet.
(The construction of $\M_\ws$ is detailed in e.g. \cite[App.\,B]{fjfrs}.)
Likewise, the gluing homomorphism is obtained as the invariant assigned by the TFT
to a cobordism
  \be
  \MGL \equiv \MGL{}_{,pq}\colon\quad \Hatwsf pq \to \widehat\ws \,,
  \ee
to which we will refer as the \emph{gluing cobordism},

One might be tempted to suspect that the two cobordisms $\M_\ws$ and
$\MGL \cir \M_{\ws'}$ from $\emptyset$ to $\widehat\ws$, while clearly containing
different ribbon graphs, at least coincide as topological manifolds. This is not
the case, though. However \cite[Figs.\,(5.3)\,\&\,(5.9)]{fjfrs}, the discrepancy
between the two manifolds, including their embedded ribbon graphs, is entirely
confined in a suitable embedded solid torus. More specifically, one
can realize $\M_\ws$ and $\MGL \cir \M_{\ws'}$ as compositions
  \be
  \M_\ws = \MST \circ \TORS
  \label{MST-TORS}
  \ee
and
  \be
  \MGL \circ \M_{\ws'} = \MST \circ \TOR
  \label{MST-TOR}
  \ee
of cobordisms, respectively, where $\MST\colon \torus\To \widehat\ws$ is a
cobordism from the torus \torus\
with two marked points labeled by $X$ to the double of the world sheet $\ws$, while
$\TOR$ and $\TORS$ are two different solid tori, regarded as cobordisms from the
empty set to \torus\ with two marked points labeled by $X$.

The coefficients in the expansion \eqref{fact_rel_ansatz}, and thus the
factorization identity, can therefore be determined by obtaining the precise
relationship between the invariant of the cobordism $\TORS$ and those of the cobordisms
$\TOR \eq \TOT$ for all values of the labels $p$, $q$, $\alpha$ and $\beta$.
Disregarding ribbon graphs, the two manifolds $\TORS$ and $\TOR$ are, informally,
related by a modular S-transformation of their boundary \torus. More specifically,
denote by $\mathrm{M}_\mathrm{S}$ the mapping cylinder over \torus{} of a homeomorphism
in the class of the modular S-transformation; then $\TORS$ and
$\mathrm{M}_\mathrm{S}\cir\TOR$ are related by an orientation preserving homeomorphism
that restricts to the identity on their common boundary.
Accordingly, the desired relation between the invariants of the cobordisms
\eqref{MST-TORS} and \eqref{MST-TOR} is referred to as a \emph{surgery relation},
and the geometrical input also gives a hint on what this relation can look like.
In the absence of defect lines, the relevant surgery relation is given in formula
(5.12) of \cite{fjfrs}. In Proposition \ref{surg_prop}
we will establish the generalization of that relation to the situation of our interest.
The resulting factorization identity is stated in Theorem \ref{thm:bulkfac} below.


\subsection{The gluing cobordism}\label{sec:gluingcob}

Since our proof of the factorization identity will largely follow the lines
of the proof in \cite{fjfrs}, we will not need to present all details about
the cobordisms appearing in \eqref{MST-TORS} and \eqref{MST-TOR},
but can concentrate on those parts in which they
differ from the situation without defect lines. However, in order to provide
some impression of the structure of the manifolds involved, including the relevant
modular S-transformation, we present here a schematic pictorial description.
This description differs somewhat from the one in \cite{fjfrs}; we hope that it
is more easily accessible.

As a main tool, we illustrate orientable surfaces and three-manifolds, or pieces
thereof, through specific projections to $\R$ and to $\R^2$, respectively. Let us
first describe these projections for the case of surfaces. We regard an orientable
surface as embedded in $\R^3$, parameterized by Cartesian coordinates $(x,y,z)$; the
surface is then projected to the real line by forgetting the $y$- and $z$-coordinates,
i.e.\ according to
  \be\label{def_proj}
  \pS:\quad\R^3\rightarrow \R\,, \qquad (x,y,z) \,\mapsto\, x\,.
  \ee
For instance, for the unit sphere $S^2 \eq \{(x,y,z)\,|\,x^2+y^2+z^2\eq1\}
\,{\subset}\,\R^3$ this gives $\pS(S^2) \eq [-1,1]$, or in pictures:
  \eqpic{proj_S2}{280}{33}{
  \put(0,0)    {\Includepic{19a}
  \put(97,36)    {\scriptsize$x$}
  \put(91,61)    {\scriptsize$y$}
  \put(52,89)    {\scriptsize$z$}
  }
  \put(184,39) {\Includepic{19b}
  \put(7,-9)     {\scriptsize$-1$}
  \put(83.3,-9)  {\scriptsize$1$}
  \put(98,-5)    {\scriptsize$x$}
  }
  \put(140,39)    {$\stackrel\pi\longmapsto$}
  }
The preimage of a point $p\iN (-1,1)$ is a circle,
$\pS^{-1}(p) \eq \Sp(p)$ for $p\iN(-1,1)$, where we set
  \be
  \Sp(p) := \{(p,y,z) \,|\, y,z\iN\R\,,~ y^2+z^2 \eq 1-p^2\} \,\subset\R^3 \,.
  \ee
For $p\,{\not\in}\,[-1,1]$ we have $\pS^{-1}(p) \eq \emptyset$, while
over the points $\pm1$ the circle $\Sp(p)$ degenerates to radius zero, i.e.\
the fibers over $\pm1$ are just points, $\pS^{-1}(\pm1) \eq (\pm1,0,0)$. To
emphasize the special nature of these fibers, the points $\pm1$ have been
marked by blobs in the picture \eqref{proj_S2}.

As another illustration, the action of $\pS$ on a disk $D^2$, viewed as a half-sphere,
looks as follows:
  \eqpic{proj_D2}{230}{36}{
  \put(0,0)    {\Includepic{19d}
  \put(97,36)    {\scriptsize$x$}
  \put(91,61)    {\scriptsize$y$}
  \put(52,89)    {\scriptsize$z$}
  }
  \put(194,39) {\Includepic{19e}
  }
  \put(140,39)    {$\stackrel\pi\longmapsto$}
  }
Here the left end-point of $\pS(D^2)$ is of the same type as the end-points in
\eqref{proj_S2}, while the right end-point is the image of the boundary circle
$\partial D^2$.

Similarly, the three-manifolds of our interest are regarded as embedded in
$\R^4$, parameterized by Cartesian coordinates $(x,y,z,t)$, and as
projected to the paper plane $\R^2$ by again forgetting the $y$- and $z$-coordinates,
i.e.\ according to
  \be\label{pSvol}
  \pS:\quad\R^4\rightarrow \R^2\,, \qquad (x,y,z,t) \,\mapsto\, (x,t)\,.
  \ee
Using the same symbol $\pi$ as before is justified because
the projection \eqref{def_proj} is merely a special case of \eqref{pSvol},
obtained by restricting to $t \eq 0$.
As an example, for a cylinder over the unit sphere $S^2\,{\subset}\,\R^3$ we get
$\pS([-1,1]\Times S^2) \eq [-1,1]\times[-1,1]$, which we draw as
  \eqpic{proj_S2cyl}{213}{36}{
  \put(0,36)  {$ \pS([-1,1]\Times S^2) ~= $}
  \put(110,-3){
  \put(0,0)     {\Includepic{19cs}}
  \put(-4,68)   {\scriptsize$t$}
  \put(84,-4)   {\scriptsize$x$}
  } }
Note that $\pS^{-1}(t,x) \eq \{t\}\Times \Sp(x)$ for $x\iN (-1,1)$ and $t\iN\R$,
while $\pS^{-1}(t,\pm1)\eq (t,\pm1,0,0)$.

The boundary of the three-manifold represented by \eqref{proj_S2cyl} is projected to
the two solid lines. The dotted lines, on the other hand, consist of those points
$p\iN \R^2$ for which $\pS^{-1}(p)$ is a single point. The fiber over any other
point in $(-1,1)\Times[-1,1]$ (the shaded region) is a circle $S^1$. In particular,
the preimage of a point on the dotted lines does not belong to the boundary of
the three-manifold.
To illustrate this issue with another example, we also display the image under
$\pS$ of a solid cylinder $[-1,1]\Times D^2$ (for the disk $D^2$ being described
as in \eqref{proj_D2}):
  \eqpic{proj_D2cyl}{173}{33}{
  \put(0,37)  {$ \pS([-1,1]\Times D^2) ~= $}
  \put(110,-3){
  \put(0,0)     {\Includepic{19f}}
  \put(-4,68)   {\scriptsize$t$}
  \put(49,-4)   {\scriptsize$x$}
  } }

For studying the gluing cobordism the following three surfaces are of particular
interest:
\\[-2.66em]

\def\leftmargini{1.57em}~\\[-1.45em]\begin{itemize}\addtolength{\itemsep}{-7pt}
  \item[\nxt]
  $\wst \,{\equiv}\, \wst(\ws)$:\,
  \\ The cylindrical region of the world sheet $\ws$\,
  on which the factorization is performed;
  \item[\nxt]
  $\wstc$:\, the corresponding region after factorization;
  \item[\nxt]
  $\wstcc$:\, a corresponding region after a \emph{double cut} procedure.
\end{itemize}

\noindent
The projection \eqref{proj_S2cyl} acts on such surfaces as follows:
  \eqpic{tube_ws}{305}{16}{
  \put(40,-3)     {\Includepic{20a}}
  \put(0,16)   {$ \wst ~= $}
  \put(220,20)   {\Includepic{20b}}
  \put(174,16)   {$\stackrel{\pS}{\longmapsto}$}
  }
  \eqpic{tube_cut_ws}{360}{14}{
  \put(44,-4)     {\Includepic{20g}}
  \put(0,14)   {$ \wstc ~= $}
  \put(274,14)   {\Includepic{20h}}
  \put(234,14)   {$\stackrel{\pS}{\longmapsto}$}
  }
  \Eqpic{tube_ccut_ws}{372}{14}{
  \put(0,0)     {\Includepic{20c}}
  \put(-40,21)   {$ \wstcc ~= $}
  \put(280,20)   {\Includepic{20d}}
  \put(240,20)   {$\stackrel{\pS}{\longmapsto}$}
  }
For the cylinders over these regions the projection \eqref{pSvol} gives
  \eqpic{tube_ws}{320}{16}{
  \put(120,0)   {\Includepic{20es}
  \put(-5,34.5) {\scriptsize$t$}
  \put(101,-4)  {\scriptsize$x$}
  \put(154,6)   {\scriptsize$t=-1$}
  \put(155,41)  {\scriptsize$t=+1$}
  }
  \put(6,19)   {$ \pS([-1,1]\Times\wst) ~= $}
  }
  \eqpic{tube_cut_ws}{320}{16}{
  \put(0,-2){
  \put(120,0)   {\Includepic{22Aas}
  \put(-5,35.5) {\scriptsize$t$}
  \put(86,-4)   {\scriptsize$x$}
  \put(111,7.5) {\scriptsize$t=-1$}
  \put(111,42.5){\scriptsize$t=+1$}
  } }
  \put(6,20)   {$ \pS([-1,1]\Times \wstc) ~= $}
  }
  \eqpic{tube_ccut_ws}{320}{18}{
  \put(0,-2){
  \put(120,0)   {\Includepic{20fs}
  \put(-4,39)   {\scriptsize$t$}
  \put(134.5,-5){\scriptsize$x$}
  \put(166,9)   {\scriptsize$t=-1$}
  \put(166,44)  {\scriptsize$t=+1$}
  } }
  \put(6,17)   {$ \pS([-1,1]\Times\wstcc) ~= $}
  }

We are now in a position to address the construction of the gluing cobordism
in terms of the projection $\pS$. When doing so, we must in addition account
for the following two aspects.
First, the two-manifolds to be considered are \emph{extended surfaces},
meaning in particular \cite[Def.\,5.1.6]{BAki} that they come with a set of
marked points. Such marked points are labeled as $(U_k,\pm)$, with $U_k$ a
simple object of the representation category \cC\ of the chiral symmetry
algebra \cite{fuRs4}. In order not to overburden the pictures below, we
abbreviate such a label $(U_k,\pm)$ by the symbol $k_\pm$.

Second, the plane in which the picture is drawn can no longer be taken to be
just the $t$-$x$-plane; the projection $\pS$ must therefore be suitably
redefined. As the cylinder over $\hatt\wstcc$ is the disjoint union of two
three-manifolds, we can embed this region in $\R^4 \eq \{(t',x,y,z)\}$ by
making the replacement $t\,{\mapsto}\, t'{+}2$ on one component and
$t\,{\mapsto}\, {-}t'{-}2$ on the
other. The vertical axis in the picture then coincides with the
$t'$-axis. This will be implicitly understood in the sequel, and we will
refrain from indicating the ambient parameter $t'$ in our pictures.

We start from a cylinder over the double $\hatt\wstcc$ of $\wstcc$:
  \eqpic{tube_tube_ws}{350}{48}{
  \put(120,0)     {\Includepic{21Aas}
  \put(-4,6)     {\scriptsize$t$}
  \put(-4,97)     {\scriptsize$t$}
  \put(186,1)     {\scriptsize$t=+1$}
  \put(186,36)    {\scriptsize$t=-1$}
  \put(186,67)    {\scriptsize$t=-1$}
  \put(186,102)   {\scriptsize$t=+1$}
  \put(39,59)     {\scriptsize$ \ib_- $}
  \put(66,59)     {\scriptsize$ \ib_+ $}
  \put(111,59)    {\scriptsize$ i_+ $}
  \put(144,59)    {\scriptsize$ i_- $}
  \put(39,43)     {\scriptsize$ \jb_- $}
  \put(66,43)     {\scriptsize$ \jb_+ $}
  \put(111,43)    {\scriptsize$ j_+ $}
  \put(144,43)    {\scriptsize$ j_- $}
  \put(39,110)    {\scriptsize$ \ib_- $}
  \put(66,110)    {\scriptsize$ \ib_+ $}
  \put(111,110)   {\scriptsize$ i_+ $}
  \put(144,110)   {\scriptsize$ i_- $}
  \put(38,-5)     {\scriptsize$ \jb_- $}
  \put(65,-5)     {\scriptsize$ \jb_+ $}
  \put(110,-5)    {\scriptsize$ j_+ $}
  \put(143,-5)    {\scriptsize$ j_- $}
  }
  \put(7,48)   {$ \pS\big([-1,1]\Times\hatt\wstcc\big) ~= $}
  }
The gluing cobordism is obtained from the cobordism $[-1,1]\Times\hatt\wstcc$
as follows. First we quotient out a relation, to be denoted by the symbol
``$\sim$'', by which segments on the $t\eq{+}1$-components of the boundary are
pairwise identified, in the way indicated in the following picture by the
arrows that are attached to the eight segments in question:
  \eqpic{tube_tube_ws_id1}{400}{63}{ \put(110,20) { \setlength\unitlength{1.2pt}
  \put(0,-9)  {
  \put(0,-2)  {\INcludepic{21Abs}}
  \put(-3,6)      {\scriptsize$t$}
  \put(-3,98)     {\scriptsize$t$}
  \put(219,1)     {\scriptsize$t=+1$}
  \put(219,36)    {\scriptsize$t=-1$}
  \put(219,67)    {\scriptsize$t=-1$}
  \put(219,102)   {\scriptsize$t=+1$}
  \put(48,60)     {\scriptsize$ \ib_- $}
  \put(74,60)     {\scriptsize$ \ib_+ $}
  \put(137,60)    {\scriptsize$ i_+ $}
  \put(162,60)    {\scriptsize$ i_- $}
  \put(48,43)     {\scriptsize$ \jb_- $}
  \put(74,43)     {\scriptsize$ \jb_+ $}
  \put(137,43)    {\scriptsize$ j_+ $}
  \put(162,43)    {\scriptsize$ j_- $}
  \put(48,109)    {\scriptsize$ \ib_- $}
  \put(74,109)    {\scriptsize$ \ib_+ $}
  \put(137,109)   {\scriptsize$ i_+ $}
  \put(162,109)   {\scriptsize$ i_- $}
  \put(48,-5)     {\scriptsize$ \jb_- $}
  \put(74,-5)     {\scriptsize$ \jb_+ $}
  \put(137,-5)    {\scriptsize$ j_+ $}
  \put(162,-5)    {\scriptsize$ j_- $}
  } }
  \put(-25,71)   {$ \pS\big([-1,1]\Times\hatt\wstcc{\big/}{\sim}\,\big) ~= $}
  }
For the manifold described by the picture this means concretely that four pairs
of half-spheres are pairwise identified (in \cite{fjfrs} this is described by
the formula (2.15)). Carrying out this identification leads to
  \eqpic{tube_tube_ws_id2}{300}{45}{
  \put(120,0)     {\Includepic{21Acs}
  \put(142,-2)    {\scriptsize$t=+1$}
  \put(142,33)    {\scriptsize$t=-1$}
  \put(142,64)    {\scriptsize$t=-1$}
  \put(142,98)    {\scriptsize$t=+1$}
  \put(20,57)     {\scriptsize$ \ib_- $}
  \put(47,57)     {\scriptsize$ \ib_+ $}
  \put(82,57)     {\scriptsize$ i_+ $}
  \put(108,57)    {\scriptsize$ i_- $}
  \put(20,41)     {\scriptsize$ \jb_- $}
  \put(47,41)     {\scriptsize$ \jb_+ $}
  \put(82,41)     {\scriptsize$ j_+ $}
  \put(108,41)    {\scriptsize$ j_- $}
  }
  \put(-8,49)   {$ \pS\big([-1,1]\Times\hatt\wstcc{\big/}{\sim}\,\big) ~= $}
  }
Here the points on the vertical dashed-dotted lines come from the identification
of points on the boundary, but are now part of the interior of the manifold.

Note that in \eqref{tube_tube_ws_id2} the $(t,x)$-coordinates no longer correspond
to those of the original cylinder over $\hatt\wstcc$. Instead the picture shows
the result of first applying an obvious isotopy to the embedded three-manifold
in $\R^4$, and then projecting to $\R^2$. (Owing to this deformation, from now on
the $t$-direction is no longer everywhere vertical.) In the following we
allow for such isotopies of three-manifolds embedded in $\R^4$ before applying $\pS$,
which should cause no confusion. Thus \eqref{tube_tube_ws_id2} can be redrawn as
  \eqpic{tube_tube_ws_id3}{300}{46}{
  \put(120,0)  {\Includepic{21Ads}
  \put(142,-2)    {\scriptsize$t=+1$}
  \put(142,33)    {\scriptsize$t=-1$}
  \put(142,63)    {\scriptsize$t=-1$}
  \put(142,98)    {\scriptsize$t=+1$}
  \put(20,57)     {\scriptsize$ \ib_- $}
  \put(47,57)     {\scriptsize$ \ib_+ $}
  \put(82,57)     {\scriptsize$ i_+ $}
  \put(108,57)    {\scriptsize$ i_- $}
  \put(20,41)     {\scriptsize$ \jb_- $}
  \put(47,41)     {\scriptsize$ \jb_+ $}
  \put(82,41)     {\scriptsize$ j_+ $}
  \put(108,41)    {\scriptsize$ j_- $}
  }
  \put(-6,49)   {$ \pS\big([-1,1]\Times\hatt\wstcc{\big/}{\sim}\,\big) ~= $}
  }
The manifold above is actually
a cobordism with an inscribed ribbon graph, consisting of
rectangular ribbons running along the dotted lines; in the picture we have
suppressed these ribbons altogether. (In a three-dimensional description, the
cobordism \eqref{tube_tube_ws_id3} is shown, including the rectangular ribbons,
in picture (5.5) of \cite{fjfrs}.)

The underlying three-manifold of the gluing cobordism is now obtained by gluing
two three-balls to the manifold shown in \eqref{tube_tube_ws_id3}. The relevant
three-balls are cobordisms $\tpbm{\kb}k$ with two marked points on the boundary
and inscribed ribbon graph. A three-dimensional view of $\tpbm{\kb}k$ is
given in picture (2.34) of \cite{fjfrs}; in the present description we have
  \eqpic{B3proj}{140}{18}{
  \put(70,0)  {\Includepic{21es}
  \put(-4,40)     {\scriptsize$t$}
  \put(63,-4)     {\scriptsize$x$}
  \put(8,44.6)    {\scriptsize$ \kb_- $}
  \put(56,44.6)   {\scriptsize$ k_- $}
  }
  \put(0,23)   {$\pS(\tpbm{\kb}k) ~= $}
  }
Gluing $\tpbm{\ib}i\,{\sqcup}\,\tpbm{\jb}j$ to the $t\,{=}\,{-}1$-components
of the boundary of \eqref{tube_tube_ws_id3} we obtain the underlying manifold $\GH$
of the gluing cobordism
as
  \Eqpic{gluinghomo_MF}{430}{49}{ \put(0,7){
  \put(65,0)   {\Includepic{21Afs}
  \put(141,-2)     {\scriptsize$t=+1$}
  \put(141,33)    {\scriptsize$t=-1$}
  \put(141,65)    {\scriptsize$t=-1$}
  \put(141,100)   {\scriptsize$t=+1$}
  \put(20,57)     {\scriptsize$ \ib_- $}
  \put(106,57)    {\scriptsize$ i_- $}
  \put(20,41)     {\scriptsize$ \jb_- $}
  \put(106,41)    {\scriptsize$ j_- $}
  }
  \put(252,48)    {$=$}
  \put(284,0)     {\Includepic{21Ags}
  \put(141,-2)     {\scriptsize$t=+1$}
  \put(141,33)    {\scriptsize$t=-1$}
  \put(141,65)    {\scriptsize$t=-1$}
  \put(141,100)   {\scriptsize$t=+1$}
  \put(38,57)     {\scriptsize$ \ib_- $}
  \put(87,57)     {\scriptsize$ i_- $}
  \put(38,41)     {\scriptsize$ \jb_- $}
  \put(87,41)     {\scriptsize$ j_- $}
  }
  \put(-11,45)   {$\pS(\GH) ~= $}
  } }

Having obtained the gluing cobordism,
we can apply it to the connecting manifold of the surface $\wstc$.
The connecting manifold of $\wstc$ is $\M_{\wstc} \eq [-1,1 ]\Times \wstc$, so that
  \eqpic{con_mf_cut}{270}{16}{
  \put(173,0)  {\Includepic{22Bas}
  \put(33,44)     {\scriptsize$ \ib_+ $}
  \put(60,44)     {\scriptsize$ i_+ $}
  \put(32,-6)     {\scriptsize$ \jb_+ $}
  \put(59,-6)     {\scriptsize$ j_+ $}
  }
  \put(0,18)   {$ \pS(\M_{\wstc}) ~=~ \pS([-1,1]\Times\wstc) ~= $}
  }
Here again we have suppressed the pieces of ribbon graph that are running along
the dotted lines (in a three-dimensional view they are shown in picture
(5.8) of \cite{fjfrs}).

Applying the gluing cobordism amounts to identifying
the $t\eq{-}1$\,-component of the boundary of \eqref{gluinghomo_MF} with the
boundary of the three-manifold \eqref{con_mf_cut}. The result is
  \eqpic{gluingtocut}{255}{44}{
  \put(113,0)  {\Includepic{22bs}
  \put(72.5,31)   {\small$\mathcal S$}
  \put(121,-2)    {\scriptsize$t=-1$}
  \put(121,33)    {\scriptsize$t=+1$}
  \put(121,58)    {\scriptsize$t=-1$}
  \put(121,93)    {\scriptsize$t=+1$}
  }
  \put(1,46)   {$ \pS(\GH\cir\M_{\wstc}) ~= $}
  }
Again there is a piece of ribbon graph running along the dotted circle
$\mathcal S$ (as displayed in picture (5.9) of \cite{fjfrs}), and again we
refrain from drawing it explicitly. Note that the fiber over any point of
the circle $\mathcal S$ is just a point.

The three-manifold $\GH\cir\M_{\wstc}$ coincides up to a suitable surgery (and
up to an embedded ribbon graph) with $\M_{\wst}$ as displayed in \eqref{tube_ws}.
To see this we perform the following surgery on $\GH\cir\M_{\wstc}$.
We first cut out a tubular neighborhood of the circle $\pS^{-1}(\mathcal S)$.
(Recalling the presentation \eqref{proj_D2cyl} of solid cylinders,
it is easy to describe the resulting solid torus $\mathcal T$ explicitly in terms
of the coordinates of the ambient $\R^4$, as the union of solid cylinders in
$[-1,1]\Times \wstc$ and in $\GH$; we refrain from giving any details.)
Furthermore we can apply a suitable isotopy to $\GH\cir\M_{\wstc}$ in such a way
that the boundary of the solid torus $\mathcal T$ projects under $\pS$ to a
circle in the $x$-$t$-plane. (Here we use that we have identified the coordinate
$t$ with the ambient parameter, denoted $t'$ above.)
The result of cutting out $\mathcal T$ from $\GH\cir\M_{\wstc}$ is then
  \Eqpic{gluingtocut_surg}{420}{40}{ \put(53,-5){
  \put(160,0)  {\Includepic{22cs}
  }
  \put(301,42)  {$\sqcup$}
  \put(331,16)  {\Includepic{22ds}
   \put(42.8,21)   {\small$\mathcal S$}
  } }
  \put(-19,70)   {$ \pS\big((\GH\cir\M_{\wstc} {\setminus}\, \mathcal T ) \sqcup
                   \mathcal T \big) $}
  \put(30,42)    {$ =~ \pS(\GH\cir\M_{\wstc} {\setminus} \mathcal T)
                   \,\sqcup\, \pS(\mathcal T ) ~= $}
  }
Here the dotted circle is running along the non-contractible cycle of the solid torus.
Next we apply a modular $S$-transformation to the boundary $\partial \mathcal T$
of the cut-out solid torus (in terms of the ambient $\R^4$ this is afforded by
a homeomorphism homotopic to $(x,y,z,t) \,{\mapsto}\, (z,x,t,y)$).
After applying the projection $\pS$ this amounts to
  \eqpic{S_trans_proj}{340}{24}{
  \put(0,0)     {\Includepic{22ds}
  }
  \put(100,30)    {$\stackrel{\mathrm S}\longmapsto$}
  \put(153,0)     {\Includepic{22es}}
  \put(199,30)    {\footnotesize$P$}
  \put(245,30)    {$ \equiv $}
  \put(280,0)     {\Includepic{22fs}}
  }
Here the marked point $P$ in the middle picture is the image of the dotted
circle $\mathcal S$ under the S-transformation. As indicated by the
redrawing in the right-most picture, this is by no means a
distinguished point of the three-manifold. But when including the ribbon graph,
$P$ is still distinguished by the fact that a crucial piece of the
ribbon graph (an annular $A$-ribbon) is running along the circle $\pS^{-1}(P)$.
To complete the surgery, we glue back the S-transformed solid torus to
$\GH\cir\M_{\wstc} {\setminus}\, \mathcal T$. The result is
  \eqpic{gluingtocut_gluedback}{90}{39}{
  \put(0,0)     {\Includepic{22gs}
  \put(67,46)    {\footnotesize$P$}
  } }
This manifold indeed coincides, up to a ribbon graph, with $\M_{\wst}$.

Details of the surgery relation, including ribbon graphs, are given in
proposition \ref{surg_prop}.
The annular $A$-ribbon along $\pS^{-1}(P)$ in $M_{\wst}$ is shown in
the picture \eqref{deftor} below (compare also picture (5.3) in \cite{fjfrs}).


\subsection{The factorization identity}\label{FactorizationIdentity}

Performing the steps described in the previous section, we arrive at the following result.
We consider the situation that an oriented world sheet $\ws$ is factorized
as in \eqref{dis_sphere} into a world sheet $\wsf pq\alpha\beta$, by cutting
along a circle that is crossed by the topological defect line $X$.

\begin{theorem}\label{thm:bulkfac}
The correlator $C(\ws)$ for an oriented world sheet $\ws$ can be expressed in
terms of the correlators $C(\wsf pq\alpha\beta)$ of the factorized world sheets as
  \be\label{fact_rel}
  C(\ws) = \sum_{p,q\in\I}\;\sum_{\alpha,\beta}\, \dim(U_p)\,\dim(U_q)\;
  \cdefinv {X}\pb\qb\alpha\beta\; \GLL pq(C(\wsf pq\alpha\beta)) \,.
  \ee
\end{theorem}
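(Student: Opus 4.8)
The plan is to follow the strategy outlined in Section \ref{sec:gluingcob} and reduce the factorization identity to a purely topological \emph{surgery relation} between the invariants of the two solid tori $\TORS$ and $\TOR$. Concretely, recall from \eqref{MST-TORS} and \eqref{MST-TOR} that both connecting manifolds can be written as $\M_\ws \eq \MST\cir\TORS$ and $\MGL\cir\M_{\ws'} \eq \MST\cir\TOR$, where the common cobordism $\MST\colon\torus\To\widehat\ws$ is identical in the two cases and the entire discrepancy, ribbon graphs included, is localised inside the solid torus that is glued onto $\torus$. Since the TFT functor is monoidal and sends composition of cobordisms to composition of linear maps, it suffices to compare the vectors in $\cH(\torus)$ (with the two marked points labelled by $X$) that are obtained as the invariants of $\TORS$ on the one hand and of $\TOR \eq \TOT$, summed appropriately over $p,q,\alpha,\beta$, on the other. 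This comparison is exactly the content of Proposition \ref{surg_prop}, whose generalisation of formula (5.12) of \cite{fjfrs} I would invoke here.

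The key steps, in order, are as follows. First, express $C(\ws)$ as the invariant $Z(\M_\ws)$ of the connecting manifold and rewrite $\M_\ws$ in the factorised form $\MST\cir\TORS$ using the explicit surgery description given in Section \ref{sec:gluingcob} (the chain of pictures \eqref{tube_tube_ws}--\eqref{gluingtocut_gluedback}). Second, insert the surgery relation of Proposition \ref{surg_prop}, which re-expands the invariant of the solid torus $\TORS$ as a linear combination $\sum_{p,q,\alpha,\beta} \dim(U_p)\dim(U_q)\,\cdefinv{X}\pb\qb\alpha\beta$ of the invariants of the solid tori $\TOT$; the coefficients $\dim(U_p)\dim(U_q)$ are the usual S-matrix/quantum-dimension factors produced by an S-transformation of the boundary torus together with the surgery (Dehn-filling) move, while the inverse defect two-point constants $\cdefinv{X}\pb\qb\alpha\beta$ arise from normalising the $A$-ribbon (the annular $A$-ribbon running along $\pS^{-1}(P)$ in \eqref{gluingtocut_gluedback}, shown in \eqref{deftor}) in the way dictated by the disorder fields $\Thet Xpq\alpha$, $\Thet X\pb\qb\beta$ sitting at the two poles of $\wsSD Xpq\alpha\beta$. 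Third, recompose: since for each $(p,q)$ the cobordism $\MST\cir\TOT$ equals $\MGL{}_{,pq}\cir\M_{\wsf pq\alpha\beta}$, applying $Z$ and using functoriality turns $Z(\MST\cir\TOT)$ into $\GLL pq\big(C(\wsf pq\alpha\beta)\big)$. Collecting the three steps yields precisely \eqref{fact_rel}.

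The main obstacle is Step 2 — establishing the surgery relation of Proposition \ref{surg_prop} in the presence of the non-trivial defect $X$, i.e.\ correctly bookkeeping the ribbon graph. In the defect-free case of \cite{fjfrs} the $A$-ribbon carries the Frobenius algebra $A$ and the relevant normalisation constant is the inverse of the bulk two-point constant on the sphere; here the ribbon running along $\pS^{-1}(P)$ is instead an $X$-decorated ribbon sitting between two $A$-ribbons (reflecting that $X$ is an $A$-$A$-bimodule), and the contraction of the graph in the glued-back solid torus produces the bimodule morphism whose coefficients in a chosen basis of $\Homaa{U_p\otimes X\otimes U_q}{X}$ (or its dual) are the defect two-point constants $\cdef{X}{p}{q}{\alpha}{\beta}$. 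Showing that the surgery move inverts precisely this pairing — so that $\cdefinv{X}\pb\qb\alpha\beta$ appears with the correct index placement and the correct bars on $p,q$ — requires carefully tracking orientations and the ribbon moves (the S-transformation, the Dehn twist introduced by regluing, and the resulting Hopf-link-type evaluation), exactly as in the proof of the corresponding Proposition; everything else in the argument is formal functoriality plus the already-established gluing-isomorphism statements \eqref{defGLL}--\eqref{fact_rel_ansatz}.
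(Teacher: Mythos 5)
Your proposal is correct and follows essentially the same route as the paper: the paper's proof of Theorem \ref{thm:bulkfac} is exactly the combination of the expressions \eqref{C-tor} and \eqref{Cfact-tor} with the surgery expansion \eqref{surg_DF} of Proposition \ref{surg_prop}, which is precisely your three steps. Your side remarks on where the coefficients originate are heuristic (the paper actually obtains them by pairing with the dual basis of $\im(\proj)$ constructed in Appendix \ref{app:B*} and evaluating via Lemma \ref{coeff_calc}), but since you invoke Proposition \ref{surg_prop} as the key input rather than reproving it, this does not affect the validity of the argument.
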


Establishing the formula \eqref{fact_rel} will occupy the next subsection. Before
entering the proof, let us explain those parts of the notation appearing in this statement
that we have not yet used, as well as give some further background information (for more
details see e.g.\ Section 3 of \cite{fuRs10} and Appendices A, B and C of \cite{fjfrs}):
The (strictified) representation category \cC\ of the chiral symmetry algebra
is a modular tensor category. As such it has, up to isomorphism, a finite number
of simple objects, which we denote by $U_p$ with $p$ taking values in a finite
index set $\I$; $U_0\eq\tunit$ is the tensor unit of \cC. On the index set $\I$ there
is an involution $p\,{\mapsto}\,\bar p$ such that the simple object $U_{\bar p}$
is isomorphic to the dual $U_p^{\,\vee}$ of $U_p$. The number $\dim(U_p)$ is the
(quantum) dimension of the object $U_p$.

The labels $\alpha$ and $\beta$ in \eqref{fact_rel} are elements of a basis of
the space of disorder fields with chiral labels $p,q$ and $\bar p,\bar q$,
respectively, or in more mathematical terms, of the morphism space
$\Homaa {U_p\,\oti^+A\,\oti^-U_q}{X}$ and $\Homaa {U_\pb\,\oti^+X\oti^-U_\qb}A$,
respectively. Here the object $A$ of \cC\ is the simple symmetric special
\emph{Frobenius algebra} that together with the chiral data characterizes
\cite{fuRs4} the full CFT in the regions of \ws\ adjacent to $X$. (A Frobenius
algebra $A$ in \cC\ is an object $A$
together with a multiplication morphism $m$, unit $\eta$, comultiplication
$\Delta$ and counit $\eps$ such that $(A,m,\eta)$ is a unital associative algebra,
$(A,\Delta,\eps)$ is a counital coassociative coalgebra, and $\Delta$ is a
morphism of $A$-bimodules. For more details, and for the meaning of the
qualifications symmetric and special, see e.g.\ \cite[Sect.\,3.3]{fuRs4}.)

The label $X$ of the defect line is an object of \cC\ that carries the structure
of an $A$-bimodule. By $U_i\,\oti^+X\oti^-U_j$ we refer to an
$A$-bimodule whose underlying object in \cC\ is $U_i\,\oti\, X\oti\, U_j$ and
whose bimodule structure is defined by combining the bimodule structure of $X$
with the braiding of \cC\ (via braided induction, as summarized e.g.\ in
\cite[Sect.\,2.2]{fuRs12}). When the two regions adjacent to a defect line
are labeled by two different Frobenius algebras $A$ and $B$, one deals with
$A$-$B$-bimodules. For any two $A$-$B$-bimodules $X$ and $Y$, $\Homab XY$
is the space of $A$-$B$-bimodule morphisms from $X$ to $Y$, i.e.\ the subspace
of the morphism space $\Hom(X,Y)$ of \cC\ consisting of those morphisms that
commute with the left (right) action of the algebra $A$ ($B$).

Finally, $c^\text{def}_{X,pq}$ is the matrix whose entries are the structure
constants -- that is, the coefficients in a standard basis of conformal blocks --
of the \emph{defect two-point function}. By the latter we mean the correlator
$C(\wsSD Xpq\alpha\beta)$ for the two-sphere with two insertions of disorder
fields $\Theta$, labeled by $\alpha$ and $\beta$, and with a defect line $X$
connecting them. The world sheet for this correlator looks as follows:
  \eqpic{WS_twofieldsdefect}{80}{43}{ \setlength\unitlength{1.2pt}
  \put(0,4)     {\INcludepic{41a}
  \put(30,-9)    {\footnotesize$\Thet X\pb\qb\beta$}
  \put(30,78)    {\footnotesize$\Thet Xpq\alpha$}
  \put(30,56)    {\footnotesize$X$}
  } }
For details about the matrix entries $\cdef {X}pq\alpha\beta$ we refer to Appendix
\ref{app:cdef}.

\begin{remark}
In \cite{fjfrs} a different convention for the gluing cobordism was used,
based on a different choice of basis morphisms in the spaces
$\Hom(\one,U_\ia \oti U_\ib)$. The relation between the relevant basis
morphisms will be explained in \eqref{lambdachoice} below.
With the choice made in \cite{fjfrs}, \eqref{fact_rel} gets replaced by
  \be\label{fact_FFRS}
  C(\ws)^{\scriptscriptstyle\rm FFRS}
  = \sum_{p,q\in\I}\sum_{\alpha,\beta} \theta_p^{}\,\theta_q^{-1}\,
  \RR \pb p 0\nl\nl\, \Rm \qb q 0 \nl\nl\,\dim(U_p)\,\dim(U_q)\;
  \cdefinv {X}\pb\qb\alpha\beta\; \GLL pq(C(\wsf pq\alpha\beta)) \,.
  \ee
Here $\theta_p \eq \exp(-2\pi \mathrm i\Delta_p)$, with
$\Delta_p$ the conformal weight of the primary field associated to $U_p$,
is the eigenvalue of the twist automorphism of $U_p$, while  $\RR \pb p 0 \nl\nl$
and $\Rm \pb p 0 \nl\nl$ are the braiding matrices in the chosen bases of
$\Hom(U_\pb\,\oti\, U_p,\one)$ and $\Hom(U_p\,\oti\, U_\pb,\one)$ (see
\cite[Sect.\,2.1\,\&\,2.2]{fjfrs} for further details).
\end{remark}


\subsection{Proof of bulk factorization}\label{sec:factorizationproof}

In this subsection we prove the bulk factorization identity \eqref{fact_rel}.
In the proof we will freely use the
graphical calculus for ribbon categories, analogously as has been done in \cite{fjfrs}.
We will also need the following constructions with bimodules. First,
to any $A$-$B$-bimodule $X$, with $A$ and $B$ algebras in \cC, there is associated
a \emph{dual} or conjugate bimodule $X\ve$, which
allows us to describe the orientation reversal of defect lines. $X\ve$
is a $B$-$A$-bimodule, such that
${(X\ve)}\ve \,{\cong}\, X$; the actions of $A$ and $B$ on $X\ve$ are obtained from
those on $X$ with the help of the duality morphisms of \cC\ (for details see
formula (2.37) of \cite{fuRs8}).
Second, to any pair consisting of an $A$-$B$-bimodule $X$ and a $B$-$C$-bimodule $Y$,
there is associated their \emph{tensor product} $X\,{\otimes_B}\,Y$ over $B$,
which is an $A$-$C$-bimodule.
In terms of defect lines, this means that when two defect lines labeled by $X$ and
$Y$ are fused, their fusion product is labeled by $X\,{\otimes_B}\,Y$.
Analogously there are tensor products with any number of factors,
$X_1\,\oti_{\!A_1}\cdots\,\oti_{\!A_{m-1}}\, X_m$,
where for any $i\iN\{1,2,...\,,m\}$, $X_i$ is an $A_{i-1}$-$A_i$-bimodule.
Some details about this notion of tensor product are collected in Appendix
\ref{app:fusion}.
Via the multiplication morphism, any algebra is naturally a bimodule
over itself. We are only interested in algebras that are simple, i.e.\ simple
as bimodules. For any $A$-$B$-bimodule $X$ with simple algebras $A$ and $B$
there are natural bimodule isomorphisms $A\,{\otimes_A}\,X \,{\cong}\, X {\cong}\,
X\,{\otimes_B}\,B$; for our purposes we can take these isomorphisms to be equalities.

We will also need to express morphism spaces involving tensor products over
algebras as subspaces of morphisms that involve ordinary tensor products in $\cC$.
Let $A_i$ and $B_j$ be symmetric
special Frobenius algebras, with $i\eq 0,1,2,...\,,m$ and $j\eq 0,1,2,...\,,n$,
respectively. Consider any collection of $A_{i-1}$-$A_i$-bimodules
$X_i$ ($i\eq 1,2,...\,,m$) and $B_{j-1}$-$B_j$-bimodules $Y_j$ ($j\eq 1,2,...\,,n$)
such that $B_0 \eq A_0$ and $B_n \eq A_m$. We define the subspace
  \be
  \HomP_{A_0|A_m}(X_1\,\oti\cdots\oti\, X_m,Y_1\,\oti\cdots\oti\, Y_n) \subseteq
  \Hom_{A_0|A_m}(X_1\,\oti\cdots\,\oti X_m,Y_1\,\oti\cdots\oti\, Y_n)
  \ee
to be the space of morphisms $f\iN\Hom_{A_0|A_m}(X_1\,\oti\cdots
          $\linebreak[0]$
\oti\, X_m,Y_1\,\oti\cdots\oti\, Y_n)$ that satisfy
  \be
  f\circ\PA{X_1}{X_m} = f = \PA{Y_1}{Y_n}\circ f\,,
  \ee
where for $m\,{\ge}\,2$ the morphism $\PA{X_1}{X_m}$ is the idempotent
\eqref{bnd_idem} whose image equals (as explained Appendix \ref{app:fusion})
the bimodule tensor product, while $P_{X_1} \,{=}\, \id_{X_1}$. Associated
with the idempotent $\PA{X_1}{X_m}$ there are embedding and restriction
morphisms $e$ and $r$ which satisfy $e \cir r \eq \PA{X_1}{X_m}$ and
$r \cir e \eq \id_{X_1\otimes_{A_1}\cdots\otimes_{A_{m-1}}X_m}$. These
provide an isomorphism
  \be
  \bearl
  \HomP_{A_0|A_m}
  (X_1\,\oti\,X_2\,\oti\cdots\oti\, X_m , Y_1 \,\oti\, Y_2 \,\oti \cdots \oti\, Y_n)
  \\[-.6em]\\~ \hspace*{3.5em}
  \,\cong\,
  \Hom_{A_0|A_m}
  (X_1\,{\otimes_{A_1}}\,X_2\,{\otimes_{A_2}}\, \cdots \,{\otimes_{A_{m-1}}}\,X_n ,
  Y_1\,{\otimes_{B_1}}Y_2\,{\otimes_{B_2}}\, \cdots \,{\otimes_{B_{n-1}}}Y_n)
  \,.  \eear
  \ee
for any $m,n \iN \mathbb N$.

To enter the proof of Theorem \ref{thm:bulkfac}, we introduce
bases of the relevant morphism spaces:
For any $p,q\iN \I$ and any two $A$-bimodules $X$ and $Y$, choose a basis
  \be\label{basis_DF}
  \{\, \phi^\alpha_{pq} \,|\, p,q\iN\I,\,\alpha\iN\Homaa{U_p\,\oti^+\!X\oti^-U_q}Y \,\}
  \ee
of bimodule morphisms. There then exists a basis
  \be\label{basis_DFd}
  \{\, \bar\phi^\alpha_{pq} \,|\, p,q\iN\I,\,\alpha\iN\Homaa Y{U_p\,\oti^+\!X\oti^-U_q} \,\}
  \ee
that is dual to the basis \eqref{basis_DF} in the sense that
  \be\label{DF_dual}
  \mathrm{Tr}\big(\phi^\alpha_{pq}\cir\bar\phi^\beta_{pq}\big)
  = \delta_{\alpha,\beta}\;\dim(Y)\,.
  \ee
Such dual bases exist due to the presence of a non-degenerate pairing of the spaces
$\Homaa{U_p
   $\linebreak[0]$
\,\oti^+X\oti^-U_q}Y$ and $\Homaa Y{U_p\,\oti^+X\oti^-U_q}$. The existence of such
a pairing, in turn, follows by arguments analogous to those in the proof of
Lemma C.3 of \cite{fjfrs}. We will use the abbreviations $\alpha \,{\equiv}\,
\phi^\alpha_{pq}$ and $\bar\alpha \,{\equiv}\, \bar\phi^\alpha_{pq}$ whenever
the suppressed labels $p,q$ can be directly inferred from the context.

As we pointed out at the end of section \ref{BF_outline}, the connecting manifold
$\M_\ws$ of the original world sheet $\ws$ and the underlying cobordism of
$\GLL pq( C(\wsf pq\alpha\beta))$, the gluing homomorphism applied to the
correlator of the factorized world sheet, differ only in the cobordisms $\TORS$ and
$\TOR \eq \TOT$ (see their descriptions in \eqref{MST-TORS} and \eqref{MST-TOR}).
For the correlators this means that
  \be\label{C-tor}
  C(\ws) = Z(\MST) \circ Z(\TORS)
  \ee
and
  \be\label{Cfact-tor}
  \GLL pq(C(\wsf pq\alpha\beta)) = Z(\MGL) \circ Z(\M_{\ws'})
  = Z(\MST) \circ Z(\TOT)\,.
  \ee
Comparison with \eqref{fact_rel} thus shows that the bulk factorization
identity amounts to a relation between the invariants
$Z(\TORS)$ and $Z(\TOT)$. The cobordism $\TORS$ is given by
  \eqpic{deftor}{240}{103}{
  \put(17,109) {$\TORS ~=$}
  \put(75,0){ \Includepic{2}
  \put(18,197)    {\tiny $1$}
  \put(4,192)     {\tiny $2$}
  \put(11,150)    {\begin{turn}{-27}\scriptsize $(X,+)$\end{turn}}
  \put(10,64)     {\begin{turn}{-27}\scriptsize $(X,-)$\end{turn}}
  \put(58,98)     {\scriptsize $A$}
  \put(101,90)    {\scriptsize $X$}
  } }
while $\TOT$ looks as
  \eqpic{instor}{250}{102}{
  \put(0,110) {$\TOT ~=$}
  \put(80,2){ \Includepic{1}
  \put(19,199)    {\tiny $1$}
  \put(5,194)     {\tiny $2$}
  \put(140,210)   {\scriptsize $q$}
  \put(131,186)   {\scriptsize $\qb$}
  \put(120,80)    {\scriptsize $q$}
  \put(109.7,128) {\scriptsize $\pb$}
  \put(119,111)   {\scriptsize $p$}
  \put(91.2,91.6) {\small $\alpha$}
  \put(130.6,144) {\small $\beta$}
  \put(11,141)    {\begin{turn}{-27}\scriptsize $(X,+)$\end{turn}}
  \put(10,90)     {\begin{turn}{-27}\scriptsize $(X,-)$\end{turn}}
  \put(147,190)   {\tiny \begin{turn}{90}$\bar{\lambda}^{\qb q}$\end{turn}}
  \put(143,115)   {\tiny \begin{turn}{90}$\bar{\lambda}^{\pb p}$\end{turn}}
  } }
Here $\alpha$ and $\beta$ label a basis of $\Homaa{U_p\,\oti^+A\,\oti^-\,U_q}X$
and of $\Homaa{U_\pb\,\oti^+X\oti^-\,U_\qb}A$, respectively. The pictures
\eqref{deftor} and \eqref{instor} are drawn in the \emph{wedge presentation},
which means that the ``white'' faces of a wedge are to be identified in such
a manner that the shaded rectangle becomes a torus, with the tip of the
wedge -- that is, the horizontal dotted line in \eqref{deftor}, and the vertical
dotted line in \eqref{instor}, respectively -- describing a non-contractible
circle (for more details see section 5.1 of \cite{fjfrs}).

The definition of the gluing homomorphism in \cite{fjfrs} rests on the choice of
a basis $\bar\lambda^{k\kb}$ of the one-dimensional space $\Hom(\one,U_k\oti U_\kb)$
for each $k\iN\I$. Here we work with a slightly different choice of basis
than in \cite{fjfrs}; instead of $\bar\lambda^{k\kb}$ we use the morphism
  \eqpic{lambdachoice}{240}{19}{ \put(0,-9){
  \put(15,0){ \Includepic{44a}
  \put(21,66)        {\scriptsize $k $}
  \put(39,66)        {\scriptsize $\kb $}
  \put(14,17)        {\scriptsize $\bar\lambda^{\kb k}$}
  }
  \put(80,34) {$=~ \theta_k\;\RR {\kb} k0\nl\nl \,\bar\lambda^{k\kb}
               ~=~ \theta_k^{-1}\,\Rm {\kb} k0\nl\nl \,\bar\lambda^{k\kb} $}
  } }
in the space $\Hom(\one,U_k\oti U_\kb)$. As a consequence the gluing homomorphism
used by us differs from the one in \cite{fjfrs} by a factor
$\theta_p\,\RR {\pb} p0\nl\nl \,\theta_q^{-1}\,\Rm {\qb} q0\nl\nl$. This is
the origin of the additional factors in the expression \eqref{fact_rel} for
$C(\ws)^{\scriptscriptstyle\rm FFRS}$ as compared to $C(\ws)$ \eqref{fact_FFRS}.

The cobordisms \eqref{deftor} and \eqref{instor} are obtained in a manner
completely analogous to the derivation of the corresponding results (5.3) and (5.9)
of \cite{fjfrs}. Thus we refrain from giving any details of this derivation. The
precise form of the relation between the invariants of $\TORS$ and $\TOT$ is given
in the following statement:

\begin{proposition}\label{surg_prop}
The invariant $Z(\TORS )$ can be expanded as
  \be\label{surg_DF}
  Z(\TORS)=\sum_{p,q\in\I}\,\sum_{\alpha\in H^{A,X}_{p,q}}
  \sum_{\beta\in H^{X,A}_{\pb,\qb}}\, C_{pq,\alpha\beta}\; Z(\TOT) \,,
  \ee
where $H^{Y,Y'}_{r,s} \,{:=}\, \Homaa {U_r\oti^+Y\oti^-U_s}{Y'}$
and the coefficients $C_{pq,\alpha\beta}$ are given by
  \be\label{coeff_value}
  C_{pq,\alpha\beta} =
  \dim(U_p)\;\dim(U_q)\;\cdefinv {X}\pb\qb\alpha\beta \,.
  \ee
\end{proposition}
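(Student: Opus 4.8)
The plan is to prove the identity \eqref{surg_DF} entirely inside the space $\cH(\torus)$ of conformal blocks on the torus with its two marked points $(X,\pm)$; once it is available, applying $Z(\MST)$ and using \eqref{C-tor} and \eqref{Cfact-tor} immediately yields the factorization identity \eqref{fact_rel} of Theorem \ref{thm:bulkfac}. First I would show that the vectors $\{Z(\TOT)\}$, with $p,q\iN\I$, $\alpha\iN\Homaa{U_p\,\oti^+A\,\oti^-U_q}{X}$ and $\beta\iN\Homaa{U_\pb\,\oti^+X\oti^-U_\qb}A$, form a basis of $\cH(\torus)$. This goes exactly as in the defect-free treatment of \cite{fjfrs}: one resolves the annular $A$-ribbon of \eqref{deftor} into simple subobjects by dominance, and the resulting dimension count is the same as in \cite{fjfrs} because, by the TFT construction, $\cH(\torus)$ does not depend on whether the defect $X$ is trivial. (The non-degenerate pairing of $\Homaa{U_p\,\oti^+X\oti^-U_q}Y$ with $\Homaa Y{U_p\,\oti^+X\oti^-U_q}$ that underlies the dual bases \eqref{basis_DF}, \eqref{basis_DFd} and the normalization \eqref{DF_dual} is obtained just as in \cite[Lemma\,C.3]{fjfrs}.) Hence $Z(\TORS)=\sum_{p,q,\alpha,\beta} C_{pq,\alpha\beta}\,Z(\TOT)$ with unique scalars $C_{pq,\alpha\beta}$, and the whole content of the proposition is their evaluation \eqref{coeff_value}.

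To compute $C_{pq,\alpha\beta}$ I would carry out, on the wedge presentation \eqref{deftor} of $\TORS$, the graphical-calculus manipulation underlying formula (5.12) of \cite{fjfrs}, now keeping the extra $X$-ribbon along throughout. Geometrically this manipulation is the modular $S$-transformation displayed in \eqref{S_trans_proj}; by the surgery axiom of the three-dimensional TFT it replaces the annular $A$-ribbon along the core by a sum over the simple objects $U_k$ of \cC, each weighted by $\dim(U_k)$ and linked with the remaining graph. One then brings the graph into the form \eqref{instor}: using the Frobenius structure of $A$ and the bimodule axioms for $X$ one slides the $A$-ribbon and reorganizes the vertices at which $A$ acts on $X$ into the disorder-field vertices $\alpha$, $\beta$ and the $\bar\lambda$-morphisms of \eqref{lambdachoice}, and one unknots the two now-isolated simple loops colored by $U_p$ and $U_q$, which is where the prefactor $\dim(U_p)\,\dim(U_q)$ appears. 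The residual scalar is then identified, via the duality \eqref{DF_dual} of the bases \eqref{basis_DF}, \eqref{basis_DFd} together with the definition of the defect two-point function $C(\wsSD Xpq\alpha\beta)$ recalled in Appendix \ref{app:cdef}, with the matrix element $\cdefinv {X}\pb\qb\alpha\beta$ of the inverse of $c^{\mathrm{def}}_{X,pq}$, which establishes \eqref{coeff_value}. Equivalently, one may extract $C_{pq,\alpha\beta}$ by gluing onto both \eqref{deftor} and \eqref{instor} a common dual solid-torus cobordism built from the dual bases \eqref{basis_DFd} and suitable $\lambda$-morphisms: the two pairings become ribbon invariants in $S^3$, the one on the $\TOT$ side reproducing $c^{\mathrm{def}}_{X,pq}$ by construction and the one on the $\TORS$ side evaluating to $\dim(U_p)\,\dim(U_q)$ times a Kronecker delta, so that inverting the linear system gives \eqref{coeff_value}.

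I expect the main obstacle to be bookkeeping in the graphical calculus rather than any conceptual point. One must keep precise track of the framings of all ribbons and of the twist and braiding factors that enter, in particular those produced by our choice \eqref{lambdachoice} of $\bar\lambda^{\kb k}$ in place of the basis $\bar\lambda^{k\kb}$ used in \cite{fjfrs} -- these factors are exactly what distinguishes \eqref{fact_rel} from \eqref{fact_FFRS}, so they must come out with the correct sign and not be absorbed into $c^{\mathrm{def}}_{X,pq}$. One also has to match the normalization fixed for $c^{\mathrm{def}}_{X,pq}$ in Appendix \ref{app:cdef}, so that no stray factors of the global dimension survive in \eqref{coeff_value}. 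The only genuinely new ingredient compared with \cite{fjfrs} is the argument of the first paragraph that the $\{Z(\TOT)\}$ form a basis: one has to verify that replacing the Frobenius algebra $A$, viewed as a bimodule over itself, by an arbitrary $A$-bimodule $X$ leaves the relevant dimension count unchanged, which is where the TFT-construction fact that the conformal-block space is insensitive to the defect label does the work.
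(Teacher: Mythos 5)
Your overall shape --- expand $Z(\TORS)$ in the vectors $Z(\TOT)$ and then extract the coefficients by pairing with a dual family built from the dual bases and $\lambda$-morphisms --- matches the paper's, and your ``equivalently'' route for the coefficients is essentially what the paper does via the dual basis of Appendix \ref{app:B*} together with Lemma \ref{coeff_calc}. However, there is a genuine gap in your first paragraph: the vectors $\{Z(\TOT)\}$ are \emph{not} a basis of the full conformal block space $\cH(\torus)$ of the torus with the two marked points $(X,\pm)$. That space is, by the TFT axioms, built from morphism spaces of $\cC$, whereas the family $\{Z(\TOT)\}$ is indexed by the \emph{bimodule} morphism spaces $\Homaa{U_p\oti^+A\oti^-U_q}{X}$ and $\Homaa{U_\pb\oti^+X\oti^-U_\qb}{A}$, which are in general proper subspaces; the count of these vectors manifestly depends on $X$, so the remark that the conformal block space is insensitive to the defect label (which concerns $\cH(\Hatwsf pq)$) cannot rescue the dimension count. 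What is actually true, and what the paper proves, is that $\{Z(\TOT)\}$ is a basis of $\im(\proj)$ for the specific projector $\proj \eq Z(\projMF)$ of \eqref{CCinv}, generalizing Lemma 5.4(ii) of \cite{fjfrs}. Consequently the existence of the expansion \eqref{surg_DF} is not automatic: one must in addition show that $Z(\TORS)$ lies in $\im(\proj)$, a separate statement proved in analogy with Lemma 4.5(i) of \cite{ffrs5} by absorbing the projector into the annular $A$-ribbon of \eqref{deftor}. Your proposal omits both the projector and this membership step, and without them the expansion is unjustified.

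Your first proposed route for the coefficients --- performing the modular $S$-transformation directly on \eqref{deftor} via the surgery axiom, replacing the annular $A$-ribbon by a $\dim(U_k)$-weighted sum over simple objects --- is a legitimate alternative and would, if carried out in full, establish the expansion and the coefficients simultaneously without any basis argument. But as written it is only a sketch, and the delicate part (reorganizing the $A$-action on $X$ into the vertices $\alpha,\beta$ and identifying the residual scalar with $\cdefinv {X}\pb\qb\alpha\beta$ in the normalization fixed by Appendix \ref{app:cdef}) is exactly where the paper instead leans on the explicit dual cobordisms and on Lemma \ref{coeff_calc}. Either repair --- supplying the projector argument to justify \eqref{surg_DF}, or completing the direct surgery computation --- would close the gap.
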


\proof
Consider the space $\im(\proj)$, where $\proj \eq Z(\projMF)$ is the projector obtained
as the invariant of the cobordism
  \eqpic{CCinv}{240}{99}{
  \put(15,110)  {$\projMF~:=$}
  \put(30,0){ \Includepic{23}
  \put(65,195)    {\tiny $1$}
  \put(51,192)    {\tiny $2$}
  \put(59,66)     {\begin{turn}{-27}\scriptsize $(X,-)$\end{turn}}
  \put(139,74)    {\scriptsize $A$}
  \put(140,50)    {\scriptsize $A$}
  \put(175,77)    {\begin{turn}{-27}\scriptsize $(X,+)$\end{turn}}
  \put(59,134)    {\begin{turn}{-27}\scriptsize $(X,+)$\end{turn}}
  \put(139,145)   {\scriptsize $A$}
  \put(140,119)   {\scriptsize $A$}
  \put(175,145)   {\begin{turn}{-27}\scriptsize $(X,-)$\end{turn}}
  \put(174,195)   {\tiny $1$}
  \put(166,208)   {\tiny $2$}
  }}
Here top and bottom as well as front and back are to be identified, i.e.\
one deals with a ribbon graph in the three-manifold $\ExtTdef X\Times[0,1]$,
where $\ExtTdef X \eq \partial\TORS \eq \partial\TOT$ is a torus with two marked
points labeled by $(X,+)$ and $(X,-)$. This projector is an obvious
generalization of the projector used in \cite[Eq.\,(5.15)]{fjfrs}. That $\proj$
is a projector is seen in the same way as in Lemma 5.4(i) of \cite{fjfrs}, and
in complete analogy with the proof of Lemma 4.5(i) of \cite{ffrs5} one proves
that $Z(\TORS)\iN\im(\proj)$. Further, the set
  \be\label{basis_Im(P)}
  \mathcal B := \{ Z(\TOT) \}_{p,q,\alpha,\beta}^{}
  \ee
of vectors constitutes a basis of $\im(\proj)$, as can be seen in the same way as
in the proof of Lemma 5.4(ii) of \cite{fjfrs}.
This already shows that an expansion of the form \eqref{surg_DF} indeed exists.
\\
The values of the coefficients in that expansion are obtained by composing
\eqref{surg_DF} with an element of the basis $\mathcal B^*$ dual to $\mathcal B$.
Using the explicit form of the basis $\mathcal B^*$, which is obtained in
Appendix \ref{app:B*}, we find
  \eqpic{coeff-1}{370}{79}{
  \put(-4,90) {$S_{0,0}^{-1}\;\mathcal N^{-1}\, C_{pq,\alpha\beta}~= $}
  \put(106,2){ \Includepic{4}
  \put(56.5,102)   {\scriptsize $\pb $}
  \put(45.2,82)    {\scriptsize $p$}
  \put(59,134)     {\scriptsize $\qb $}
  \put(125,165)    {\scriptsize $q$}
  \put(48.2,28)    {\scriptsize $q$}
  \put(65.3,57)    {\small $\bar{\alpha}$}
  \put(26.1,115)   {\small $\bar{\beta}$}
  \put(115,80)     {\scriptsize $A$}
  \put(155,122)    {\scriptsize $X$}
  } }
Here the normalization factor $\mathcal N$ is given by
  \be
  \mathcal N = \frac{(\dim(U_p)\, \dim(U_q))^2_{}} {\dim(A)\, \dim(X)} \,,
  \ee
while $\bar\alpha$ and $\bar\beta$ label elements of the basis of
$\Homaa X{U_p\,\oti^+\!A\,\oti^-U_q}\}$ and of $\Homaa A{U_\pb\,\oti^+
  $\linebreak[0]$
X\oti^-U_\qb}\}$,
respectively. These bases are dual, in the sense of \eqref{DF_dual}, to the ones
appearing in \eqref{instor}. Using that $\bar\alpha$ is a morphism of bimodules and
that $A$ is symmetric special Frobenius, we can rewrite \eqref{coeff-1} as
  \eqpic{coeff-2}{170}{41}{ \put(0,-2){
  \put(0,55) {$C_{pq,\alpha\beta}~=~ S_{0,0}\;\mathcal N$}
  \put(98,0){ \Includepic{5}
  \put(21,82)        {\scriptsize $p $}
  \put(23,34)        {\scriptsize $\pb$}
  \put(52,82)        {\scriptsize $q $}
  \put(50.5,34)      {\scriptsize $\qb$}
  \put(35.5,19.8)    {\small $\bar{\beta}$}
  \put(36,69.3)      {\small $\bar{\alpha}$}
  \put(40.8,53)      {\scriptsize $X$}
  } } }
As shown in Lemma \ref{coeff_calc}, the number on the right hand side equals
$\dim(U_p)\,\dim(U_q)\, \cdefinv {X}pq\alpha\beta$. This finally establishes the 
expression \eqref{coeff_value} for the coefficients in the expansion \eqref{surg_DF}.
\endofproof

\noindent{{\it Proof of Theorem} \ref{thm:bulkfac}.}\hspace{7pt}
~\\
Combining the expressions \eqref{C-tor} and \eqref{Cfact-tor} for the correlators
on the original and cut world sheets with the expansion \eqref{surg_DF} results in
the factorization formula \eqref{fact_rel}.
\endofproof

\begin{remark}\label{mistake_lemma_coeff_calc}
In the case of the trivial, or \emph{transparent},
defect $X \eq A$ we have
  \be
  \bearll
  \theta_p^{}\,\theta_q^{-1}\, \RR \pb p 0 \nl\nl\, \Rm \qb q 0 \nl\nl \,
  \cdefinv {A}\pb\qb\alpha\beta \!\!\! &
  \equiv\theta_p\,\theta_q^{-1}\,\RR \pb p 0 \nl\nl\, \Rm \qb q 0 \nl\nl \,
  (c^{\text{bulk}-1}_{A,\pb,\qb})_{\alpha\beta}^{}
  \\{}\\[-.7em]&
  = \RR\pb p0 \nl\nl\, \Rm\qb q0 \nl\nl\, (c^{\text{bulk}-1}_{A,\pb,\qb})_{\alpha\beta}^{}
  = (c^{\text{bulk}-1}_{A,p,q})_{\beta\alpha}^{} \,.
  \eear
  \ee
Thus this special case of \eqref{fact_FFRS} reproduces
the result given in Theorem 2.13 of \cite{fjfrs}.
\end{remark}


\section{Characterizing the world sheet}\label{Worldsheets}

It is worth noting that up to this point there was no need of a
detailed specification of all aspects of a world sheet. In contrast,
to proceed to the discussion of fundamental world sheets
we need a precise definition of what is to be meant by a world sheet.
Actually, several ingredients will be used only implicitly below, for instance
for comparison with the literature, but for the sake of completeness we present
them nevertheless.

Briefly, a world sheet involves two types of data: First, its description as
a two-dimensional \emph{manifold}. What additional structure this manifold should
carry depends on the application one has in mind \cite{scfu4}.
For our purposes, it is sufficient to regard it just as an oriented
topological manifold, in particular there is no need to
think of it as being endowed with a conformal structure.
The second piece of data encodes information about \emph{field insertions,
boundary conditions} and \emph{defect lines}. There is a lot of freedom in presenting
the latter information. In the description below we closely follow some of the
conventions in \cite{fuRs10,ffrs5}. For instance, we reserve the term
\emph{defect field} to insertions with precisely one incoming and one outgoing
defect line. Furthermore, a \emph{disorder field} is a defect field for which
one of the incident defect lines is trivial.

We denote by \cC\ the modular tensor category characterizing the chiral CFT, with
conventions as listed in Section \ref{FactorizationIdentity}.
Besides aspects of \cC\ and of symmetric special Frobenius algebras
$A \eq (A,m,\eta,\Delta,\eps)$ and their (bi)modules in \cC\ that were already used
above, we will also make use of the isomorphism $\Fiso\colon A\To A^\vee$ given by
  \be
  \Fiso = ((\eps\circ m)\,\oti\,\id_{A^\vee}) \cir (\id_A \,\oti\, b_A)
  = (\id_{A^\vee}\,\oti\,(\eps\circ m)) \cir (\tilde b_A \,\oti\, \id_A)
  \label{Fiso} \ee
that canonically comes with the symmetric Frobenius algebra structure.


\subsection{Definition}\label{sec:defws}

We are now ready to specify what we mean by a world sheet.

\begin{definition}\label{def:ws}~\\[2pt]
        (i)
A \emph{world sheet}
   $\ws \eq (\tws,\D,\assd,\P)$ consists of:
\def\leftmargini{1.57em}~\\[-1.63em]\begin{itemize}\addtolength{\itemsep}{-6pt}%
  \item[\nxT] an oriented compact two-dimensional topological manifold $\tws$,
with orientation of the boundary $\partial\tws$ induced by the orientation of $\tws$;
  \item[\nxT] the \emph{defect graph} $\D$,
a finite oriented topological graph without univalent vertices, together with
a selection of a distinguished edge $e_v$ for each vertex $v$ of $\D$;
  \item[\nxT]
an injective continuous map $\,\assd\colon \D \To \tws$
such that $\dD$ defines a cell decomposition of $\tws$;
  \item[\nxT]
a finite unordered subset $\P$ of the set of two-valent vertices $v$ of $\D$;
elements of $\assd(\P)$ are called \emph{insertion points}.
  \end{itemize}
(ii)
The individual pieces of these data are decorated
by objects and morphisms of \cC. These decoration data are given by:
\def\leftmargini{1.57em}~\\[-1.63em]\begin{itemize}\addtolength{\itemsep}{-6pt}%
  \item[\nxT]
an assignment $\assa\colon f\Mapsto A_f$ of a simple symmetric special Frobenius algebra
$A_f$ in \cC\ to each \emph{face} $f$, i.e.\ to each connected component
of $\tws\,{\setminus}\,\dD$;
  \item[\nxT]
an assignment $\assb\colon e\Mapsto X_e$ of an $A_f$-$A_{f'}$-bimodule $X_e$ in \cC\ to each
edge $e$ of $\D$ for which $\assd(e) \,{\cap}\, \partial\tws \eq \emptyset$,
where $f$ and $f'$ are the faces to the right and to the left of the edge $e$, respectively;
  \item[\nxT]
an assignment $\assc\colon e\Mapsto M_e$ of an $A_f$-module $M_e$ in \cC\
to each edge $e$ of $\D$ for which $\assd(e) \,{\subset}\, \partial\tws$,
where $f$ is the face adjacent to $e$;
  \item[\nxT]
an assignment $\assp$ of either 0, 1, or 2 objects of \cC\ and of a corresponding
module or bimodule morphism to each vertex $v$ of $\D$, according to
  \be
  \assp:\quad v \,\longmapsto\, \left\{ \bearll
  \varPhi = (U_v^{},U'_v,\phi_v^{}) ~
  &{\rm for}~ v\iN \P\,\text{ and }\,\assd(v)\,{\notin}\,\partial\tws \,,\\~\\[-9pt]
  \varPsi = (U_v^{},\psi_v^{}) ~
  &{\rm for}~ v\iN \P\,\text{ and }\,\assd(v)\iN\partial\tws \,, \\~\\[-9pt]
  \varkappa_v ~
  &{\rm for}~ v \,{\notin}\,\P\, \text{ and }\,\assd(v)\,{\notin}\,\partial\tws\,,  \\~\\[-9pt]
  \chi_v    ~
  &{\rm for}~ v \,{\notin}\,\P\, \text{ and }\,\assd(v)\iN\partial\tws\,,\eear\right.
  \ee
with objects $U_v^{},U'_v \iN \cC$, and (bi)module morphisms
  \be
  \bearl
  \varPhi_v
  \in \Hom_{A_f|A_{f'}}(U_v \,\oti^+ X_{e_v}^\pm \,\oti^-\, U'_v,X_{\overline{e_v}}^\mp) \,,
  \\~ \\[-8pt]
  \varPsi_v \in \Hom_{A_f}(M_{e_v} \,\oti\, U_v,M_{\overline{e_v}}) \,,
  \\~\\[-7pt]
  \varkappa_v \in \HomP_{A_{f_1}|A_{f_0}}(X_{e_v}\vv,X_{e_1}\vvi \,\oti\, X_{e_2}\vvi\,
  \oti \,\cdots\, \oti\, X_{e_{N-1}}\vvi) \,,
  \\~\\[-7pt]
  \chi_v \in \HomP_{A_{f_1}}(M_{e_v},X_{e_1}\vvi \,\oti\, X_{e_2}\vvi\, \oti \,\cdots\,
  \oti\, X_{e_{N-2}}\vvi \,\oti\, M_{\overline{e_v}}) \,.
  \eear
  \ee
Here we have introduced the following notation.
  \end{itemize}
  \def\leftmargini{2.85em}~\\[-3.66em]\begin{itemize}\addtolength{\itemsep}{-6pt}%
  \item[\nxx]
For a bimodule $X^\pm \eq X$ we write $X^\mp \,{:=}\, X\ve$.
  \item[\nxx]
For any edge $e$ incident to a vertex $v$, the bimodule $X_e\vv$ is $X_e^{}$
if $e$ is an incoming edge, and $X_e^{\,\mathrm v}$ if $e$ is outgoing.
  \item[\nxx]
For $v\,{\not\in}\,\P$
a vertex with $N$ incident edges, those edges are totally ordered by obtaining a
cyclic ordering using $\assd$ and $\mathrm{or}(\tws)$ and declaring $e_v$ to be
the first edge. We label the edges as $e_0\,{\equiv}\, e_v,\, e_1,\, ...\, ,e_{N-1}$,
while the adjacent faces are denoted by $f_0,\, f_1,\,...\,,f_{N-1}$ in such a way
that $X_{e_i}\vvi$ is an $A_{f_{i}}$-$A_{f_{i+1}}$-bimodule.
  \item[\nxx]
For a vertex $v\iN\P$, $\overline{e_v}$ is the non-distinguished edge incident to $v$.
  \item[\nxx]
For a vertex with $\assd(v)\iN\partial\tws$ we choose the incoming boundary
edge to be the distinguished edge $e_v$ and denote the outgoing boundary
edge by $\overline{e_v}$.
 \end{itemize}
(iii) In addition the following restrictions are imposed:
 \def\leftmargini{1.63em}~\\[-1.45em]
 \begin{itemize}\addtolength{\itemsep}{-6pt}%
  \item[\nxT]
The orientation of any edge $e\iN\dD$ is opposite to that of the corresponding
defect line or boundary component.
  \item[\nxT]
For a vertex $v\,{\notin}\,\P$ for which at least one incident edge is decorated
by a Frobenius algebra $A_i$, the morphism $\varkappa_v$, respectively $\chi_v$, is
entirely composed of structure morphisms of the algebras $A_i$ and their (bi)modules
and of morphisms in spaces of the type $\HomP_{A_0|A_1}(X\vv_1,X\vvi_{2}\oti 
    $\linebreak[0]$
\cdots\,\oti X\vvi_{p})$
respectively $\HomP_{A}(M_{e_v}, X_{1}\vvi \,\oti\, X_{2}\vvi\, \oti \,\cdots\,
\oti X_{p}\vvi \oti M_{\overline{e_v}})$, with $X_i$, $i\eq 1,2,...\,,p$, the
non-trivial bimodules labeling edges incident to $v$, in a manner compatible
with the orientations of defect lines. Explicitly:
  \end{itemize}
  \def\leftmargini{2.85em}~\\[-3.68em]\begin{itemize}\addtolength{\itemsep}{-6pt}%
  \item[\nxx]
For a two-valent vertex with each edge decorated by $A$ or $A^\vee$, $\varkappa_v$
is given by $\id_A$, $\id_{A^\vee}$, $\Fiso$ or $\Fisoi$.
  \item[\nxx]
For an $n$-valent vertex with $n \,{\ge}\, 2$, $\assd(v)\,{\notin}\,\partial\tws$
and precisely one incoming edge decorated
by $A$ (or equivalently, precisely one outgoing edge decorated
by $A^\vee$),
and assuming, without loss of generality, the ordering of incident edges to be
such that $\assb(e_0) \eq A$, $\varkappa_v$ is required to be of the form
  \eqpic{Avertex}{150}{54}{
  \put(52,0)	{\Includepic{47}}
  \put(0,53)	{$\varkappa_v\ =$}
  \put(70,62)	{$\varkappa{}'$}
  \put(49,126)  {\scriptsize$X\vvi_{1}$}
  \put(82,126)  {\scriptsize$X\vvi_{n-2}$}
  \put(107,126) {\scriptsize$X\vvi_{n-1}$}
  \put(62.9,-9) {\scriptsize$A$}
  \put(77.7,44) {\scriptsize$\rho_{\!X_{n-1}\vv}^{}$}
   }
for some
$\varkappa'\iN\HomP_{A_|A_{n-1}}(X\vv_{n-1},X\vvi_{1}\oti\,\cdots\,\oti X\vvi_{n-2})$.
  \item[\nxx]
Similarly, for $n \,{\ge}\, 2$ and precisely one outgoing edge decorated
by $A$ (or equivalently, precisely one incoming edge decorated
by $A^\vee$),
$\varkappa_v$ is required to be like in \eqref{Avertex}, but with
the representation morphism $\rho$ replaced by $\rho\cir (\Fiso\,\oti\,\id)$.
The case $\assd(v)\iN\partial\tws$ works analogously.
  \item[\nxx]
The case that $n \,{\ge}\, 2$ and more than one edge incident to $v$ is decorated
by $A$
or $A^\vee$ is treated recursively, by first applying the previous prescription to
one choice of $A$- or $A^\vee$-labeled edge and then treating the auxiliary morphism
$\varkappa'$ in the same way as $\varkappa_v$.
 \end{itemize}
\end{definition}

\medskip

\begin{remark}
Our definition of world sheet differs somewhat from the one given in Definition B.2
of \cite{fjfrs}. The two descriptions are related as follows.
    \begin{enumerate}
    \item[(i)]
In \cite{fjfrs} no network of transparent defects is chosen. Instead, the definition
in \cite{fjfrs} involves the choice of a dual triangulation of the surface.
This dual triangulation is covered by defect lines that are labeled by Frobenius
algebras and by the structure morphisms of those algebras.
\\
The equivalences of world sheets to be discussed below can be used
to show that for any world sheet in the sense of our definition there is
an equivalent one for which $\dD$ contains such a dual triangulation.
    \item[(ii)]
In \cite{fjfrs}, one datum of a field insertion is the choice of a germ of
arcs containing the insertion point. In the present description, a possible
choice of germ is induced by the two edges
incident to the insertion point. We tacitly make this natural choice.
    \end{enumerate}
\end{remark}

\begin{remark} ~\\[2pt]
(i)
While the subset $\dD$ of the world sheet $\ws$ contains essential physical
information, the defect graph $\D$ itself is only an auxiliary datum. $\D$
has been included in the data for $\ws$ because various aspects of world
sheets can be formulated more conveniently by making reference to $\D$
rather than to $\dD$.
When doing so, for ease of notation we often refer to an edge $\assd(e)\iN\dD$ just as $e$.
\\[3pt]
(ii)
That $\dD$ defines a cell decomposition of $\tws$ implies in particular
that $\dD \,{\supset}\, \partial\tws$, that the interior of any edge of $\D$ is
either entirely mapped to the interior $\tws\,{\setminus}\,
\partial\tws$ or entirely to the boundary $\partial\tws$ of $\tws$, as well as
the following connectivity property: the pre-image of $\dD$ restricted to a
connected component of $\tws$ is a connected subgraph of $\D$.
\\[3pt]
(iii)
By considering world sheets for which every edge of the defect graph is labeled
by one and the same Frobenius algebra $A$ we obtain all orientable world
sheets considered in \cite{fjfrs}.
\\[3pt]
(iv)
At vertices with an algebra line attached we could in principle allow for more
general morphisms $\varphi$ than the appropriate representation morphism $\rho$
(see \eqref{Avertex}). However, by decomposing the relevant bimodule into its
simple summands $X_\mu$ and then choosing as bases for the one-dimensional
spaces $\HomP_{A|B}(A\,\oti\, X_\mu,X_\mu)$
and $\HomP_{A|B}(X_\mu,X_\mu\,\oti\, B)$ the left and right representation morphisms
of the $A$-$B$-bimodule $X_\mu$, any such morphism $\varphi$ is expressed in terms
of representation morphisms. By the properties
of the TFT, the correlator of a world sheet involving defect junctions
labeled by morphisms of the more general type is obtained as a sum over correlators
of world sheets containing only the types of morphisms allowed by our definition.
\\
For the same reason, in the case $n\eq2$ of \eqref{Avertex} one can without loss of
generality assume that $X^\mp_1 \,{\cong}\,A^\vee$ or $A$, implying that
$\kappa' \eq \eps_A$.
\\[3pt]
(v)
Elsewhere (e.g.\ in \cite[(3.31)]{fuRs10} or \cite{ruSu}) also field insertions
joining more than two defect lines are admitted.
In our description such a generalized field insertion is taken care of by the
concatenation of a vertex $u\iN\P$ and a multi-valent vertex $v\,{\notin}\,\P$.
Thus our definition does not impose any restriction in this respect.
\\[3pt]
(vi)
When constructing the connecting manifold $\M_\ws$, \emph{ribbons} and \emph{coupons}
are placed on edges and vertices, respectively, of $\dD$ with the same decoration
as $\D$. In order to match the conventions of \cite{fuRs10}, on each edge $\assd(e)
\iN\dD$ we place a ribbon whose core orientation coincides with the orientation
of $\assd(e)$, but whose 2-orientation is \emph{opposite} to the one of $\tws$.
\end{remark}


\subsection{Equivalences of world sheets}\label{sec:eqWS}

World sheets that according Definition \ref{def:ws} are different can nevertheless
give the same correlator. We call two world sheets $\ws$ and $\ws'$ with the same
underlying surface $\tws$ \emph{equivalent} as world sheets iff
  \be
  C(\ws) = C(\ws') \,.
  \ee
In most cases that we will discuss, the equality of correlators becomes
tautological in the TFT construction and accordingly we refrain from spelling out
the corresponding proofs. Establishing the remaining equivalences
is not difficult either. As an illustration, the proof of ``independence
of transparent subgraph'' will be presented in Appendix \ref{app:iots}.

We first observe that the TFT construction immediately implies that $\ws$ and $\ws'$
give the same correlator if they differ only in such a way that the assignments
$\assd$ and $\assd'$
are related by an isotopy that fixes $\assd\big(\P\big) \eq \assd'\big(\P'\big)$
pointwise. We will freely employ such equivalences, and from now on do not distinguish
between world sheets whose embedded graphs are related by such an isotopy.

In the sequel we refer to elements of $\P$ as \emph{insertion vertices}, and to
vertices not contained in $\P$ as \emph{\fusion\ vertices}. Furthermore, we refer
to vertices labeled by structure morphisms of algebras or (bi)modules as
\emph{structure vertices}.
The orientation of the graph $\D$ supplies two functions $s$ and $t$ from the
set of edges of $\D$ to the set of vertices. We call $s(e)$ the \emph{source}
vertex and $t(e)$ the \emph{target} vertex of $e$.

There are also equivalences of world sheets that have different defect graphs $\D$
and $\D'$. Most of these need to be accompanied by a corresponding modification of
the decoration data. In the sequel we list fundamental equivalences of various types;
these will be used in section \ref{sec:fundws} to obtain a set of fundamental world sheets.

\medskip

In addition to these equivalences we may also consider world sheets related
in such a way that their correlators differ only by a non-zero
\emph{multiplicative factor}. Consider a face $f$ of $\ws$ labeled by $A$.
Let the world sheet $\ws'$ be obtained from $\ws$ by first inserting a circular
defect line, labeled by an $A$-$B$-bimodule $X$, in the face $f$ and then
fusing $X$ to the edges that bound the face $f$. By using equivalences from
the list below to express the correlator for $\ws'$ in terms of the original
one, it follows that
  \be
  C(\ws') = \frac{\dim(X)} {\dim(B)}\; C(\ws) \,.
  \ee
An interesting special case is when the bimodule $X$ is invertible, so that $X$ and
$X\ve$ give rise to a Morita context. The relation then also includes isomorphisms
of conformal field theories, as discussed in \cite{dakr}, compare also
\cite[Sect.\,3.3]{ffrs5}.

\medskip

In the remaining parts of this section we will consider pairs of world sheets
$\ws$ and $\ws'$ that have the same underlying surface $\tws$. We denote the
additional structure on $\tws$ by $\D$, $\assa$, $\assb$ etc.\ for $\ws$ and
by $\D'$, $\assa'$, $\assb'$ etc.\ for $\ws'$.  We provide a list of possible
ways in which this structure can differ for equivalent world sheets. In each
item below only those pieces of data are indicated that differ between $\ws$
and $\ws'$, while all remaining data coincide. When the differences in data are
confined to some specific region of $\tws$, we denote that region by $D$ and $D'$
for the world sheets $\ws$ and $\ws'$, respectively.


\subsubsection{Equivalences involving only the assignments
               $\assa$, $\assb$, $\assc$ and $\assp$}

World sheets are equivalent if their assignments
$\assa$, $\assb$, $\assc$, and $\assp$ differ in one of the following ways:
   \def\leftmargini{1.57em}~\\[-1.45em]
\begin{itemize}\addtolength{\itemsep}{-6pt}%
  \item[\nxt]
\emph{Isomorphisms of defect lines and boundary conditions:}
For $e$ an edge of $\D$ and $\varphi$ an isomorphism of bimodules from $\assb'(e)$
to $\assb(e)$, or an isomorphism of modules from $\assc'(e)$ to $\assc(e)$,
$\assp'(s(e))$ and $\assp'(t(e))$ differ from $\assp(s(e))$ and $\assp(t(e))$
by composition with $\varphi$ and $\varphi^{-1}$, respectively (tensored with
identity morphisms for all other
edges incident to $s(e)$ and $t(e)$).
  \item[\nxt]
\emph{Orientation reversal of edges:}
Source and target of an edge $e$ of $\D$
with $\assd(e)\,{\not\in}\,\partial\tws$ differ according to
$s'(e) \eq t(e)$ and $t'(e) \eq s(e)$, while $\assb'(e) \eq \assb(e)\ve$.
  \item[\nxt]
\emph{Choice of distinguished edge:}
For $v\,{\not\in}\,\P$ such that $\assd(v) \,{\not\in}\, \partial\tws$
a vertex with $N$ incident edges, the distinguished edge
in $\ws$ is $e_v$, implying a total ordering $e_v,e_1,\ldots,e_{N-1}$ of edges
incident to $v$, while in $\ws'$ the distinguished edge is $e'_v=e_1$. This is
shown in the following figure.\,%
  \eqpic{rotvertex}{300}{40}{  \setlength\unitlength{1.2pt}
  \put(0,0)	{\INcludepic{26i}}
  \put(52,7)	{\tiny $1$}
  \put(46,20.5)	{\tiny $2$}
  \put(73,27)	{\footnotesize $e_1$}
  \put(42,72.8) {\footnotesize $e_{N-2}$}
  \put(-8,63)	{\footnotesize $e_{N-1}$}
  \put(3,5)	{\footnotesize $e_v$}
  \put(170,0)	{\INcludepic{26j}}
  \put(222,7)	{\tiny $1$}
  \put(216,20.5){\tiny $2$}
  \put(242,27)	{\footnotesize $e'_v$}
  \put(210,74)	{\footnotesize $e'_{N-3}$}
  \put(161,64)	{\footnotesize $e'_{N-2}$}
  \put(169,3)	{\footnotesize $e'_{N-1}$}
  \put(-28,70)	{\fbox{$D$}}
  \put(140,70)	{\fbox{$D'$}}
  }
The morphisms $\varkappa_v \eq \assp(v)$ and $\varkappa_v' \eq \assp'(v)$ are related by
  \eqpic{edgemove}{230}{63}{
  \put(0,10)	{\Includepic{25a}}
  \put(27,65)	{\footnotesize $\varkappa_v'$}
  \put(25,0)	{\footnotesize $X_{e'_v}\vv$}
  \put(-5,138)	{\footnotesize $X_{e'_1}\vvi$}
  \put(13,138)	{\footnotesize $\cdots$}
  \put(26,138)	{\footnotesize $X_{e'_{N-2}}\vvi$}
  \put(53,138)	{\footnotesize $X_{e'_{N-1}}\vvi$}
  \put(94,65)	{$=$}
  \put(135,10)	{\Includepic{25b}}
  \put(133,0)	{\footnotesize $X_{e'_v}\vv$}
  \put(177.5,65.5){\footnotesize $\varkappa_v$}
  \put(160,139)	{\footnotesize $X_{e'_1}\vvi$}
  \put(178,139)	{\footnotesize $\cdots$}
  \put(191,139)	{\footnotesize $X_{e'_{N-2}}\vvi$}
  \put(218,139)	{\footnotesize $X_{e'_{N-1}}\vvi$}
  }
Any other choice of distinguished edge is obtained from the original one by
iterating this equivalence an appropriate number of times.
It is straightforward to check that an $N$-fold iteration brings one back
to the original morphism, as needed for consistency.

Since the incoming boundary edge defines a natural total ordering of edges
incident to a boundary vertex, there is no need for analogous
considerations for vertices $v$ with $\assd(v)\iN\partial\tws$.
(Recall that in our conventions the distinguished edge of such vertex
is indeed required to be the incoming boundary edge.)
\end{itemize}


\subsubsection{\Essential\ equivalence}\label{ssec:essequiv}

Other equivalences between world sheets involve manipulations of the defect
graph $\D$. To address these, some additional notation is helpful.
By a \emph{transparent edge} we mean an edge labeled by a Frobenius
algebra $A$ (viewed as a bimodule over itself). For $\D$ the defect graph
of a world sheet $\ws$, we introduce three substructures $\DX$, \TD\ and \DTD,
as illustrated in \eqref{subgraphs} below. For each of them
the number of vertices attached to an edge may be $0$, $1$, or $2$, and for
lack of a better term we still refer to them as sub\emph{graphs}. First,
denote by $\DX$ the subgraph obtained by removing from $\D$ all transparent
edges and all vertices incident to them. Similarly, by $\TD$ we denote the
subgraph that consists of all transparent edges together with all vertices in
$\D \,{\setminus}\, \P$ incident to transparent edges. Finally, \DTD\ is given
by $\DX\,{\cup}\,\P$ adjoined with all half edges incident to $\P$.
Note that $\DTD \,{\cup}\, \TD \eq \D$. We refer to $\TD$ as the
\emph{transparent subgraph}, to $\DX$ as the \emph{opaque subgraph}, and
to $\DTD$ as the \emph{opaque subgraph with insertions}.

The following picture illustrates these structures in an example. In the
picture, transparent edges are drawn as dotted lines, while all other edges
are drawn as solid lines, and vertices in $\P$ are marked with $\phi$ or
$\phi'$, while vertices in $\D\,{\setminus}\,\P$ are unmarked.
  \eqpic{subgraphs}{350}{80}{
  \put(0,100)  {\Includepicfj{3}{45a}}
  \put(-42,133)	{$ \D ~= $}
  \put(123,155)	{\small $\phi'$}
  \put(49,110)	{\small $\phi$}
  \put(260,100){\Includepicfj{3}{45b}}
  \put(218,133)	{$ \DX ~= $}
  \put(309,110)	{\small $\phi$}
  \put(0,0)	  {\Includepicfj{3}{45c}}
  \put(-42,18)	{$ \TD ~= $}
  \put(220,0)	{\Includepicfj{3}{45b}}
  \put(325,45)	{\Includepicfj{3}{45d}}
  \put(178,18) 	{$ \DTD ~= $}
  \put(269,10)	{\small $\phi$}
  \put(343,53)	{\small $\phi'$}
  }

Having introduced these sub`graphs', we can consider the following notion: We say that
the defect graphs $\D$ and $\D'$ together with their assignments of data, belonging to
world sheets $\ws$ and $\ws'$ respectively, are \emph{\essentially\ equivalent} iff:
   \begin{itemize}\addtolength{\itemsep}{-6pt}%
\item[(i)]
the sets of (labeled) vertices of the two opaque subgraphs with insertions,
i.e.\ of $\DTD$ and $\DTD'$, coincide, and
\item[(ii)]
the set of decorated edges in $\DTD$ and $\DTD'$, coincide,
if necessary after replacing a subset of edges in $\DX$ or $\DX'$ having one or zero
incident vertices with a smaller set having the same decorations (and also, if necessary,
after first changing orientations and the corresponding assignments).
\end{itemize}

The meaning of the requirement (ii)
is illustrated in the following picture, in which
$X$ and $X\ve$ denote the bimodules assigned by $\assb$ respectively $\assb'$
to the corresponding edges:
  \eqpic{essequiv}{280}{15}{
  \put(0,0)	{\Includepic{46a}}
  \put(-46,22)	{$ \DX ~= $}
  \put(47,31)	{\scriptsize $X$}
     \put(50,0){
  \put(160,0)	{\Includepic{46b}}
  \put(114,22)	{$ \DX' ~= $}
  \put(192,31)	{\scriptsize $X$}
  \put(225,31)	{\scriptsize $X\ve$}
  } }

\begin{lemma}\label{lem:intrinsiceq}
Two world sheets $\ws$ and $\ws'$ with equal underlying manifolds $\tws \eq \tws'$
and with \essentially\ equivalent defect graphs $\D$ and $\D'$
    for which the restrictions $\assd|_{\DTD}$ and $\assd'|_{\DTD'}$ coincide,
are equivalent.
\end{lemma}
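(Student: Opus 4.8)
The plan is to establish the lemma combinatorially rather than by a direct comparison of the two TFT invariants: I would exhibit $\ws$ and $\ws'$ as the two ends of a finite chain of the elementary equivalence moves introduced earlier in this section, each of which is already known to preserve the correlator. Since two world sheets are equivalent precisely when their correlators agree, transitivity of equality along the chain then yields $C(\ws)\eq C(\ws')$. I would assemble the chain by matching the data of $\ws$ to those of $\ws'$ from the inside out, first on the opaque subgraph with insertions and only afterwards on the transparent network.

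First I would use the hypotheses $\tws\eq\tws'$, $\assd|_{\DTD}\eq\assd'|_{\DTD'}$, and the coincidence of the labelled vertices of $\DTD$ and $\DTD'$ to conclude that $\ws$ and $\ws'$ already carry one and the same embedded, decorated set of insertion and structure vertices. The only admissible discrepancies among the opaque edges are those licensed by intrinsic equivalence, and I would remove them one move at a time: an edge on which $\ws$ and $\ws'$ differ by a reversed orientation together with the replacement $\assb(e)\Mapsto\assb(e)\ve$ is corrected by the \emph{orientation reversal of edges} move (including the induced change of the incident vertex morphisms, as in condition~(ii)); the redundant opaque edges of $\DX$ or $\DX'$ carrying at most one incident vertex, such as the extra $X\ve$-edge in~\eqref{essequiv}, are reconciled by the fusion equivalences of Section~\ref{sec:equivfus} together with the \emph{isomorphism of defect lines and boundary conditions} move, which realize the ``replacement of a subset by a smaller set having the same decorations'' sanctioned by condition~(ii); and a final application of the \emph{choice of distinguished edge} move~\eqref{edgemove} at each multivalent fusion vertex aligns the distinguished-edge conventions. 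After these steps $\ws$ and $\ws'$ share one and the same embedded decorated opaque subgraph with insertions, i.e.\ $\DTD\eq\DTD'$.

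The decisive step concerns the complement. Since $\D\eq\DTD\cup\TD$ and $\D'\eq\DTD'\cup\TD'$, once $\DTD\eq\DTD'$ the sole remaining difference is that the transparent subgraphs $\TD$ and $\TD'$ furnish two a priori different networks of $A$-labelled defect lines completing $\assd(\DTD)$ to a cell decomposition of $\tws$, compatibly with the face assignment $\assa$. Here I would invoke the \emph{independence of the transparent subgraph}, which is one of the listed equivalences and is proved in Appendix~\ref{app:iots}, to pass from $\TD$ to $\TD'$. Composing this with the moves of the previous paragraph produces the desired finite correlator-preserving chain from $\ws$ to $\ws'$, and hence $C(\ws)\eq C(\ws')$.

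I expect essentially all of the substance to sit in this last step, which is exactly why it has been isolated into Appendix~\ref{app:iots}: the two transparent networks may differ in the number of their vertices, in their connectivity, and in how they subdivide the individual faces, so reconciling them requires a sequence of local moves realized through the symmetric special Frobenius structure of $A$ --- associativity, the Frobenius and specialness relations, and the (co)unit axioms --- the analogues of Pachner and bubble moves on a dual triangulation. The preceding normalization steps are by contrast routine bookkeeping; the only mild care needed is to order the orientation- and distinguished-edge moves so that they do not reintroduce discrepancies, which is automatic since each such move is local to a single already-fixed edge or vertex of $\DTD$.
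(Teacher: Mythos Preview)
Your overall strategy---normalise the opaque data, then invoke independence of the transparent subgraph (Appendix~\ref{app:iots})---is correct and matches the paper's in spirit. The execution, however, diverges from the paper's in a way worth noting.

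The paper does not reach forward to the fusion-type moves of Section~\ref{sec:equivfus}. Instead it introduces, \emph{within the proof}, two further elementary moves: \emph{sliding of structure vertices} and \emph{removal of transparent tadpoles}. It then argues that every intrinsic equivalence decomposes into a finite chain of exactly these three moves (sliding, tadpole removal, independence of transparent subgraph), and verifies that each preserves the correlator---the first two almost tautologically (bimodule-morphism property; the identity $(\id_A\oti\tilde d_A)\cir(\Delta\oti\id_{A^\vee})\cir b_A=\eta$ for special symmetric Frobenius $A$), the third in Appendix~\ref{app:iots}. In particular, the edge-count discrepancy allowed by condition~(ii) of intrinsic equivalence (the situation of~\eqref{essequiv}) is handled by tadpole removal: the structure vertex separating two colinear $X$-edges is, after rearranging the transparent network, exactly a $\rho$-vertex with a transparent tadpole attached, and removing it merges the edges. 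Your appeal to ``fusion equivalences of Section~\ref{sec:equivfus}'' at this point is both a forward reference and not sharp---none of local defect fusion, removal of internal \fusion\ vertices, or bubble collapse directly merges two colinear $X$-edges separated by a single structure vertex. Note also that the proof in Appendix~\ref{app:iots} itself uses sliding of structure vertices and removal of transparent tadpoles, so by invoking that appendix you are implicitly relying on moves you have not isolated.

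One terminological slip: the vertices guaranteed to coincide by the hypothesis are the vertices of $\DTD$, i.e.\ insertion vertices and opaque \fusion\ vertices. \emph{Structure} vertices (those labelled by structural morphisms of algebras and their modules) are by definition incident to transparent edges and hence lie in $\TD$, not in $\DTD$; they are precisely what may differ between $\ws$ and $\ws'$.
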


\begin{proof}
(i) Let us first formulate three specific instances of \essential\ equivalence:
      \def\leftmargini{1.57em}~\\[-1.45em]
\begin{itemize}\addtolength{\itemsep}{-6pt}%
 \item[\nxt]
\emph{Sliding of structure vertices:}
We may `slide' any structure vertex, i.e.\ vertex labeled by a representation morphism
(of modules or bimodules, respectively) past any other vertex, in the way
indicated in the following picture.
  \eqpic{vertex_slide}{360}{44}{
  \put(0,-5)	{\Includepic{43c}}
  \put(220,-5)	{\Includepic{43d}}
  \put(-23,91)	{\fbox{$D$}}
  \put(196,91)	{\fbox{$D'$}}
  \put(40,53)	{\scriptsize $u$}
  \put(17,24)	{\scriptsize $v$}
  \put(260,54)	{\scriptsize $u$}
  \put(242,85)	{\scriptsize $v'$}
  }
 \item[\nxt]
\emph{Removal of transparent tadpoles:}
Consider a subgraph consisting of two transparent edges
and two vertices that are labeled by the coproduct $\Delta$ and by a (bi)module
representation morphism $\rho$, respectively, as shown on the left hand side of
the picture below. Such a subgraph may be discarded
entirely, in the way shown on the right hand side:
  \eqpic{remstrvert}{240}{44}{
  \put(0,-5)	{\Includepic{43a}}
  \put(52.5,97){\footnotesize $X$}
  \put(52.5,45)	{\footnotesize $X$}
  \put(35,90)	{\footnotesize $\rho_X^{}$}
  \put(14.5,36)	{\footnotesize $\Delta$}
  \put(180,-5)	{\Includepic{43b}}
  \put(233,50)	{\footnotesize $X$}
  \put(-23,93)	{\fbox{$D$}}
  \put(155,93)	{\fbox{$D'$}}
  }
   \item[\nxt]
\emph{Independence of transparent subgraph:}
Two world sheets $\ws$ and $\ws'$ are equivalent if their opaque subgraphs with insertions $\DTD$ and $\DTD'$ as well as the decorations restricted to $\DTD \eq \DTD'$
agree. In other words, world sheets
have the same correlator if they differ only in the part of $\D$ that consists
of vertices labeled by structure morphisms of algebras and of the edges
$e$ with $\assd(e) \,{\cap}\, \partial\tws \eq \emptyset$ that are
incident to those vertices.
\end{itemize}

~\\[-17pt]
(ii) It is easily checked that any two intrinsically equivalent graphs are related
by a sequence of modifications each of which is of one of the three specific types
described in (i).
\\[2pt]
(iii)
Now independence under sliding of structure vertices is almost tautological since,
owing to the fact that the vertex past which the sliding occurs is labeled by a module
or bimodule morphism, both sides of such an equivalence define the same morphism.
Independence under removal of transparent tadpoles becomes tautological once one
employs the fact that, for a symmetric special Frobenius algebra $A$, the morphisms
$(\id_A \,\oti\, \tilde d_A) \cir (\Delta\,\oti\,\id_{A^\vee}) \cir b_A$ and
$(d_A \,\oti\, \id_A) \cir (\id_{A^\vee} \,\oti\, \Delta) \cir \tilde b_A$
both coincide with the unit morphism $\eta$. Finally, independence of transparent
subgraph is established in Appendix \ref{app:iots}.\endofproof
\end{proof}


\subsubsection{Further equivalences involving the defect graph}\label{sec:equivfus}

Finally there are equivalences in which the defect graph
is changed more drastically than in \essential\ equivalence.

      \def\leftmargini{1.63em}~\\[-2.55em]
\begin{itemize}\addtolength{\itemsep}{-6pt}%
   \item[\nxt]
\emph{Local defect fusion}:
Defect lines can be fused; thus `parallel' pieces of two defect lines may be replaced
locally by a single defect line $e_{uv}$, at the cost of adding two extra vertices
$u$ and $v$. More explicitly, consider an open contractible region
$D \,{\subset}\, \tws\setminus\partial\tws$ that transversally intersects the
embedding of two parallel edges $e_1$ and $e_2$ and does not contain any vertex
of $\assd(\D)$.  Denoting the same region in $\ws'$ by $D'$, the regions in
which the two world sheets differ look like as indicated in the following figure.
  \eqpic{locfusion}{340}{55}{ \setlength\unitlength{.842pt}
  \put(0,0)	{\includepic{24a}}
  \put(-11,130)	{\fbox{$D$}}
  \put(73,71)	{\footnotesize $e_1$}
  \put(93,61)	{\footnotesize $e_2$}
  \put(250,0)	{\includepic{24b}}
  \put(239,130)	{\fbox{$D'$}}
  \put(244,99)	{\footnotesize $\dot e_1$}
  \put(252,25)	{\footnotesize $\dot e_2$}
  \put(386,102)	{\footnotesize $\ddot e_1$}
  \put(390,45)	{\footnotesize $\ddot e_2$}
  \put(320,80)	{\footnotesize $e_{uv}$}
  \put(298,64)	{\footnotesize $u$}.
  \put(348,66)	{\footnotesize $v$}
  }
Further, let the faces of $\ws$ that intersect $D$ be labeled by Frobenius
algebras $A_0$, $A_1$ and $A_2$,
in such a way that $\assb(e_1) \eq X_1$ is an $A_0$-$A_1$-bimodule and
$\assb(e_2) \eq X_2$ is an $A_1$-$A_2$-bimodule. Also, take $e_{uv}$ as the
distinguished edge for both $u$ and $v$. Then $\ws$ and $\ws'$ are equivalent
if the edge $e_{uv}$ is labeled by $X_1\,\oti_{A_1}X_2$,
the other four edges in $D'$ are labeled as $\assb'(\dot e_i) \eq \assb'(\ddot e_i)
\eq X_i$ for $i \eq 1,2$, and the morphisms $\eTA{u}$ and $\rTA{v}\ve$ labeling
the vertices $u$ and $v$ are the embedding morphism $\eTA{u} \eq e \iN
\Hom_{A_0|A_2}(X_1\oti_{A_1}X_2,X_1\oti X_2)$ and the analogous restriction
morphism $\rTA v \eq r$ of the retract $(X_1\oti_{A_1}X_2,e,r)$.
\item[\nxt]
\emph{Local boundary fusion:}
A defect line may be locally fused with a boundary component in a manner analogous
to local defect fusion. This equivalence is in fact already accounted for by the
local defect fusion equivalence, namely as the particular case of $A$-$\one$-bimodules.
\item[\nxt]
\emph{Removal of internal \fusion\ vertices:}
Consider an edge $e_{vw}$ that connects two different vertices $v$ and $w$, with
either $v,w \,{\notin}\, \P$ or else $w\iN\P$ and $v\,{\notin}\,\P$ being two-valent.
Such an edge may be discarded, provided that the remaining edges incident to $v$
and $w$ are joined at a single vertex $v'$.
\\
More specifically, consider the case that $v,w \,{\notin}\, \P$. Let $\D'$ be
obtained from $\D$ by removing the edge $e_{vw}$ and identifying the two vertices
$v$ and $w$. Labeling the so obtained vertex by $v'$, the regions in which $\ws$
and $\ws'$ differ can be illustrated as follows:
  \eqpic{removeedgefig}{360}{53}{ \put(0,-3){
  \put(0,0)	{\includepic{32a}} \setlength\unitlength{.842pt}
  \put(-11,130)	{\fbox{$D$}}
  \put(30,138)	{\footnotesize $e^v_1$}
  \put(-7,100)	{\footnotesize $e^v_2$}
  \put(7,15)	{\footnotesize $e^v_k$}
  \put(47,80)	{\footnotesize $v$}
  \put(70,93)	{\footnotesize $e_{vw}$}
  \put(103,89)	{\footnotesize $w$}
  \put(139,45)  {\footnotesize $e_1^{w}$}
  \put(135,103)	{\footnotesize $e_{m-1}^{w}$}
  \put(105,135)	{\footnotesize $e_m^{w}$}
  \put(270,0)	{\includepic{28b}}
  \put(252,130)	{\fbox{$D'$}}
  \put(287,138)	{\footnotesize $e_{m+1}'$}
  \put(248,100)	{\footnotesize $e_{m+2}'$}
  \put(271,15)	{\footnotesize $e_{k+m}'$}
  \put(409,45)	{\footnotesize $e_1'$}
  \put(405,103)	{\footnotesize $e_{m-1}'$}
  \put(375,135)	{\footnotesize $e_{m}'$}
  \put(339,73)	{\footnotesize ${v'}$}
  } }
For definiteness, assume that $e_{vw}$ is the distinguished edge of $v$ and $w$,
and that $e'_{m+1}$ is the distinguished edge of $v'$. Equivalence of $\ws$ and
$\ws'$ holds if $\assb'(e_{i}') \eq \assb(e_i^{w})$ for $i\eq 1,2,...\,,m$ and
$\assb'(e_{m+j}') \eq \assb(e_j^{v})$ for $j\eq 1,2,...\,,k$, as well as
  \eqpic{removeedge}{240}{59}{
  \put(5,65)   {$\assp'(v')~=$}
  \put(75,10)	{\Includepic{32e}
  \put(-5,-10)	{\footnotesize $X_{e'_{m+1}}\vv$}
  \put(43,34.5)	{\footnotesize $\varkappa_v$}
  \put(111,34.5){\footnotesize $\varkappa_w$}
  \put(25,116)	{\footnotesize $X_{e'_{m+2}}\vvi$}
  \put(44,116)	{\footnotesize $\cdots$}
  \put(58,116)	{\footnotesize $X_{e'_{m+k}}\vvi$}
  \put(97,116)	{\footnotesize $X_{e'_{1}}\vvi$}
  \put(115,116)	{\footnotesize $\cdots$}
  \put(129,116)	{\footnotesize $X_{e'_{m}}\vvi$}
  } }
The second case, i.e.\ $v\,{\notin}\,\P$ being two-valent and $w\iN\P$, can be
treated in a similar way; we omit the details.
  \item[\nxt]
\emph{Removal of a boundary \fusion\ vertex:}
If a boundary component stretches from a vertex $v$ to a vertex $w$ in such a way
that either $v,w\,{\notin}\, \P$ or else $w\iN\P$ and $v\,{\notin}\,\P$ is two-valent,
it can be removed if combined with a relabeling completely analogous to the removal
of an internal defect edge. Again this case is covered by considering $A$-$\one$-bimodules
on edges lying on the boundary.
  \item[\nxt]
\emph{Collapse of a defect bubble:} A collection of defect lines forming a `bubble'
and involving at most one insertion point can be replaced by a single vertex.
Invoking equivalences already treated it is sufficient to consider the situation
that $\ws$ and $\ws'$ differ only in a region of the form indicated in the following
figure, in which $v_3\iN\P$.
  \eqpic{defbubblefig}{340}{54}{ \setlength\unitlength{.842pt}
  \put(0,0)	{\includepic{32c}}
  \put(-11,130)	{\fbox{$D$}}
  \put(-6,100)	{\footnotesize $e_{1}$}
  \put(65,93)	{\footnotesize $e_{21}$}
  \put(103,56)	{\footnotesize $e_{23}$}
  \put(39,53)	{\footnotesize $e_{31}$}
  \put(135,103)	{\footnotesize $e_{2}$}
  \put(44,80)	{\footnotesize $v_1$}
  \put(100,90)	{\footnotesize $v_2$}
  \put(81,29)	{\footnotesize $v_3$}
  \put(260,0)	{\includepic{32d}}
  \put(249,130)	{\fbox{$D'$}}
  \put(254,100)	{\footnotesize $e_{1}'$}
  \put(395,103)	{\footnotesize $e_{2}'$}
  \put(326.6,76){\footnotesize ${w}$}
  }
The world sheets $\ws$ and $\ws'$ are equivalent if $\assb'(e_1')\eq \assb(e_1)$,
$\assb'(e_2') \eq \assb(e_2)$ and
  \eqpic{collapsebubble}{190}{68}{
  \put(0,70)   {$\assp'(w)~=$}
  \put(75,10)	{\Includepic{32f}
  \put(2,-10)	{\footnotesize $U_w$}
  \put(72,-10)	{\footnotesize $U'_w$}
  \put(45,-10)	{\footnotesize $X_{e_{2}}$}
  \put(48.5,45)	{\begin{turn}{90}\footnotesize $X_{e_{23}}$\end{turn}}
  \put(44,29.5)	{\footnotesize $\varkappa_{v_2}$}
  \put(55,82)	{\footnotesize $\phi_{v_3}$}
  \put(5,68)	{\footnotesize $X_{e_{21}}$}
  \put(13,93)	{\footnotesize $\varkappa_{v_1}$}
  \put(6,137)	{\footnotesize $X_{e_{1}}$}
  \put(63,100)	{\footnotesize $X_{e_{31}}$}
  } }
The morphism $\assp'(w)$ is the same as the morphism obtained
from the domain $D$. Furthermore $\assp'(w)$ belongs to the space
$\Hom_{A_1|A_2}(U_w\oti^+X_{e_{2}}\oti^-U_w',X_{e_1})$ and is therefore indeed an
adequate morphism to be used in $D'$.
 \\
Note that in the particular situation that the edges $e_{31}$ and $e_{23}$ are
labeled by transparent defects, this equivalence relates
a bulk field to a defect field.
Also, we can regard the situation that $v_3$ is absent as a special case of
\eqref{collapsebubble}; thereby the equivalence also describes the removal of
a defect bubble without insertion vertex.
 \item[\nxt]
\emph{Collapse of a boundary defect bubble:}
Consider the situation that the two end-points of an edge $e$ are located on the same
boundary component in such a way that exactly one boundary field insertion is
present on the segment of the boundary between $u$ and $v$. Then there is an
equivalent world sheet for which the resulting bubble and the boundary field
are replaced by just a single boundary field. This is seen in complete analogy with
the defect bubble case just described.
  \end{itemize}

For notational convenience, definite orientations of the edges in the equivalences
listed above have been chosen. But making these choices does not constitute any
restriction. Indeed, in case an edge $e$ has orientation opposite to the one chosen
above, we can invoke the equivalence of world sheets that differ only in the
orientation of a single edge $e\iN\D$ so as to arrive at one of the situations
described above.

\begin{remark}\label{rem:equiv_moves}
All the relations listed above are indeed equivalence relations in the technical
sense, and are thus in particular symmetric with respect to the two world sheets
$\ws$ and $\ws'$ involved. Below we will, as already indicated by the chosen
designations of these equivalences, often think of an equivalence of the type
depicted in \eqref{locfusion}, \eqref{removeedgefig} and \eqref{defbubblefig}
as an active `move'. By this we mean that starting
from a world sheet of the type denoted by $\ws$, we replace it by an equivalent
one of the type denoted by $\ws'$. Even though we deal with equivalence relations,
in terms of these moves the description is asymmetric: while
in the pictures \eqref{locfusion}, \eqref{removeedgefig} and \eqref{defbubblefig}
we can always move from left to right, we can\emph{not} always move from right to
left. Concretely, we can `un-fuse' defects only if the decorations on
the right hand side are of the specific form stated in the respective equivalences.
Also, while removal of internal network vertices can for all values of $k,m \ge 1$
be performed in both directions, moving in picture \eqref{removeedgefig} from right
to left constitutes an independent `move' only if $k\eq m \eq 1$, since otherwise
one can apply fusion of defect lines.
\end{remark}


\section{Boundary defect factorization}\label{sec:bdfact}

Unlike bulk factorization, boundary factorization in the presence of defects is
not an essentially new feature. Rather, it can entirely be reduced to
Theorem 2.9 of \cite{fjfrs}. To see this, consider a rectangular region of a world sheet
$\ws$ containing two intervals of the boundary, and with a finite number of defect lines
running parallel to the boundary and intersecting transversally a \emph{factorization line},
i.e.\ the image of an embedding $f\colon [0,1]\To \tws$ of an interval, stretching between
the two boundary intervals.
(In the figure below the factorization line is indicated by a dashed line.)
Using recursively local boundary-defect fusion, we may replace this world sheet
with an equivalent one for which the factorization line $f([0,1])$ does not intersect
any defect line. In pictures:
  \eqpic{WSbnd}{300}{76}{
  \put(0,0)	{\Includepic{36a}}
  \put(140,83)	{\large$\longmapsto$}
  \put(200,0)	{\Includepic{36b}}
  }
This kind of equivalence  reduces factorization along $f$ to the conventional boundary
factorization as described by Theorem 2.9 of \cite{fjfrs}.

Albeit thus not an independent novel feature, it seems to us nevertheless to be of
interest to establish a genuine boundary defect factorization result. The proofs of
the pertinent assertions below are, however, fully parallel
to the corresponding proofs in \cite{fjfrs} and will therefore be omitted.

Let $A$ and $B$ be simple symmetric special Frobenius algebras in $\cC$, $M_l$
a left $B$-module, $M_r$ a left $A$-module, and $X$ an $A$-$B$-bimodule. Then by
$D \eq D(M_l,M_r,X,i,\psi_+,\psi_-,\chi_+,\chi_-)$ we denote the following
world sheet:
  \eqpic{DD2pt}{110}{59}{ \setlength\unitlength{.842pt}
  \put(10,10)	{\includepic{37b}}
  \put(77,157)	{\footnotesize $\psi_+$}
  \put(35,148)	{\footnotesize $\chi_+$}
  \put(77,3)	{\footnotesize $\psi_-$}
  \put(35,15)	{\footnotesize $\chi_-$}
  \put(153,80)	{\footnotesize $M_r$}
  \put(-6,80)	{\footnotesize $M_l$}
  \put(50,75)	{\footnotesize $X$}
  }
Thus the defect graph $\D$ of $D$ consists of four vertices and five edges.
Four of the edges are boundary edges, labeled by the modules $M_l$, $X\oti_B M_l$
(twice) and $M_r$, respectively, while the remaining one is an internal edge,
labeled by the bimodule $X$. Two of the vertices are insertion vertices, labeled
by $\psi_+\iN \Hom_A((X\oti_B M_l)\,\oti U_i,M_r)$ and by
$\psi_-\iN \Hom_A(M_r\,\oti\, U_\ib,X\oti_B M_l)$, respectively,
and the other two are \fusion\ vertices, labeled by
$\chi_+\iN\HomP_B(M_l,X\ve\oti(X\oti_B M_l))$ and by
$\chi_-\iN\HomP_A(X\oti_B M_l,X\oti M_l))$, respectively, and in all cases
the distinguished edge is the incoming boundary edge. The internal edge
is taken to be embedded by $\assd$ as a straight line stretching between the
two \fusion\ vertices. Denoting by $\{\psi_\alpha\}$ and $\{\varphi_\beta\}$ choices
of bases of the morphism spaces for the two boundary fields, the structure constants
$c^{\mathrm{bdef}}$ of the correlator are defined through
  \be
  C(D) = {(c_{M_l,M_r,X,p}^{\mathrm{bdef}})}_{\alpha\beta}\, Z(B_{p,\bar p}) \,,
  \ee
where $\psi_+ \eq \psi_\alpha$, $\psi_- \eq \varphi_\beta$, $\tilde\chi \eq \eTA{}$ and
$\chi \eq (\id_{X\ve} \,\oti\,\rTA{}) \cir (\tilde b_X\oti \id_{M_l})$ (with $\eTA{}$
and $\rTA{}$ the appropriate embedding and restriction morphisms), and where the cobordism
$B^-_{p \bar p}$ is the three-ball shown in picture (2.34) of \cite{fjfrs}.

Let now $\ws$ be a world sheet containing a rectangular region
with two boundary intervals labeled by $M_r$ and $M_l$ and
(invoking, if necessary, local fusion) a single defect line labeled by $X$
intersecting the factorization line $f$ transversally. Factorization along $f$
involves cutting along $f([0,1])$ and gluing back the top and bottom halves of
$D(M_l,M_r,X,U_p,\psi_\alpha,\varphi_\beta,\chi,\tilde\chi)$,
as shown in the following figure:
  \eqpic{BdDfact}{400}{120}{ \put(-10,-4){
  \put(0,45)	{\Includepic{38d}}
  \put(89,117)	{\footnotesize $M_r$}
  \put(0,117)	{\footnotesize $M_l$}
  \put(55,155)	{\footnotesize $X$}
  \put(118,127) {$ \longmapsto $}
  \put(150,30)	{\Includepic{38e}}
  \put(240,100)	{\footnotesize $M_r$}
  \put(150,100)	{\footnotesize $M_l$}
  \put(205,88)	{\footnotesize $X$}
  \put(240,150)	{\footnotesize $M_r$}
  \put(150,150)	{\footnotesize $M_l$}
  \put(205,166)	{\footnotesize $X$}
  \put(271,127) {$ \longmapsto $}
  \put(300,0)	{\Includepic{38f}}
  \put(390,70)	{\footnotesize $M_r$}
  \put(300,70)	{\footnotesize $M_l$}
  \put(355,56)	{\footnotesize $X$}
  \put(390,185)	{\footnotesize $M_r$}
  \put(300,185)	{\footnotesize $M_l$}
  \put(355,189)	{\footnotesize $X$}
  \put(355,133)	{\footnotesize $\varphi_\beta$}
  \put(360,120)	{\footnotesize $\psi_\alpha$}
  \put(330,132)	{\footnotesize $\tilde\chi$}
  \put(330,120)	{\footnotesize $\chi$}
  } }
We denote the resulting world sheet by $\ws_{p,\alpha\beta}$.

A corresponding gluing homomorphism $\GLL p{\phantom{q}}^{\mathrm{bnd}}$ can be
defined analogously to the gluing homomorphism $\GLL pq$ that we used for bulk
factorization; it is given in formula (2.37) of \cite{fjfrs}. We can then state the

\begin{thm} Boundary defect factorization:\label{thm:bdfact}
The correlators of the world sheets $\ws$ and $\ws_{p,\alpha\beta}$ are related as
  \be\label{bdefthm}
  C(\ws) = \sum_{p\in\I} \sum_{\alpha,\beta} \dim(U_p)\, {(c^{\mathrm{bdef}\ -1}
  _{M_l,M_r,X,p})}_{\beta\alpha}^{}\, \GLL p{\phantom q}^{\mathrm{bnd}}
  \big( C(\ws_{p,\alpha\beta})\big) \,.
  \ee
\end{thm}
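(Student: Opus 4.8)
\noindent\emph{Proof proposal.}
The plan is to reduce \eqref{bdefthm} to the defect-free boundary factorization theorem, Theorem 2.9 of \cite{fjfrs}, by means of the local boundary-defect fusion equivalence introduced in Section \ref{sec:equivfus}. The point is that this equivalence is an \emph{equality} of correlators, so it introduces no extra numerical factors; all the nontrivial data in \eqref{bdefthm} then come directly from the defect-free statement, with the bimodule $X$ playing, on the level of the glued-in two-point disk, the role that a Frobenius algebra plays in \cite{fjfrs}. Alternatively one could rerun the surgery/TFT argument of \cite{fjfrs} verbatim with $X$ in place of a Frobenius algebra, exactly as was done for bulk factorization in Section \ref{sec:factorizationproof}; I expect the reduction to be the shorter route and will describe it.

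\noindent
First I would apply local boundary-defect fusion (recursively, if more than one defect crosses $f$; a single $X$ suffices after ordinary local fusion) in the rectangular region of $\ws$ that contains the factorization line. Since the defect labeled by $X$ runs parallel to the two boundary intervals and meets $f$ transversally, fusing $X$ into the adjacent boundary component in a neighbourhood of $f$ yields an equivalent world sheet $\hat\ws$, with $C(\hat\ws)\eq C(\ws)$, whose factorization line no longer meets any defect line; this is precisely the passage displayed in \eqref{WSbnd}. I would then invoke Theorem 2.9 of \cite{fjfrs} to factorize $\hat\ws$ along $f$. This expresses $C(\hat\ws)$ as a sum over $p\iN\I$ and over bases of the relevant boundary-field morphism spaces of $\dim(U_p)$ times the inverse disk boundary two-point structure constants times $\GLL p{\phantom q}^{\mathrm{bnd}}$ applied to the correlator of the world sheet obtained by cutting along $f$ and gluing back the two halves of the boundary two-point disk. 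As for bulk factorization (end of Section \ref{BF_outline}), the relevant space of conformal blocks, and hence the gluing homomorphism $\GLL p{\phantom q}^{\mathrm{bnd}}$ of formula (2.37) of \cite{fjfrs}, does not depend on whether the defect fused into the boundary is trivial, so the same map appears in both settings.

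\noindent
Second I would undo the fusion in every summand. Reversing the local boundary-defect fusion that produced $\hat\ws$ turns each glued-in pair of half-disks back into the top and bottom halves of $D(M_l,M_r,X,U_p,\psi_\alpha,\varphi_\beta,\chi,\tilde\chi)$, with $\tilde\chi$ the embedding morphism $\eTA{}$ and $\chi \eq (\id_{X\ve}\oti\rTA{})\cir(\tilde b_X\oti\id_{M_l})$ as in \eqref{BdDfact}; the result is exactly the world sheet $\ws_{p,\alpha\beta}$. Simultaneously the defect-free disk two-point structure constants entering Theorem 2.9 of \cite{fjfrs} are, after the very same fusion equivalence applied to the two-point disk, identified with $c^{\mathrm{bdef}}_{M_l,M_r,X,p}$, since the latter was \emph{defined} around \eqref{DD2pt} through $C(D)$ with precisely this choice of $\chi,\tilde\chi$ and these bases. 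Combining the three steps gives \eqref{bdefthm}.

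\noindent
The main obstacle is the bookkeeping in the last step: one must verify that the embedding and restriction morphisms of the retract $(X\oti_B M_l, \eTA{}, \rTA{})$ produced by local boundary-defect fusion coincide on the nose with the morphisms $\chi_\pm$ used in the definition of $c^{\mathrm{bdef}}$, and that the bases $\{\psi_\alpha\}$, $\{\varphi_\beta\}$ and their duals are transported consistently through the fusion, so that it is the matrix ${(c^{\mathrm{bdef}\,-1}_{M_l,M_r,X,p})}_{\beta\alpha}^{}$ --- with this index ordering --- rather than a reordered or basis-transformed variant of it, that appears. Everything else is a direct transcription of the defect-free argument, and if one prefers one can avoid the reduction altogether and carry out the surgery proof directly, the cobordisms differing from the defect-free ones only inside an embedded handlebody exactly as in the bulk case.
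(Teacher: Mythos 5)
Your proposal is correct and follows essentially the same route as the paper: the paper itself motivates Theorem \ref{thm:bdfact} by the local boundary-defect fusion reduction displayed in \eqref{WSbnd}, omits the detailed proof as ``fully parallel'' to that of Theorem 2.9 of \cite{fjfrs}, and records in the remark immediately after the theorem exactly the identification $\big(c^{\mathrm{bdef}}_{M_l,M_r,X,p}\big)_{\alpha\beta} \eq \big(\mathrm{c}^{\mathrm{bnd}}_{X\oti_B M_l,M_r,p}\big)_{\alpha\beta}$ that your bookkeeping step requires. Your write-up simply makes explicit the steps the paper leaves implicit, including the alternative direct surgery argument that the paper cites as its nominal (omitted) proof.
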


\begin{remark}
Fusing $X$ to the boundary in both $\ws_{p,\alpha\beta}$ and
$D(M_l,M_r,X,U_i,\psi_\alpha,\varphi_\beta,\chi,\tilde\chi)$, Theorem \ref{thm:bdfact}
reduces to Theorem 2.9 of \cite{fjfrs}, and thus these two statements are equivalent.
In particular we then have $\big(c^{\mathrm{bdef}}_{M_l,M_r,X,p}\big)_{\!\alpha\beta}
\eq \big(\mathrm{c}^{\mathrm{bnd}}_{X\oti_B M_l,M_r,p}\big)_{\!\alpha\beta}$, with
the two-point structure constants $\mathrm{c}^{\mathrm{bnd}}$ from
\cite[(2.26)]{fjfrs} defined using the basis morphisms $\psi_\alpha$ and $\varphi_\beta$.
\end{remark}

\begin{remark}
The explicit form of $\ws_{p,\alpha\beta}$ and
$D(M,N,X,U,\psi_+,\psi_-,\varkappa_+,\varkappa_-)$ is correlated with the choices
we make when describing world sheets. Had we allowed for more general boundary field
insertions corresponding to multivalent vertices with additional incident edges,
the boundary defect factorization would be conveniently described with the help of
corresponding
boundary-defect fields. However, the morphisms labeling such more general field
insertions can be expressed as the composition of morphisms labeling conventional
boundary fields and morphisms labeling boundary \fusion\ vertices. This means
that there does not arise any new structure related to factorization
when allowing for more general field insertions.
\end{remark}


\section{Fundamental correlators}\label{sec:fundcorr}

\subsection{Modular covariance}\label{sec:modcov}

The behavior of correlators under homeomorphisms of world sheets with defects
is quite similar to the behavior in the absence of defects that has been
studied in \cite{fjfrs}. Let $\ws$ and $\ws'$ denote world sheets with
decoration data $\D$, $\assa$, ...  and $\D'$, $\assa'$, ...\,, respectively.
There is a natural notion of \emph{isomorphism} of world sheets, which
involves the \essential\ equivalence of world sheets (recall the latter, and in
particular the definition of the opaque subgraph
$\DX$ of $\D$, from Section \ref{ssec:essequiv}):

\begin{defi}\label{def:wsiso}
An \emph{isomorphism $h\colon \ws \,{\stackrel\simeq\to}\, \ws'$ of world sheets}
is an orientation preserving homeomorphism $h\colon \tws \,{\stackrel\simeq\to}\,
\tws'$ between world sheets with \essentially\ equivalent decorated defect graphs
that preserves the decoration data when restricted to the opaque subgraph with
insertions and for which $h^{-1}$ has these properties as well. More explicitly:
\def\leftmargini{1.45em}~\\[-1.75em]\begin{itemize}\addtolength{\itemsep}{-6pt}%
  \item[\nxx]
The two graphs satisfy $\,\assd'|_{\DX'}^{} \eq h\circ\assd|_{\DX}^{}$;
  \item[\nxx]
$\assd'$ and $h\cir \assd$ determine the same arc germs on insertion points
not belonging to $\DX'$ or $\DX$;
  \item[\nxx]
$\mathrm{y}'|_{\DTD}^{} \eq \mathrm{y}|_{\DTD}^{}$ for the decoration data
$\,\mathrm{y} \eq \assb, \assc, \assp$; and
  \item[\nxx]
$\assa(f) \eq \assa'(h(f))$ for any face $f$ of $\ws$.
\end{itemize}
\end{defi}

The natural notion of \emph{homotopy} of world sheets is homotopy relative to the
structure that is preserved under an isomorphism of world sheets.
In particular, every automorphism of a world sheet is required to preserve
$\assd(\DTD) \,{\subseteq}\, \ws$ as a \emph{decorated} set, and homotopy is defined
relative to this property. The set of
homotopy classes, with respect to this notion, of isomorphisms
$\ws \,{\stackrel\sim\to}\, \ws'$ of world sheets is denoted $\Mapw(\ws,\ws')$.
For $\ws' \eq \ws$ we obtain the \emph{mapping class group}, to be denoted
by $\Mapw(\ws)$.

We stress that an isomorphism is not required to preserve \emph{all} data of a
world sheet; this simplifies the discussion of covariance and mapping class group
invariance of correlators. Clearly, by forgetting the datum of defect lines,
every isomorphism of world sheets becomes an isomorphism of world sheets
without defects. As a consequence, various results established in a setting
not involving defects can be adopted to the present setting with only minor
modifications. Denoting by $\pi_\ws\colon \widehat\ws\To\ws$ the natural
projection, an isomorphism $f\colon \ws\To \ws'$ of world sheets has a unique
orientation preserving lift $\hat f\colon \widehat\ws\To\widehat\ws'$ to the
doubles, defined as the orientation preserving diffeomorphism that satisfies
  \be\label{lifttodouble}
  f\circ \pi_\ws = \pi_{\ws'}\circ \hat f \,.
  \ee
The map $\hat f$ preserves marked arcs and Lagrangian subspaces and is thus an
isomorphism of extended surfaces \cite{fjfrs}. Denote by $\hat f_\sharp\colon
\bl(\widehat\ws)\To \bl(\widehat\ws')$ the linear isomorphism that by the mapping
cylinder is associated to $\hat f$. If $f \,{\sim}\, g$ as isomorphisms of
world sheets, then $\hat f \,{\sim}\,\hat g$ as homeomorphisms of extended surfaces,
and hence $\hat f_\sharp \eq \hat g_\sharp$ by the axioms of topological field
theory.

It follows that there is an isomorphism
  \be\label{Mapw-iso-Maprel}
  \Mapw(\ws) \,\cong\, \mathrm{Map}(\tws;[\assd(\DTD)]) \;\subset\,
  \mathrm{Map}(\tws;\assd(\P)) \,,
  \ee
where $\mathrm{Map}(\tws;[\assd(\DTD)])$ and $\mathrm{Map}(\tws;\assd(\P))$ denote
the groups of homotopy classes of orientation preserving homeomorphisms of $\tws$
relative to the decorated isotopy class $[\assd(\DTD)]$ and relative to the set
$\assd(\P)$, respectively. In other words, we can identify the mapping class group
of a world sheet containing defects with the subgroup of the mapping class group of
the underlying surface with marked points and without defects that preserve the
isotopy class and decorations of the defects.

We are now in a position to make the following assertions:

\begin{thm}\label{thm:modcov}
Covariance of correlators with defects: We have
  \be\label{modcov}
  C(\ws') = \hat f_\sharp\big(C(\ws)\big)
  \ee
for any $[f]\iN \Mapw(\ws,\ws')$.
\end{thm}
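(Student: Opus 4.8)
The plan is to reduce the statement to the TFT construction of correlators together with the covariance properties already established for world sheets without defects. Recall that the correlator $C(\ws)$ is defined as the invariant $Z(\M_\ws)$ that the three-dimensional TFT assigns to the connecting manifold $\M_\ws$, a cobordism $\emptyset \To \widehat\ws$ carrying an embedded ribbon graph built from the decoration data of $\ws$. The first step is to show that an isomorphism $f\colon \ws \,{\stackrel\simeq\to}\, \ws'$ induces a homeomorphism of connecting manifolds $\M_f\colon \M_\ws \,{\stackrel\simeq\to}\, \M_{\ws'}$ that restricts to the lift $\hat f\colon \widehat\ws\To\widehat\ws'$ on the boundary and that respects the embedded ribbon graphs. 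Here I would use \eqref{lifttodouble}, the fact that $\M_\ws$ is a quotient of $\widehat\ws\Times[-1,1]$, and the observation that $f$ preserves the decoration data on the opaque subgraph with insertions $\DTD$; the ribbon graph in $\M_\ws$ is built precisely from that data, so $\M_f$ carries the ribbon graph of $\M_\ws$ to that of $\M_{\ws'}$ on the nose.

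Second, I would appeal to \essential\ equivalence. Since $f$ is only required to preserve the decorations on $\DTD$ and not on the transparent subgraph $\TD$, the pushforward of the ribbon graph of $\M_\ws$ under $\M_f$ may differ from the canonical ribbon graph of $\M_{\ws'}$ in the transparent part. But by Lemma \ref{lem:intrinsiceq} (and the fact that the moves underlying \essential\ equivalence, such as sliding of structure vertices and removal of transparent tadpoles, become tautological at the level of TFT invariants) these two ribbon graphs yield the same invariant. Thus $Z(\M_f(\M_\ws)) \eq Z(\M_{\ws'})$. Combined with the first step this gives the chain $C(\ws') \eq Z(\M_{\ws'}) \eq Z(\M_f(\M_\ws))$, and it remains to relate the right-hand side to $\hat f_\sharp(C(\ws))$.

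Third, I would invoke the gluing axioms of the TFT: composing $\M_\ws$ with the mapping cylinder of $\hat f$ produces a cobordism homeomorphic (rel boundary, as ribbon cobordisms) to $\M_f(\M_\ws)$, and the invariant of the mapping cylinder of $\hat f$ acts on $\bl(\widehat\ws)$ exactly as $\hat f_\sharp$. Hence $Z(\M_f(\M_\ws)) \eq \hat f_\sharp\big(Z(\M_\ws)\big) \eq \hat f_\sharp\big(C(\ws)\big)$, which is \eqref{modcov}. Finally, well-definedness on homotopy classes follows because $f\,{\sim}\,g$ implies $\hat f\,{\sim}\,\hat g$ as homeomorphisms of extended surfaces, whence $\hat f_\sharp \eq \hat g_\sharp$ by the TFT axioms, as already noted in the text preceding the theorem.

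The main obstacle is the bookkeeping in the second step: one must verify carefully that the \emph{only} way the pushed-forward ribbon graph $\M_f(\M_\ws)$ can fail to coincide with the canonical ribbon graph of $\M_{\ws'}$ is through the transparent subgraph and choices of distinguished edges, so that the discrepancy is always of a type covered by \essential\ equivalence and the equivalences involving only the assignments $\assa,\assb,\assc,\assp$. This amounts to checking that Definition \ref{def:wsiso} has been set up so that every datum not literally preserved by $h$ is one whose change does not affect the TFT invariant — i.e.\ that the notion of world-sheet isomorphism is exactly matched to the equivalences of Section \ref{sec:eqWS}. Once that compatibility is in hand, the proof is a formal consequence of the TFT axioms, and I would present it as such rather than redoing the surface-topology arguments of \cite{fjfrs}.
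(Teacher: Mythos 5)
Your proposal is correct and follows essentially the same route as the paper's proof: both realize $\hat f_\sharp$ as gluing the mapping cylinder of $\hat f$ to $\M_\ws$ (using the vanishing gluing anomaly), push $f$ through the connecting manifold to obtain the connecting manifold of a world sheet differing from $\ws'$ only in the transparent subgraph, and then invoke independence of transparent subgraph (via Lemma \ref{lem:intrinsiceq}) to conclude. The only cosmetic difference is that the paper disposes of the discrepancy between $\DX$ and $\DX'$ up front by reducing to $\DX\eq\DX'$, whereas you absorb it into the comparison of ribbon graphs at the end.
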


\begin{cor}\label{cor:modinv}
Mapping class group invariance of correlators with defects: We have
  \be\label{modinv}
  C(\ws) = \hat f_\sharp\big( C(\ws)\big)
  \ee
for any mapping class $[f]\iN\Mapw(\ws)$.
\end{cor}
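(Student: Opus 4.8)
The plan is to derive Corollary \ref{cor:modinv} directly from Theorem \ref{thm:modcov} by specializing to $\ws' \eq \ws$. By construction the mapping class group $\Mapw(\ws)$ is nothing but $\Mapw(\ws,\ws)$, the set of homotopy classes of isomorphisms $\ws \,{\stackrel\sim\to}\, \ws$ of world sheets; hence any $[f]\iN\Mapw(\ws)$ is in particular an element of $\Mapw(\ws,\ws')$ with $\ws'\eq\ws$, so that Theorem \ref{thm:modcov} applies verbatim.

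Concretely, I would proceed as follows. Fix $[f]\iN\Mapw(\ws)$ and insert $\ws'\eq\ws$ into the covariance identity \eqref{modcov}; this gives $C(\ws)\eq\hat f_\sharp\big(C(\ws)\big)$, which is precisely \eqref{modinv}. One should note that the right-hand side is well defined on homotopy classes: as recorded in Section \ref{sec:modcov}, if $f\,{\sim}\,g$ as isomorphisms of world sheets then $\hat f\,{\sim}\,\hat g$ as homeomorphisms of extended surfaces, whence $\hat f_\sharp\eq\hat g_\sharp$ by the axioms of the three-dimensional TFT. Under the identification \eqref{Mapw-iso-Maprel}, the corollary then says exactly that $C(\ws)$ is invariant under the subgroup of $\mathrm{Map}(\tws;\assd(\P))$ consisting of those mapping classes that fix the decorated isotopy class $[\assd(\DTD)]$.

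I do not expect any genuine obstacle here: all the substance lies in Theorem \ref{thm:modcov}, whose proof in turn rests on the lift $\hat f$ of an isomorphism of world sheets to the complex double \eqref{lifttodouble} together with the behavior of the TFT invariant of the connecting manifold under such a lift. The only point that needs a moment's care is the bookkeeping around Definition \ref{def:wsiso}: since an isomorphism of world sheets is only required to preserve the data carried by the opaque subgraph with insertions (and the arc germs at the remaining insertion points), one must make sure that this class of maps is closed under composition and inversion, so that $\Mapw(\ws)$ really is a group and the substitution $\ws'\eq\ws$ in Theorem \ref{thm:modcov} is legitimate; this, however, is already implicit in the constructions preceding the corollary.
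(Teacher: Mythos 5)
Your derivation is correct and matches the paper's (implicit) reasoning exactly: the corollary is stated without a separate proof precisely because it is the specialization $\ws'\eq\ws$ of Theorem \ref{thm:modcov}, with well-definedness on homotopy classes guaranteed by $\hat f_\sharp\eq\hat g_\sharp$ for $f\,{\sim}\,g$. Nothing further is needed.
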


\begin{remark} ~\\[2pt]
(i)
Restricting to world sheets with empty opaque subgraph $\DX$ these assertions
reduce to the statements of Theorem 2.2 and Corollary 2.3 of \cite{fjfrs}. For
$\DX \,{\neq}\, \emptyset$, i.e.\ in the presence of non-trivial defects, the
correlators are, however, only invariant under a proper \emph{subgroup} of
the full mapping class group of the underlying surface with marked points.
\\[3pt]
(ii)
The correlator $C(\ws)$ of a given world sheet $\ws$ may be invariant under
a larger subgroup of $\mathrm{Map}(\tws;\assd(\P))$ than implied by Theorem
\ref{thm:modcov} and the isomorphism \eqref{Mapw-iso-Maprel}.
Consider, for instance, a world sheet $\ws$ with an opaque subgraph $\DX$
that has no vertices and every edge of which is decorated by one and the same
\emph{invertible} $A$-bimodule $X$. If $\dim(X) \eq \dim(A)$ (as is the case
for any unitary CFT), then the correlator $C(\ws)$ is invariant under the the
subgroup of $\mathrm{Map}(\tws;\assd(\P))$ that preserves the homology class
of $\assd(\DX)$.
The correlators of closed world sheets of this type provide endomorphisms of
the spaces $\bl(\widehat{\ws})$ of conformal blocks, intertwining the action of
proper subgroups of $\mathrm{Map}(\widehat{\ws})$ containing the Torelli group.
\\[3pt]
(iii)
We have $\hat f_\sharp\cir \hat g_\sharp \eq (\widehat{f{\circ} g})_\sharp$, since
\cite[footn.\,5]{fjfrs} the relevant gluing anomalies vanish. It follows that the
correlators are invariant under a \emph{genuine}, rather than only projective,
action of the relevant mapping class groups.
\end{remark}

The basic result needed for establishing Theorem \ref{thm:modcov} is independence
of transparent subgraph, which was introduced in the proof of Lemma
\ref{lem:intrinsiceq} and which is proven in Appendix \ref{app:iots}.
\\[4pt]
\noindent {\it Proof of Theorem \ref{thm:modcov}.}
\\[1pt]
Note that, since the defect graphs $\D$ and $\D'$ are intrinsically
equivalent (see Definition \ref{def:wsiso}), by invoking
Lemma \ref{lem:intrinsiceq} it suffices to consider the case $\DX \eq \DX'$.
As already stated, the gluing anomaly vanishes, so the right hand side
of \eqref{modcov} is obtained as the invariant of the three-manifold that
results from gluing the mapping cylinder of $\hat{f}$ to the connecting manifold
of $\ws$. Moreover, the so obtained extended three-manifold is homeomorphic,
via a homeomorphism that restricts to the identity on the boundary, to the one
obtained by pushing $\hat{f}$ through the connecting manifold. The latter is
the connecting manifold of a world sheet $\ws'''$ differing from $\ws'$ only
in the transparent subgraph $\TD'''$ and its embedding $\assd'''(\TD''')$,
compare formula (3.12) of \cite{fjfrs}. Independence of transparent subgraph now
immediately implies the equality \eqref{modcov} and thus completes the proof.
\endofproof


\subsection{Fundamental world sheets}\label{sec:fundws}

Taking world sheets in the sense of Definition \ref{def:ws} as objects, and isomorphisms
in the sense of Definition \ref{def:wsiso} as morphisms, we obtain a \emph{category}
of world sheets (all morphisms are invertible, so this is in fact a groupoid).
This category is monoidal, with the tensor product of objects given by
disjoint union and the tensor unit being the empty set.
The processes of bulk and boundary factorization constitute
additional operations on world sheets. It is convenient to think of morphisms
and factorizations as \emph{two types of arrows} in a larger structure, say of
morphisms as \emph{vertical} arrows and of factorizations as \emph{horizontal} ones.

The factorization and modular covariance theorems assign to any connected sequence
of vertical and horizontal arrows, say from $\ws$ to $\ws'$, a way of expressing
the correlator $C(\ws)$ in terms of $C(\ws')$, generically involving a summation
over several target world sheets $\ws'$ that differ only in their decoration data. By
employing the equivalence relations from Section \ref{sec:eqWS} we may furthermore
replace $C(\ws')$ with $C(\ws'')$ for a suitable
equivalent world sheet $\ws''$.\,%
 \footnote{~The types of manipulations (vertical and horizontal arrows) that
 are relevant here should have analogues outside the realm of rational CFT. However,
 some of the specific equivalences that we use, and as a consequence the set
 $\cS$ of fundamental world sheets at which we arrive, depend strongly on the
 properties of the representation category of the chiral algebra. In
 particular, the notion of fundamental world sheets as discussed here is not a
 purely topological one, and thus cannot be immediately relevant for the study
 of generators and relations in open/closed topological field theory with defects.}

Putting this information together,
we can now finally introduce a notion of \emph{fundamental world sheets}, and
correspondingly of \emph{fundamental correlators}. The purpose is to identify a
collection $\cS$ of (fundamental) world sheets of low genus, such that any arbitrary
correlator can be obtained from a disjoint union of world sheets in $\cS$ using
gluing and mapping class group covariance according to Theorems \ref{thm:bulkfac},
\ref{thm:bdfact}, and \ref{thm:modcov}. Making heavy use of the equivalences
discussed in Section \ref{sec:eqWS}, we will be able to end up with a
\emph{finite} set $\cS$ of fundamental correlators.

\medskip

In the absence of defect lines there is a finite set $\wcS$ of fundamental world
sheets of genus $0$ such that any world sheet can be connected, via a sequence of
vertical and horizontal arrows, to the disjoint union of a finite collection of
elements of $\wcS$ \cite[Sect.\,10]{fuRs10}. For oriented CFT the set $\wcS$
can be taken to consist of the following world sheets:

~\\[-2.85em]\begin{itemize}\addtolength{\itemsep}{-6pt}%
\item[\nxt]
      three bulk fields on the sphere;
\item[\nxt]
      three boundary fields on the disk;
\item[\nxt]
      one bulk and one boundary field on the disk.
\end{itemize}

In the following discussion we will include also separately \emph{one}-point functions,
even though they can be obtained as special cases of the correlators in $\wcS$,
namely by taking some of the insertions to be identity fields.
One-point correlators on the disk or on the sphere take values in (at most)
one-dimensional spaces of conformal blocks, irrespective of whether the world
sheet contains defects or not. Thus in particular such a correlator on a world
sheet \emph{with} defects is proportional to the corresponding correlator
\emph{without} defects.
        We will find that the constant of proportionality
        can be determined algorithmically (see Remark \ref{rem_algo} below).

In the rest of this subsection we establish one possible choice for the set $\cS$.

\begin{thm}\label{thm:cS}
The set $\cS$ of fundamental world sheets with defects can be taken to consist of
      ~\\[-1.75em]\begin{itemize}\addtolength{\itemsep}{-6pt}%
 \item[\nxt]
      three boundary fields on the disk $($without non-trivial defect lines$)$;
 \item[\nxt]
      three defect fields on a circular configuration of defect lines on the sphere;
 \item[\nxt]
      one boundary field and one disorder field on the disk.
  \end{itemize}
\end{thm}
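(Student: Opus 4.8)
The plan is to reduce an arbitrary world sheet $\ws$ to a disjoint union of world sheets belonging to $\cS$ by a finite sequence of the operations at our disposal --- bulk factorization (Theorem~\ref{thm:bulkfac}), boundary defect factorization (Theorem~\ref{thm:bdfact}) and modular covariance (Theorem~\ref{thm:modcov}) --- interleaved with the equivalences of Section~\ref{sec:eqWS}. The overall architecture follows the defect-free reduction of \cite[Sect.\,10]{fuRs10}: the three world sheets in $\cS$ are the natural defect counterparts of the three elements of $\wcS$ (a bulk field being the special case of a defect, resp.\ disorder, field whose adjacent defect lines are trivial, so that ``three defect fields on a circular defect configuration on the sphere'' and ``one boundary field and one disorder field on the disk'' specialize to ``three bulk fields on the sphere'' and ``one bulk and one boundary field on the disk''). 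What is new is the bookkeeping of the defect graph along the way.

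First I would perform the \emph{topological reduction}. By repeatedly applying bulk factorization along suitably chosen embedded circles one lowers the genus of each connected component of $\tws$ to zero and splits off sphere components, and by the argument of Section~\ref{sec:bdfact} --- which, after recursive local boundary fusion, is Theorem~2.9 of \cite{fjfrs} --- one reduces every component with non-empty boundary to disks. Before each cut, local (boundary-)defect fusion (Section~\ref{sec:equivfus}) is used so that the cutting circle, resp.\ factorization line, meets the defect graph in at most one point, which is exactly the hypothesis of Theorems~\ref{thm:bulkfac} and~\ref{thm:bdfact}; every such step rewrites $C(\ws)$ as a finite linear combination of images under the appropriate gluing homomorphism of correlators on world sheets whose underlying surface is strictly simpler, so this phase terminates at a disjoint union of spheres and disks, each carrying some decorated defect graph and some field insertions.

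Next comes the \emph{normalization of the defect graph} on each remaining sphere or disk. Here I would use the \essential\ equivalence of Lemma~\ref{lem:intrinsiceq} together with removal of internal \fusion\ vertices, local defect fusion, collapse of (boundary) defect bubbles, and independence of transparent subgraph (to add or delete trivial arcs at will). By an induction on, say, the total number of faces and edges of the defect graph one shows that on a sphere the decorated graph can be brought to the normal form of a single embedded circle, subdivided by insertion vertices into arcs carrying bimodule labels, each insertion vertex carrying defect- or disorder-field data --- or, if no non-trivial defect is present, to a sphere carrying only bulk fields; and that on a disk the boundary is subdivided by boundary-field insertion vertices, with at most one interior circular defect arc which, collapsed against the boundary, can be absorbed. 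The number of insertion vertices is then cut down by one more round of factorization: a normal-form sphere with $n\ge4$ insertions on its defect circle is factorized along a circle separating two consecutive insertions from the other $n-2$ and crossing the defect circle in two points (which are fused into one), so that Theorem~\ref{thm:bulkfac} expresses its correlator through correlators on a three-insertion normal-form sphere and an $(n-1)$-insertion one; iterating this, and treating the cases $n\le2$ as specializations of the three-insertion case just as one-point functions are absorbed in the defect-free reduction, one arrives at the second entry of $\cS$, while the analogous bulk- and boundary-factorization steps on disks produce the first and third entries (the purely defect-free subcases reproducing the reduction to $\wcS$).

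The main obstacle is the normalization step. One must verify that the moves of Section~\ref{sec:eqWS} really do suffice to strip an arbitrary decorated defect graph on a genus-zero surface of all its ``excess'' structure --- higher-valent \fusion\ vertices, edges joining two \fusion\ vertices, network bubbles and more complicated cycles --- leaving behind exactly one of the three listed normal forms, while transforming the bimodule labels and the (bi)module morphisms on vertices consistently, so that every intermediate object is still a legitimate world sheet in the sense of Definition~\ref{def:ws}. A secondary but essential point is that, because each factorization introduces two new disorder-field insertions, the induction must be arranged carefully --- first on the complexity of the surface, then, at fixed minimal surface complexity, on the number of insertion vertices --- so that it is well founded and terminates in the finite set $\cS$.
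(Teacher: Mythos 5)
Your overall strategy is the same as the paper's: reduce the topology and the number of insertions by bulk and boundary factorization (arranging, via local fusion, that each cut meets a single defect line), and then bring the residual defect graph on each low-genus piece into normal form using the equivalences of Section~\ref{sec:eqWS}. The organization differs in a minor way: the paper invokes the defect-free reduction to $\wcS$ from \cite[Sect.\,10]{fuRs10} wholesale, so that it only ever has to normalize defect graphs on five concrete world sheets with at most three insertions (the elements of $\wcS$ together with the one-point functions on the sphere and the disk), whereas you first normalize the graph on a sphere or disk with arbitrarily many insertions and only afterwards cut down the insertion number by factorizing the normal form. Both orderings work, and your remark that termination must be measured by the maximal number of insertions per component (since every cut adds two disorder fields) is the right way to see that the induction is well founded.

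The substantive issue is that the step you defer as ``the main obstacle'' --- verifying that the moves of Section~\ref{sec:eqWS} really reduce an arbitrary decorated defect graph on a genus-zero surface to the stated normal forms, with consistent relabeling throughout --- is not a side condition but is essentially the entire body of the paper's proof. The paper does not run a single induction on the number of faces and edges; instead, part (i) of its proof establishes a few general preliminary reductions (any two vertices joined by at most two edges, at most one trivalent \fusion\ vertex on the boundary between consecutive boundary insertions, removal of contractible loops by local fusion followed by bubble collapse), and part (ii) then carries out an explicit, pictorial, case-by-case reduction for each of the five candidate world sheets, in each case recording how the vertex morphisms transform so that the relation between correlators stays algorithmic (cf.\ Remark~\ref{rem_algo}). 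Also note that on the disk with a disorder field the interior defect line attached to that field cannot be ``absorbed'' into the boundary --- this is precisely why the third element of $\cS$ is genuinely new --- so your description of the disk normal form needs to retain that edge. As written, your proposal is a correct outline whose central verification remains to be supplied; completing it requires either those case analyses or an honest general induction replacing them.
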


\noindent
In particular, there are no new one-point correlators and no new correlators with
three boundary fields on the disk.
The proof of Theorem \ref{thm:cS} will occupy the rest of this section.

\smallskip

\begin{proof}
(i)\,
We observe that by invoking the results about bulk and boundary factorization
we can take as candidates for the world sheets in $\cS$ those in $\wcS$, with
insertions corresponding to vertices in $\P$, but complemented in all possible
ways by (embeddings $\assd$ of) arbitrary defect graphs $\D$. Not surprisingly,
the so obtained collection
$\widehat{\cS}$ is hugely redundant. Let us thus see how we may reduce
$\widehat{\cS}$ by making use of equivalent decorations of world sheets.
 \\
First note that with the help of defect and boundary fusion, followed by
removal of internal and boundary \fusion\ vertices, it is enough to consider
defect graphs $\D$ for which any two vertices are connected by either
$0$, $1$ or $2$ edges. By removing boundary \fusion\ vertices it is furthermore
enough to consider graphs having at most one boundary \fusion\ vertex
between any two boundary insertions. In addition, by defect fusion and
removing internal \fusion\ vertices,
any boundary \fusion\ vertex can be assumed to be trivalent.
\\
Another fact of which we will make frequent use below is the following.
Consider an edge $e$ that starts and ends at the same vertex, forming a loop
that can be contracted to the vertex without crossing other parts of the
defect graph. By first using local fusion and then collapsing the
defect bubble, the loop can be replaced by a \fusion\ vertex; in pictures,
  \Eqpic{remloop}{440}{18}{
  \put(0,0)     {\Includepicfj{25}{39a}}
  \put(130,22)	{\footnotesize $\longmapsto$}
  \put(160,0)	{\Includepicfj{25}{39c}}
  \put(290,22)	{\footnotesize $\longmapsto$}
  \put(320,0)	{\Includepicfj{25}{39b}}
  }
(ii)\, We now treat, one at a time, the various world sheets in $\wcS$,
complemented with arbitrary defect graphs without additional insertion vertices.
In the pictures below a dotted line indicates a transparent edge, while a
solid line indicates an edge with arbitrary decoration. A \fusion\ vertex is
indicated by a fat dot, while an insertion point is drawn as a dot located on
a small arrow. A shaded region with
brighter hue than the background contains an arbitrary embedded graph without
insertion points; we refer to such parts of the world sheet as \emph{\nregion s}.


          \pagebreak

\iline{One boundary field on the disk}

\noindent
Consider a world sheet that is a disk with one boundary field insertion and
with a defect graph $\D$ having an arbitrary number of \fusion\ vertices.
Since there is only one insertion on the boundary it suffices, as explained
in part (i) of the proof, to consider graphs with a single boundary \fusion\
vertex. As $\D$ is connected, we can
then further restrict to a graph with a single trivalent \fusion\ vertex in the
interior. Afterwards, removing the interior \fusion\ vertex leaves a loop attached
to the boundary. Finally, discarding this loop leaves a boundary \fusion\ vertex, which in turn can be removed at the cost of modifying the boundary insertion.
In pictures, this sequence of manipulations looks as follows:
  \eqpic{Disk1pt}{420}{110}{ \put(0,2){
  \put(0,135)   {\Includepicfj{25}{30a}}
  \put(45,233)	{\footnotesize $\psi$}
  \put(45,130)	{\footnotesize $\chi$}
  \put(112,179)	{$\longmapsto$}
  \put(10,0){
  \put(140,135)	{\Includepicfj{25}{30b}}
  \put(185,233)	{\footnotesize $\psi$}
  \put(185,130)	{\footnotesize $\chi$}
  }
  \put(262,179)	{$\longmapsto$}
  \put(20,0){
  \put(280,135)	{\Includepicfj{25}{30d}}
  \put(325,233)	{\footnotesize $\psi$}
  \put(325,130)	{\footnotesize $\chi$}
  }
    }
  \put(112,44)	{$\longmapsto$}
  \put(150,0){
  \put(0,5)	{\Includepicfj{25}{30e}}
  \put(45,103)	{\footnotesize $\psi$}
  \put(45,-1){\footnotesize $\chi'$}
  }
  \put(262,44)	{$\longmapsto$}
  \put(160,0){
  \put(140,5)	{\Includepicfj{25}{30c}}
  \put(185,103)	{\footnotesize $\psi'$}
  } }
In conclusion, every one-point boundary correlator with defects on the disk is given by
some one-point boundary correlator without any defects.


\iline{One field on the sphere}

\noindent
Next take a world sheet $\ws$ that is a sphere, with defect graph $\D$ being
arbitrary apart from the constraint $|\P|=1$. By applying a suitable isotopy
it can be assumed that the image $\assd(\D)$ is contained in a
contractible region of $\ws$. Furthermore, one may slide a planar graph around
the sphere, such that one deals with the situation shown in the second picture of
\eqref{Sphere1pt} below. We may, without loss of generality, also assume that the
graph contained in the \nregion\ is connected (if necessary, by replacing it with
an equivalent graph for which
appropriate edges of the disconnected components have been fused). It is therefore equivalent to a graph with a single vertex. After removing all loops we then
end up with a graph having a single two-valent vertex, as shown in the third
picture in \eqref{Sphere1pt}.
At this point we may introduce, by invoking insertion of transparent tadpoles and
independence of transparent subgraph, two new structure vertices, one on each edge,
and connected by a new transparent edge; the resulting situation is shown in the
fourth picture. Finally, the so obtained bubble containing the insertion vertex can
be replaced by a new insertion vertex. Altogether we thus have:
  \eqpic{Sphere1pt}{440}{111}{ \put(0,104){
  \put(0,25) {\Includepicfj{30}{31a}}
  \put(45,80)	{\footnotesize $\phi$}
  \put(125,79)  {\footnotesize $\longmapsto$}
  \put(155,25){\Includepicfj{30}{31b}}
  \put(200,80)	{\footnotesize $\phi$}
  \put(280,79)  {\footnotesize $\longmapsto$}
  \put(310,25){\Includepicfj{30}{31c}}
  \put(355,80)	{\footnotesize $\phi$}
  \put(390,80)	{\footnotesize $\varkappa$}
     }
  \put(125,54)  {\footnotesize $\longmapsto$}
  \put(155,0){\Includepicfj{30}{31d}}
  \put(200,55)	{\footnotesize $\phi$}
  \put(235,55)	{\footnotesize $\varkappa$}
  \put(280,54)  {\footnotesize $\longmapsto$}
  \put(310,0){\Includepicfj{30}{31e}}
  \put(355,55)	{\footnotesize $\phi'$}
  }
In conclusion, every one-point correlator with defects on the sphere is given
by a one-point correlator on the sphere without defects.


\iline{Three boundary fields on the disk}

\noindent
As already mentioned we can restrict to one trivalent \fusion\ vertex between
each pair of boundary insertions. Thus for a disk with three boundary fields
we may without loss of generality assume that there are (at most) three \fusion\
vertices on the boundary, each connected by a single edge to an arbitrary planar
graph in the \nregion\ of the disk, as indicated in the first picture of
\eqref{Disk3pt} below. Using, if necessary, further fusion of defects we
may also assume that the interior graph is itself connected. Removing \fusion\
vertices as well as loops, we end up with a single trivalent \fusion\ vertex in
the interior that is directly connected with each boundary \fusion\ vertex, as
shown in the second picture. The interior \fusion\ vertex may then be removed as
well, resulting in the world sheet shown in the third picture. Further use of
fusion and of removal of \fusion\ vertices relates the latter world sheet to one
for which two of the boundary insertions are located on a boundary defect bubble
(fourth picture). After removing those bubbles no internal
edges are left at all and we end up with three boundary fields on the disk:
  \eqpic{Disk3pt}{430}{104}{ \put(0,100){
  \put(0,25) {\Includepicfj{25}{33a}}
  \put(45,123)	{\footnotesize $\psi_1$}
  \put(95,73)	{\footnotesize $\psi_2$}
  \put(0,35)	{\footnotesize $\psi_3$}
  \put(85,100)	{\footnotesize $\chi_1^{}$}
  \put(45,20)	{\footnotesize $\chi_2^{}$}
  \put(-5,95)	{\footnotesize $\chi_3^{}$}
  \put(121,71)  {\footnotesize $\longmapsto$}
      } \put(120,100){
  \put(40,25)	{\Includepicfj{25}{33b}}
  \put(85,123)	{\footnotesize $\psi_1$}
  \put(135,73)	{\footnotesize $\psi_2$}
  \put(40,35)	{\footnotesize $\psi_3$}
  \put(125,100)	{\footnotesize $\chi_1^{}$}
  \put(85,20)	{\footnotesize $\chi_2^{}$}
  \put(35,95)	{\footnotesize $\chi_3^{}$}
  \put(83,75)	{\footnotesize $\varkappa$}
  \put(161,71)  {\footnotesize $\longmapsto$}
      } \put(240,100){
  \put(80,25)	{\Includepicfj{25}{33c}}
  \put(125,123)	{\footnotesize $\psi_1$}
  \put(175,73)	{\footnotesize $\psi_2$}
  \put(80,35)	{\footnotesize $\psi_3$}
  \put(165,100)	{\footnotesize $\chi_1^{}$}
  \put(125,20)	{\footnotesize $\chi_2'$}
  \put(75,95)	{\footnotesize $\chi_3^{}$}
      } \put(160,0){
  \put(-39,46)  {\footnotesize $\longmapsto$}
  \put(0,0)	{\Includepicfj{25}{33d}}
  \put(45,98)	{\footnotesize $\psi_1$}
  \put(95,48)	{\footnotesize $\psi_2$}
  \put(0,10)	{\footnotesize $\psi_3$}
  \put(90,70)	{\footnotesize $\chi_1^{}$}
  \put(93,30)	{\footnotesize $\chi_2''$}
  \put(19,0)	{\footnotesize $\chi_3'$}
  \put(-10,35)	{\footnotesize $\chi_4$}
  \put(124,46)  {\footnotesize $\longmapsto$}
      } \put(280,0){
  \put(40,0)	{\Includepicfj{25}{33e}}
  \put(85,98)	{\footnotesize $\psi_1$}
  \put(135,48)	{\footnotesize $\psi_2'$}
  \put(40,10)	{\footnotesize $\psi_3'$}
  } }
We thus see that every correlator of three boundary fields on the disk with defects
is given by a correlator of three boundary fields on the disk without defects.


\iline{One boundary field and one disorder field on the disk}

\noindent
Next consider the disk with one boundary field and one bulk field. Again we may
without loss of generality assume that there is exactly one trivalent
\fusion\ vertex on the boundary. A world sheet $\ws$ of this type is depicted
on the left side of the picture \eqref{Disk21pt1} below. The annular \nregion\
of this world sheet $\ws$ contains an arbitrary defect graph without insertion
vertices. Fusing all parallel edges of that graph that cross a given line segment
(say, the one indicated in the picture by a dashed line)
in the \nregion\ results in the picture in the middle part of \eqref{Disk21pt1}.
Fusing further the two parallel edges and replacing the resulting bubble with a
new field insertion then results in the figure to the right.
  \eqpic{Disk21pt1}{430}{39}{ \put(0,-5){ \put(0,-5){
  \put(10,7)	{\Includepicfj{25}{35h}}
  \put(55,105)	{\footnotesize $\psi$}
  \put(0,53)	{\footnotesize $\chi$}
  \put(55,59)	{\footnotesize $\phi$}
  \put(123,51)  {\footnotesize $\longmapsto$}
  \put(165,7)	{\Includepicfj{25}{35i}}
  \put(210,105)	{\footnotesize $\psi$}
  \put(155,53)	{\footnotesize $\chi$}
  \put(210,60)	{\footnotesize $\phi$}
  \put(279,51)  {\footnotesize $\longmapsto$}
  \put(320,7)	{\Includepicfj{25}{35j}}
  \put(365,105)	{\footnotesize $\psi$}
  \put(310,53)	{\footnotesize $\chi$}
  \put(365,70)	{\footnotesize $\phi'$}
  } } }
Yet again we may assume that the graph contained in the \nregion\
is connected and thus can be replaced with
a graph having a single vertex. Removing loops and inserting a transparent edge
we then arrive at the world sheet shown on the left side of \eqref{Disk21pt2}.
Finally, collapsing a defect bubble we arrive at a world sheet with one boundary
field and one disorder field:
  \eqpic{Disk21pt2}{274}{42}{
  \put(10,7)	{\Includepicfj{25}{35k}}
  \put(55,105)	{\footnotesize $\psi$}
  \put(0,53)	{\footnotesize $\chi$}
  \put(55,60)	{\footnotesize $\phi'$}
  \put(40,60)	{\footnotesize $\varkappa$}
  \put(139,51)  {\footnotesize $\longmapsto$}
  \put(190,7)	{\Includepicfj{25}{35l}}
  \put(235,105)	{\footnotesize $\psi$}
  \put(181,53)	{\footnotesize $\chi$}
  \put(235,60)	{\footnotesize $\phi''$}
  }

\iline{Three fields on the sphere}

\vskip .4em

\noindent
We finally consider a correlator on the sphere with three field insertions.
This situation could be analyzed in a way very similar to the previous one.
Instead we follow a different, shorter, route.

Select a neighborhood of each insertion point, each containing a single
vertex and (segments of) the incident edges, along with a contractible region
intersecting neither $\assd(\D)$ nor any of
the three neighborhoods. The complement of the union of these domains
is the \nregion\ on the left hand side of \eqref{Sphere3pt1} below, which
contains the rest of $\assd(\D)$. Again we can assume that the latter part
of the graph is connected (otherwise we replace the world sheet with an
equivalent one for which edges in disconnected components have been fused).

By further using multiple fusions and removing bubbles and loops we end up
with the graph in the middle picture of \eqref{Sphere3pt1}. By removing
internal vertices, as well as
removing bubbles and loops, we end up with the picture on the right:
  \eqpic{Sphere3pt1}{440}{59}{ \put(0,-3){
  \put(0,0)     {\Includepicfj{30}{40a}}
  \put(45,105)	{\footnotesize $\phi_1$}
  \put(45,62)	{\footnotesize $\phi_2$}
  \put(45,21)	{\footnotesize $\phi_3$}
  \put(126,66)  {\footnotesize $\longmapsto$}
   \put(5,0){
  \put(150,0) {\Includepicfj{30}{40b}}
  \put(193,110)	{\footnotesize $\phi_1$}
  \put(193,65)	{\footnotesize $\phi_2$}
  \put(193,22)	{\footnotesize $\phi_3$}
  \put(201,2)   {\footnotesize $\varkappa_1$}
  \put(175,46)  {\footnotesize $\varkappa_2$}
  \put(207,40)  {\footnotesize $\varkappa_3$}
  \put(225,44)  {\footnotesize $\varkappa_4$}
  \put(175,89)  {\footnotesize $\varkappa_5$}
  \put(207,83)  {\footnotesize $\varkappa_6$}
  \put(225,87)  {\footnotesize $\varkappa_7$}
  \put(201,130) {\footnotesize $\varkappa_8$}
  }
  \put(280,66)  {\footnotesize $\longmapsto$}
   \put(10,0){
  \put(300,0) {\Includepicfj{30}{34i}}
  \put(320,109)	{\footnotesize $\phi_1'$}
  \put(320,67)	{\footnotesize $\phi_2'$}
  \put(320,25)	{\footnotesize $\phi_3'$}
  } } }
This concludes the proof of Theorem \ref{thm:cS}.\endofproof
\end{proof}

\vskip .2em

\begin{remark}\label{rem_algo}
In the absence of non-trivial defects, for any sequence of arrows relating
a world sheet to the fundamental world sheets in $\wcS$ the factorization
and covariance formulas give the precise relation between the corresponding
correlators. One may worry that by invoking equivalences, this
algorithmic relation is lost. Comparing the description of equivalences
in Section \ref{sec:eqWS}, as well as the related comments in Remark
\ref{rem:equiv_moves}, with the use of the equivalences in the proof of
Theorem \ref{thm:cS} one can, however, convince oneself that any replacement
of a world sheet $\ws$ with an equivalent $\ws'$ is accompanied by a precise
prescription for the decoration data of $\ws'$ in terms of the decoration
data of $\ws$. In this way, any sequence of arrows \emph{and equivalences}
relating a world sheet with defects to the fundamental world sheets in $\cS$
provides again the precise relation between the corresponding correlators.
\end{remark}

\newpage
\appendix
\section{Appendix}

\subsection{Independence of transparent subgraph}\label{app:iots}

As an illustration of the various equivalences of world sheets presented in
Section \ref{sec:eqWS} we establish the \emph{independence of transparent
subgraph}, which is asserted after formula \eqref{remstrvert}.

First note that this assertion is the analogue of what in the absence of
non-trivial defects is the independence of the correlators from
the choice of triangulation of the world sheet.
The latter, which is shown in Proposition 3.2 of \cite{fjfrs}, is the basic ingredient of
the proof of covariance of the correlators in the defect-free case. To
arrive at the present generalization we first establish some basic properties
of morphisms that are composed of the structural morphisms of Frobenius
algebras. More specifically, the relevant morphisms are \emph{connected}
morphisms, that is, in the graphical notation they are described by connected
graphs.

\begin{lemma}\label{frob_proof_lemma}
Let $A$ be a symmetric special Frobenius algebra. Any connected morphism $\phi$ in
$\Hom(A^{\oti p},A^{\oti q})$ with $p,q \,{\geq}\, 1$ that is entirely composed
of the structural morphisms $m,\,\eta,\,\Delta$ and $\eps$ of $A$ can be written as
  \be\label{frob_p1}
  \phi = \psi\circ\varphi
  \ee
with
  \eqpic{frob_p5}{360}{38}{
  \put(4,40)  {$\varphi:=$}
  \put(50,0){
  \put(10,10)	{\Includepicfj{35}{42g}}
  \put(5,0)	{\footnotesize $A$}
  \put(14,0)	{\footnotesize $A$}
  \put(23,0)	{\footnotesize $A$}
  \put(32,0)	{\footnotesize $\cdots$}
  \put(45,0)	{\footnotesize $A$}
  \put(61,0)	{\footnotesize $A$}
  \put(14,98)	{\footnotesize $A$}
  }
  \put(170,40)    {and}
  \put(236,40)  {$\psi:=$}
  \put(282,0){
  \put(10,0)	{\Includepicfj{35}{42h}}
  \put(5,89)	{\footnotesize $A$}
  \put(14,89)	{\footnotesize $A$}
  \put(23,89)	{\footnotesize $A$}
  \put(32,89)	{\footnotesize $\cdots$}
  \put(45,89)	{\footnotesize $A$}
  \put(61,89)	{\footnotesize $A$}
  \put(13,-11)	{\footnotesize $A$}
  } }
In particular, for each pair $p,q$ of positive integers there is a unique such morphism.
\end{lemma}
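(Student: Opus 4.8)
The plan is to prove the factorization \eqref{frob_p1} and the uniqueness statement by exploiting the defining properties of a symmetric special Frobenius algebra, together with the normalization conventions that specialness provides. The key structural input is that in a special Frobenius algebra one has $m\circ\Delta = \beta_A\,\id_A$ and $\eps\circ\eta = \beta_A'\,\dim(A)$ (or, with the standard normalization used in \cite{fuRs4}, both equal to $\id_A$ up to the chosen scalars), and that symmetry gives the Frobenius property $(\id_A\oti\, m)\circ(\Delta\oti\,\id_A) = \Delta\circ m = (m\oti\,\id_A)\circ(\id_A\oti\,\Delta)$. These identities allow one to push every comultiplication to the top and every multiplication to the bottom of any connected composite.

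Concretely, I would argue by induction on the total number of structural morphisms appearing in a graphical presentation of $\phi$. First I would observe that a unit $\eta$ attached to an input-free strand can always be absorbed: if it feeds into a multiplication it disappears by unitality, and a connected morphism with $p\geq 1$ inputs cannot consist solely of strands emanating from units since it would then be disconnected from the inputs — so every $\eta$ can be removed, and dually every $\eps$. This reduces to the case where $\phi$ is built only from $m$ and $\Delta$. Next, using the Frobenius relations repeatedly, any such connected morphism can be brought to a normal form in which one first applies a single iterated multiplication $A^{\oti p}\to A$ and then a single iterated comultiplication $A\to A^{\oti q}$: this is the standard "spider" or "normal form" argument for Frobenius algebras, and connectedness is exactly what guarantees a single intermediate strand rather than several. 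Since the iterated multiplication on $p$ inputs is unique by associativity and the iterated comultiplication on $q$ outputs is unique by coassociativity, this yields \eqref{frob_p1} with $\varphi$ and $\psi$ as displayed, and simultaneously the uniqueness claim, because any two presentations reduce to the same normal form.

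The main obstacle I anticipate is controlling the normal-form reduction rigorously rather than hand-wavingly: one must check that the local moves (Frobenius relation, associativity, coassociativity, unit/counit removal, and the specialness relations $m\circ\Delta\sim\id$) suffice to connect any two connected presentations, and that applying them never breaks connectedness or changes $p$ and $q$. In practice this is a confluence/termination argument for a rewriting system on planar graphs; the cleanest way to present it is to fix a target normal form (the "multiply-all-then-comultiply-all" tree) and show by induction on the number of vertices that any connected graph rewrites to it, handling the outermost vertex (a multiplication or comultiplication, or a unit/counit leaf) as the inductive step. The scalar bookkeeping from specialness is harmless because, with the normalization fixed in Section \ref{FactorizationIdentity}, each application of $m\circ\Delta$ contributes a factor that the conventions set to $1$; if one keeps general special scalars, the statement holds up to an overall nonzero constant, which does not affect uniqueness of the morphism once that constant is absorbed into the normalization of $\Delta$ and $\eps$. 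I would present the lemma assuming the standard normalization so that \eqref{frob_p1} holds on the nose.
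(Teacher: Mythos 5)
Your proposal is correct and follows essentially the same route as the paper: remove all units and counits via unitality/counitality (using connectedness and $p,q\geq 1$ to guarantee a product, resp.\ coproduct, to absorb them into, after rearranging with associativity, coassociativity and the Frobenius property), then use those same relations to push all coproducts past all products, eliminate the resulting bubbles $m\cir\Delta$ by specialness, and land on the unique ``multiply-all-then-comultiply-all'' normal form $\psi\cir\varphi$. The paper's proof is exactly this, presented as three cases rather than as a rewriting-system argument, and with the same reliance on the standard normalization of specialness so that no scalars appear.
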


\proof
(i) For a connected morphism consisting only of a combination of products and coproducts,
we can use associativity, coassociativity and the Frobenius property to rearrange the
order of their composition arbitrarily. In particular, this way the ordering can be
arranged in such a way that all coproducts come after all products, except possibly for
the appearance of `bubbles' $m\cir\Delta$. The latter can be removed owing to specialness,
and afterwards again associativity and coassociativity can be invoked, such that one ends
up with the morphism \eqref{frob_p1}.
\\[2pt]
(ii) Next consider a connected morphism consisting of one unit morphism $\eta$ besides
products and coproducts. Then owing to $p\,{\ge}\,1$ there must be at least one
occurrence of the product $m$. By using associativity, coassociativity and the Frobenius
property in an analogous manner as in (i), the ordering of products and coproducts
can then be changed in such a way that the unit morphism is directly followed by
a product, i.e.\ appears in the form of either $m\cir(\id_A\,\oti\,\eta)$ or
$m\cir(\eta\,\oti\,\id_A)$; by the unit property, both of these combinations
equal $\id_A$. In short, $\eta$ can be removed from the expression for the morphism.
By iterating the argument, this applies in fact to any number of occurrences of
$\eta$. Hereby this case is reduced to case (i).
\\[2pt]
(iii) When also counit morphisms $\eps$ are present, the same type of argument as in (ii)
applies, just with the role of $m$ now being taken over by the coproduct $\Delta$,
and the unit property replaced by the defining property of the counit.
\endofproof

    A similar statement applies when instead of $A$-lines also lines labeled by the
    dual $A^\vee$ are admitted:

\begin{lemma}\label{lemma:eqAmorph}
Let $H$ be the morphism space $\Hom(A^\pm\oti A^\pm\oti\cdots\oti A^\pm,
A^\pm\oti A^\pm\oti\cdots\oti A^\pm)$, with a definite choice of either $A$ or $A^\vee$
for each of the tensor factors $A^\pm$. Any connected morphism $\phi\iN H$ that is
entirely composed of structure morphisms of $A$ and $A^\vee$ can be written as
  \be\label{frob_p4}
  \phi = (g^\pm\,\oti\, g^\pm\,\oti\cdots\oti\, g^\pm) \circ \psi \circ \varphi
  \circ (h^\pm\,\oti\, h^\pm\,\oti\cdots\oti\, h^\pm) \,,
  \ee
where $\psi$ and $\varphi$ are the morphisms \eqref{frob_p5}, while
$g^+ \eq \id_{\!A} \eq h^+$ as well as $g^- \eq \Fiso \iN \Hom(A,A^\vee)$
   $($as defined in \eqref{Fiso}$)$
and $h^- \eq \Fisoi \iN \Hom(A^\vee,A)$.
 \\
In particular, for any given numbers $p,\,q$ of factors $A^\pm$ in the source
and target, there is a unique such morphism $\phi$ in the morphism space $H$.
\end{lemma}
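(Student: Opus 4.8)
The plan is to reduce the statement to Lemma \ref{frob_proof_lemma} by relabeling every line carrying $A^\vee$ as a line carrying $A$, at the cost of dressing the external legs of $\phi$ with copies of the isomorphism $\Fiso$ of \eqref{Fiso} and of its inverse $\Fisoi$. The key input is that for a symmetric special Frobenius algebra $A$ the dual object $A^\vee$ is again a symmetric special Frobenius algebra, whose structure morphisms arise from those of $A$ by transport along $\Fiso\colon A\To A^\vee$ (see e.g.\ \cite[Sect.\,3.3]{fuRs4} and \cite{fuRs8}); explicitly, $m_{A^\vee}\eq\Fiso\cir m_A\cir(\Fisoi\oti\Fisoi)$, $\eta_{A^\vee}\eq\Fiso\cir\eta_A$, $\Delta_{A^\vee}\eq(\Fiso\oti\Fiso)\cir\Delta_A\cir\Fisoi$, and $\eps_{A^\vee}\eq\eps_A\cir\Fisoi$. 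Moreover any duality morphism of \cC\ that is used to pass between an $A$-line and an $A^\vee$-line can, by the very definition \eqref{Fiso} of $\Fiso$, equally be written through $\Fiso$, $\Fisoi$ and structure morphisms of $A$ alone.

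First I would take the graphical presentation of an arbitrary connected $\phi\iN H$ of the asserted form and relabel each of its internal $A^\vee$-strands as an $A$-strand. By the relations just recorded this relabeling is effected by conjugating the structure (and, where present, duality) morphisms attached to those strands with $\Fiso$; since $\Fiso$ is an isomorphism, the relabeled graph represents \emph{the same} morphism as $\phi$, except that a copy of $\Fisoi$ now sits at the outer end of every source leg of $\phi$ that is labeled $A^\vee$ in $H$, and a copy of $\Fiso$ at the outer end of every target leg labeled $A^\vee$. What lies between these boundary dressings is a connected morphism $A^{\oti p}\To A^{\oti q}$ composed entirely of $m_A,\eta_A,\Delta_A,\eps_A$ --- relabeling by isomorphisms does not touch the underlying graph, so connectedness is preserved.

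Next I would apply Lemma \ref{frob_proof_lemma} to this core morphism: since (as dictated by the shapes of $\varphi$ and $\psi$ in \eqref{frob_p5}) we have $p,q\,{\ge}\,1$, it equals $\psi\cir\varphi$. Reinstating the boundary dressings yields precisely the claimed normal form \eqref{frob_p4}, with $g^+\eq h^+\eq\id_A$ and with the factors $g^-\eq\Fiso$, $h^-\eq\Fisoi$ placed according to the pattern of $A$'s and $A^\vee$'s that is part of the datum $H$. Uniqueness is then automatic: the dressing factors $g^\pm,h^\pm$ are fixed by $H$, while the core $\psi\cir\varphi$ is the unique connected morphism of its type built from structure morphisms of $A$, by the uniqueness assertion of Lemma \ref{frob_proof_lemma}.

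The step I expect to be the main obstacle is the bookkeeping in the second paragraph: verifying that, after conjugating the $A^\vee$-structure morphisms by $\Fiso$, the copies of $\Fiso$ and $\Fisoi$ genuinely cancel in pairs along every internal edge and leave exactly one leftover copy on each external $A^\vee$-leg, no matter how the $A^\vee$-lines are routed through $\phi$ and regardless of any caps or cups linking $A$- and $A^\vee$-lines. Conceptually this is just functoriality --- the relabeling is the strand-by-strand application of the Frobenius-algebra isomorphism $\Fiso$ --- but making it fully rigorous in the presence of several intertwined $A$- and $A^\vee$-strands takes some care; everything else is a direct appeal to Lemma \ref{frob_proof_lemma}.
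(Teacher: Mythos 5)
Your proposal is correct and in essence coincides with the paper's proof: both arguments convert every $A^\vee$-line and every duality morphism into structure morphisms of $A$ dressed by $\Fiso$ and $\Fisoi$, leave the surviving copies of $\Fiso$ and $\Fisoi$ only on the external legs, and then invoke Lemma \ref{frob_proof_lemma} for the existence and uniqueness of the core $\psi\circ\varphi$. The bookkeeping you flag as the main obstacle is exactly what the paper handles by a local case analysis of how internal duality morphisms pair up (zig-zags reduce to identities, and the one non-trivial pairing is resolved by inserting $\id_{A^\vee}=\Fiso\circ\Fisoi$ and using duality), which is the same computation as your strand-by-strand transport of structure along $\Fiso$.
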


\proof
When composing the structure morphisms, all duality morphisms that are not directly
attached to an ingoing or outgoing line of the graph get composed in pairs.
Two types of compositions can occur: Either a right (left) evaluation and a
right (left) coevaluation are composed such that they yield a `zig-zag' morphism
which according to the defining properties of the dualities equals an identity
morphism. Or else  they form the combination shown on the left hand side of
the following equality (or its mirrored version):
  \eqpic{frob_p5}{70}{24}{ \put(0,-10){
  \put(0,13)	{\Includepicfj{35}{42e}}
  \put(46,35)   {$=$}
  \put(70,0)	{\Includepicfj{35}{42f}}
  } }
This equality, which can be established by inserting
$\id_{A^\vee} \,{=}\, \Fiso\cir\Fisoi$
and then using duality, allows us to remove such combinations of duality morphisms
as well. The remaining dualities directly attached to ingoing or outgoing lines
necessarily combine with structure morphism of $A$ to produce either $\Fiso$ or
$\Fisoi$. The assertion now follows from Lemma \ref{frob_proof_lemma}.
\endofproof

The same arguments as in the proof of the lemma show that a connected morphism,
entirely composed of structure morphisms of $A$ and $A^\vee$, in
$\Hom(A^\pm\oti A^\pm\oti\cdots\oti A^\pm,\one)$ equals
$\eps \cir \varphi \cir (h^\pm\,\oti\, h^\pm\,\oti\cdots\oti\, h^\pm)$,
while any such morphism in $\Hom(\one,A^\pm\oti A^\pm\oti\cdots\oti A^\pm)$
equals $(g^\pm\,\oti\, g^\pm\,\oti\cdots\oti\, g^\pm) \cir \psi \cir \eta$,
with $\psi$ and $\varphi$ as in \eqref{frob_p5}.
In a non-braided setting similar results have been obtained in \cite{coDu2}
(e.g.\ Theorem 6.11 in \cite{coDu2} corresponds to Lemma \ref{frob_proof_lemma}).

\medskip
The proof of independence of transparent subgraph is now straightforward.

\vskip .3em

\proof
The correlator $C(\ws)$ of a world sheet $\ws$ only depends on the morphism
in the category $\cC$ that is obtained by flattening the ribbon graph in the
connecting manifold. Thus when covering $\tws$ with contractible subsets,
$C(\ws)$ only depends on the collection of morphisms obtained from the ribbon
graphs in each of the contractible patches. Now assume that $\ws$ and $\ws'$
differ only in their transparent subgraphs $\TD$ and $\TD'$.
In particular $\assd(\DTD)$ differs from $\assd(\DTD')$ at most by an isotopy
on half-edges incident to insertion points. As a consequence we can assume
that $\assd(\DTD) \eq \assd(\DTD')$ by replacing $\ws'$ with an equivalent
world sheet differing only by such an isotopy. Let us focus on one connected
component of $\tws\,{\backslash}\,\assd(\DTD) \eq \tws\,{\backslash}\,\assd(\DTD')$
and choose a good open cover $\Xi$ of this component. Using equivalences given by
isotopies of $\assd|_{\TD}$ and removals of transparent tadpoles we may assume
that the images of $\assd$ and $\assd'$ coincide on the boundary of each open
set in $\Xi$, including orientation of each edge.
 \\
Next note that we may add a transparent tadpole to an edge, use specialness to
expand a bubble, and sliding of structure vertices to move one leg of the bubble
to the neighboring parallel edge (which is possible because $\D$ is connected
and each connected component of $\tws\backslash\dD$ is contractible).
As a consequence we may freely introduce a transparent edge between any two
parallel transparent edges, and hence we may assume that the graph contained
in any open set in $\Xi$ whose boundary crosses at least two edges is connected.
Now for each open set in $\Xi$, partition the edges crossing its boundary into
incoming and outgoing edges. By Lemma \ref{lemma:eqAmorph}, each open set in
$\Xi$ then contains a graph describing the same morphism in both $\ws$ and
$\ws'$. It follows that, however the transparent graphs are flattened in the
two connecting manifolds, the corresponding morphisms are the same.
 \\
We have thus shown that $C(\ws) \eq C(\ws')$.
\endofproof


\subsection{Fusion of defect lines}\label{app:fusion}

The tensor product of bimodules over algebras is defined as follows. Let
$A_i$, $i\eq 0,1,2,...\,,m \,{\ge}\, 2$, be algebras and
$X_i$, $i\eq 1,2,...\,,m$, a collection of $A_{i-1}$-$A_i$-bimodules with
left and right representation morphisms $\rho^\text l_i$ and $\rho^\text r_i$.
For $m \eq 2$ the tensor product $\otimes_{A_1}$ is defined as the coequalizer
  \be\label{algtensor}
  \xymatrix{
  X_1\otimes {A_1}\otimes X_2
  \ar@<.6ex>[rr]^{~~\rho^\text r_1\otimes\id_{X_2}}
  \ar@<-.6ex>[rr]_{~~\id_{X_1}\otimes\rho^\text l_2}
  && X_1\otimes X_2\ar[r]^{\otimes_{A_1}~}
  &X_1\otimes_{A_1} X_2
  } \ee
in $\cC_{A_0|A_2}$. For $m\,{>}\,2$ the multiple tensor product is defined
by invoking \eqref{algtensor} recursively.

In case the algebras $A_i$, $i\eq 0,1,2,...\, m$ are special Frobenius,
there is an alternative characterization of the tensor product, given by

\begin{lemma}\label{prop_defretr}
The morphism
  \eqpic{bnd_idem}{185}{34}{ \put(0,-3){
  \put(80,15)  {\Includepicfj{3}{9}}
  \put(-12,48)	  {$P_{X_1\otimes\cdots\otimes X_{m}}~ := $}
  \put(75,5.5)	  {\scriptsize $X_1$}
  \put(75,91)	  {\scriptsize $X_1$}
  \put(85,71)	  {\small \fbox{$A_1$}}
  \put(105,5.5)	  {\scriptsize $X_2$}
  \put(105,91)	  {\scriptsize $X_2$}
  \put(114,73)	  {\small \fbox{$A_2$}}
  \put(133,5.5)	  {\scriptsize $X_3$}
  \put(133.5,91)  {\scriptsize $X_3$}
  \put(145,75)	  {\small \fbox{$A_3$}}
  \put(188,5.5)	  {\scriptsize $X_m$}
  \put(188,91)	  {\scriptsize $X_m$}
  \put(160,25)	  {\small \fbox{$A_{m\! -\! 1}$}}
  \put(160,47)	  {$\cdots$}
  } }
in $\mathrm{End}_{A_0|A_m}(X_1\,\oti\,\cdots\,\oti\,X_{m})$ is an idempotent
and has a retract $(X_1\oti_{A_1}X_2\oti_{A_2}\cdots\,\oti_{A_{m-1}}X_m,
   $\linebreak[0]$
e,r)$ in $\cC_{A|B}$. In particular $\im(P_{X_1\otimes\cdots\otimes X_{m}})
\,{\cong}\, X_1\oti_{A_1}\cdots\oti_{A_{m-1}} X_{m}$.
\end{lemma}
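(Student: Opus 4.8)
The plan is to show that the morphism $P_{X_1\otimes\cdots\otimes X_m}$ defined in \eqref{bnd_idem} is idempotent and splits, with image isomorphic to the iterated coequalizer $X_1\otimes_{A_1}\cdots\otimes_{A_{m-1}}X_m$. I would proceed by induction on $m$, so that the heart of the matter is the case $m=2$, after which the general statement follows by feeding the retract of the first $m-1$ factors into the $m=2$ argument.

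First I would treat $m=2$. Here $P_{X_1\otimes X_2}$ is the standard morphism built from a comultiplication $\Delta$ on $A_1$ followed by the right action of $A_1$ on $X_1$ and the left action on $X_2$ (i.e.\ $P = (\rho^{\mathrm r}_1\otimes\rho^{\mathrm l}_2)\circ(\id_{X_1}\otimes\Delta_{A_1}\otimes\id_{X_2})$, suitably read off the picture). That $P$ is an idempotent is a direct graphical computation: composing two copies of $P$, one uses coassociativity of $\Delta$ to merge the two coproducts, the representation property of $\rho^{\mathrm r}_1$ and $\rho^{\mathrm l}_2$ to absorb one of the $A_1$-lines into the other, and finally specialness of $A_1$ (the relation $m_{A_1}\circ\Delta_{A_1}=\beta\,\id_{A_1}$, with the normalization chosen so that the constant is $1$, or absorbed into $\eps,\eta$) to collapse the resulting bubble; the Frobenius property is what lets one slide the comultiplication past the actions to form that bubble in the first place. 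That $P$ is a morphism in $\cC_{A_0|A_2}$, i.e.\ commutes with the outer $A_0$- and $A_2$-actions, is immediate because these actions only touch the far left of $X_1$ and the far right of $X_2$ and hence commute with everything in $P$. To exhibit the retract, I would invoke the general fact that in a category where idempotents split — and the representation category $\cC$ of a rational chiral algebra is semisimple, hence Karoubian — every idempotent has an image; write $(Z,e,r)$ for a splitting, $e\circ r = P$, $r\circ e=\id_Z$. It then remains to identify $Z$ with the coequalizer $X_1\otimes_{A_1}X_2$ of \eqref{algtensor}. For this one checks that $\otimes_{A_1}\circ\, r$ coequalizes the parallel pair $\rho^{\mathrm r}_1\otimes\id$ and $\id\otimes\rho^{\mathrm l}_2$ — this again uses only the representation and (co)associativity relations — and conversely that the universal arrow out of the coequalizer composed with $e$ inverts it; the specialness of $A_1$ is precisely what makes these two maps mutually inverse. (This is the standard identification of the tensor product over a special Frobenius algebra with the image of the idempotent; I would cite the analogous statement in \cite{fuRs4} or \cite{fuRs8} rather than reprove every bubble move.)

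For the inductive step, assume the claim for $m-1$, so that $\im(P_{X_1\otimes\cdots\otimes X_{m-1}})\cong X_1\otimes_{A_1}\cdots\otimes_{A_{m-2}}X_{m-1}$ via a retract $(\,\cdot\,,e',r')$ in $\cC_{A_0|A_{m-1}}$. Inspecting \eqref{bnd_idem}, the morphism $P_{X_1\otimes\cdots\otimes X_m}$ is obtained by composing $P_{X_1\otimes\cdots\otimes X_{m-1}}\otimes\id_{X_m}$ (acting on the first $m-1$ strands) with a copy of the $m=2$ idempotent built from $\Delta_{A_{m-1}}$, the right $A_{m-1}$-action on the $(m-1)$-st strand and the left $A_{m-1}$-action on $X_m$. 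Using that $P_{X_1\otimes\cdots\otimes X_{m-1}}$ already commutes with the right $A_{m-1}$-action (it is a bimodule morphism ending in the bimodule $X_1\otimes\cdots\otimes X_{m-1}$), these two pieces commute, and one reads off $P_{X_1\otimes\cdots\otimes X_m} = (e'\otimes\id_{X_m})\circ P_{(X_1\otimes_{A_1}\cdots\otimes_{A_{m-2}}X_{m-1})\otimes X_m}\circ(r'\otimes\id_{X_m})$ up to the $m=2$ identities. Since $e'\otimes\id$, $r'\otimes\id$ split an idempotent and the middle factor is, by the $m=2$ case, itself a splitting idempotent with image $(X_1\otimes_{A_1}\cdots\otimes_{A_{m-2}}X_{m-1})\otimes_{A_{m-1}}X_m$, the composite is idempotent and its image is that same object, which by the recursive definition of the multiple tensor product is $X_1\otimes_{A_1}\cdots\otimes_{A_{m-1}}X_m$. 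This gives both the idempotency and the identification of the image in general.

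The main obstacle, and the only place requiring genuine care, is the $m=2$ bubble computation together with the precise bookkeeping of which structural relation (Frobenius, associativity, coassociativity, specialness, (co)unit) is used at each step, and getting the specialness normalization constant to come out as $1$ — everything else is a formal consequence of idempotents splitting in a semisimple category plus the recursive definition \eqref{algtensor}. A minor additional point is verifying that the retract can be chosen in the bimodule category $\cC_{A_0|A_m}$ rather than merely in $\cC$; this follows because $P$ is a bimodule morphism, so its image inherits a bimodule structure for which $e$ and $r$ are bimodule maps. I would therefore spend the bulk of a careful write-up on the $m=2$ graphical identities and refer to the existing literature for the coequalizer identification, keeping the inductive step short.
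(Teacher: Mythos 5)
Your proposal is correct and follows essentially the same route as the paper: idempotency via the standard specialness/Frobenius bubble computation, existence of the splitting from idempotent-completeness of the bimodule category (the paper cites Prop.\ 2.4 of \cite{ffrs5} for exactly the point you argue by hand), identification of the image with the coequalizer using the representation and Frobenius properties, and a recursive reduction for $m>2$ which the paper simply declares easy. The only blemish is the composite ``$\otimes_{A_1}\circ r$'', which does not typecheck ($r$ and $\otimes_{A_1}$ have the same source); you mean that $r$ itself coequalizes the pair, with $\otimes_{A_1}\circ e$ furnishing the comparison map in the other direction, as in the paper.
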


\begin{proof}
That \eqref{bnd_idem} is an idempotent follows in the same way as in
\cite[Eq.\,(5.127)]{fuRs4}. Consider first the case $m \eq 2$ and write
$A\,{\defas}\, A_0$, $B\,{\defas}\, A_2$, $C \,{\defas A}\,_1$,
$X_1 \,{\defas}\, X\eq (X,\rho^\text l,\rho^\text r)$ as well as
$X_2 \,{\defas}\, Y \eq (Y,\sigma^\text l,\sigma^\text r)$. By Proposition 2.4
of \cite{ffrs5} the category $\cC_{A|B}$ is idempotent-complete,
so there exists a retract $(\im(P_{X\oti Y}),e,r)$ such that
$P_{X\oti Y} \eq e\cir r$. By the representation property of $\rho^\text r$
and $\sigma^\text l$ together with the Frobenius and unit properties of $C$
it follows that $r$ indeed coequalizes $\rho^\text r\,\oti\,\id_Y$
and $\id_X\,\oti\,\sigma^\text l$:
  \be
  \xymatrix{
  X\otimes C\otimes Y
  \ar@<.6ex>[rr]^{~\rho^\text r\otimes\id_Y}
  \ar@<-.6ex>[rr]_{~\id_X\otimes\sigma^\text l}
  && X\otimes Y\ar[r]^{r~~}
  &\im(P_{X\oti Y}) \,.}
  \ee
It remains to show that $r$ has the universal property that is required for a
coequalizer. Assume that $f\colon X\,\oti\, Y\To Z$ coequalizes
$\rho^\text r\,{\oti}\,\id_Y$ and $\id_X\,{\oti}\,\sigma^\text l$ as well.
Using that $C$ is special Frobenius, it follows immediately that $f\cir
P_{X\oti Y} \eq f$, which furthermore implies the commutativity of the diagram
  \be
  \xymatrix{
  X\oti Y \ar[dr]_{f}\ar[r]^{r~~} & \im(P_{X\oti Y})\ar[d]^{f\circ e}\\
  & Z
  } \ee
Uniqueness of the morphism $f\cir e$ can be straightforwardly shown by making use
of the morphism $\otimes_C\cir e\colon \im(P_{X\oti Y})\To X\oti_C Y$; we omit the
details. This completes the proof for $m \eq 2$. The extension to $m\,{>}\,2$
is easy, and we refrain from spelling it out.
\endofproof
\end{proof}


\subsection{Dual bases for $\im(\proj)$}\label{app:B*}

Recall the basis $\mathcal B$ \eqref{basis_Im(P)} of the image $\im(\proj)$ of
the projector $\proj$ given by the invariant of the cobordism \eqref{CCinv}.

\begin{lemma}\label{Dual_bases}
The basis $\mathcal B^*$ dual to $\mathcal B$ is spanned by the vectors
$Z(\instord X pq\alpha\beta)$, with $\instord X pq\alpha\beta$ the cobordism
  \eqpic{instordual}{220}{100}{
  \put(-12,130) {$\mathcal N^{-1}\,\instord X pq\alpha\beta~:= $}
  \put(90,-1){ \Includepic{3}
  \put(33,194)    {\scriptsize $\lambda_{\qb q}$}
  \put(14,114.5)  {\scriptsize $\lambda_{\pb p}$}
  \put(136,199)   {\tiny $1$}
  \put(149,194)   {\tiny $2$}
  \put(24,221)    {\scriptsize $q$}
  \put(54,165)    {\scriptsize $\qb$}
  \put(42,72)     {\scriptsize $q$}
  \put(51,130)    {\scriptsize $\pb$}
  \put(45,107)    {\scriptsize $p$}
  \put(21.3,143.5){\small $\bar{\beta}$}
  \put(60.8,91)   {\small $\bar{\alpha}$}
  \put(122,78)    {\begin{turn}{27}\scriptsize $(X,+)$\end{turn}}
  \put(123,132.5) {\begin{turn}{27}\scriptsize $(X,-)$\end{turn}}
  } }
Here $\bar\alpha$ labels a basis of $\Homaa {X}{U_p\,\oti^+A\oti^-U_q}$, $\bar\beta$
labels a basis of $\Homaa A{U_\pb\,\oti^+X\oti^-U_\qb}$ and
$\,\mathcal N \eq {(\dim(U_p)\, \dim(U_q))^2} /\, {\dim(A)\; \dim(X)}$.
\end{lemma}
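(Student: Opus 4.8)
The plan is to verify directly that the vectors $Z(\instord X pq\alpha\beta)$ are dual to the basis $\mathcal B \eq \{Z(\instor X {p'}{q'}{\alpha'}{\beta'})\}$ of $\im(\proj)$ with respect to the canonical non-degenerate pairing on $\bl(\ExtTdef X)$. Recall that for two cobordisms $M_i\colon\emptyset\To\ExtTdef X$ this pairing is computed by gluing $M_1$ to the orientation reversal $\overline{M_2}$ of $M_2$ along $\ExtTdef X$ and evaluating the TFT invariant of the resulting closed three-manifold with its inscribed ribbon graph. In the present situation $\instord X pq\alpha\beta$ is built on a solid torus, as in \eqref{deftor} and \eqref{instordual}, and so is $\instor X {p'}{q'}{\alpha'}{\beta'}$, as in \eqref{instor}, so the glued manifold is a closed three-manifold carrying the union of the two ribbon graphs, whose invariant can be read off by graphical calculus in the wedge presentation.

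First I would confirm that each $Z(\instord X pq\alpha\beta)$ lies in $\im(\proj)$, so that these vectors are genuine candidates for a dual basis of $\im(\proj)$. This follows by exactly the argument used for $Z(\TORS)\iN\im(\proj)$ in the proof of Proposition \ref{surg_prop}, i.e.\ the analogue of Lemma 4.5(i) of \cite{ffrs5}: composing $\instord X pq\alpha\beta$ with the cobordism $\projMF$ of \eqref{CCinv} and sliding the annular $A$-ribbons through the pairwise identifications leaves the cobordism unchanged. Since $\dim\im(\proj)$ equals the number of quadruples $(p,q,\alpha,\beta)$ indexing $\mathcal B$ in \eqref{basis_Im(P)} (the fact that $\mathcal B$ is a basis is recorded in the proof of Proposition \ref{surg_prop}), it then suffices to show that the matrix of pairings $\big(Z(\instord X pq\alpha\beta),Z(\instor X {p'}{q'}{\alpha'}{\beta'})\big)$ is the identity.

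The heart of the proof is this last computation, which I would carry out by graphical calculus, much as in the evaluation of \eqref{coeff-1}--\eqref{coeff-2}. Gluing the two solid tori and flattening the ribbon graph in the resulting closed three-manifold (equivalently, taking the trace over the torus fibre in the wedge presentation) produces a closed ribbon graph in which the $\bar\lambda^{\pb p}$- and $\bar\lambda^{\qb q}$-morphisms coming from $\instor$ meet the $\lambda_{\pb p}$- and $\lambda_{\qb q}$-morphisms coming from $\instord$. Using the relation \eqref{lambdachoice} to pass between $\lambda^{k\kb}$ and $\bar\lambda^{k\kb}$ together with the standard normalization of these morphisms in the one-dimensional spaces $\Hom(\one,U_k\oti U_\kb)$ collapses each pair of $U_p$- and $U_{p'}$-lines (respectively of $U_q$- and $U_{q'}$-lines), producing $\delta_{p,p'}\delta_{q,q'}$ and powers of $\dim(U_p)$ and $\dim(U_q)$, while all the accompanying twist and braiding factors cancel. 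What remains is a closed graph built from the disorder-field morphisms $\alpha,\bar\alpha$ and $\beta,\bar\beta$, the annular $A$-ribbon, and structure morphisms of the symmetric special Frobenius algebra $A$ and of the $A$-bimodule $X$; using that $A$ is symmetric and special and that $\alpha,\beta,\bar\alpha,\bar\beta$ are bimodule morphisms, this graph reduces to the traces $\mathrm{Tr}(\phi^{\alpha'}_{pq}\cir\bar\phi^{\alpha}_{pq})$ and $\mathrm{Tr}(\phi^{\beta'}_{\pb\qb}\cir\bar\phi^{\beta}_{\pb\qb})$, which by the duality relation \eqref{DF_dual} equal $\delta_{\alpha,\alpha'}\dim(X)$ and $\delta_{\beta,\beta'}\dim(A)$. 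Collecting all the dimension factors and comparing with $\mathcal N \eq (\dim(U_p)\,\dim(U_q))^2/(\dim(A)\,\dim(X))$ shows that the prefactor $\mathcal N$ in the definition \eqref{instordual} of $\instord X pq\alpha\beta$ is exactly what makes the pairing equal to $\delta_{p,p'}\delta_{q,q'}\delta_{\alpha,\alpha'}\delta_{\beta,\beta'}$, which is what we wanted.

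The main obstacle is the bookkeeping in this graphical computation: keeping careful track of orientations and $2$-orientations in the wedge presentation, identifying correctly which legs of the two ribbon graphs are joined when the solid tori are glued, and handling the annular structure of the $A$-ribbon and the braiding conventions (the $\pm$ superscripts on $\oti$ in the disorder-field morphism spaces) consistently, so that all the $\theta$- and $R$-factors introduced via \eqref{lambdachoice} indeed cancel. A useful cross-check is provided by the trivial-defect case $X\eq A$, in which the computation must reduce to the dual-basis statement underlying Lemma 5.4(ii) of \cite{fjfrs}.
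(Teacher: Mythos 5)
Your proposal is correct and follows essentially the same route as the paper: the paper's proof simply defers to the proof of Eq.\ (5.20) of \cite{fjfrs} (pairing the candidate dual vectors against $\mathcal B$ by gluing the solid tori and evaluating the closed ribbon graph), with the only new ingredient being the graphical identity \eqref{dimdual} stating that the composite of $\bar\alpha$ with $\gamma$ around the $X$-ribbon yields $\dim(X)\,\delta_{\gamma,\alpha}$. Your reduction of the glued graph to the traces $\mathrm{Tr}(\phi^{\alpha'}_{pq}\cir\bar\phi^{\alpha}_{pq})$ and $\mathrm{Tr}(\phi^{\beta'}_{\pb\qb}\cir\bar\phi^{\beta}_{\pb\qb})$, evaluated via \eqref{DF_dual}, is exactly that step, so your argument is an unpacked version of the paper's.
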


\begin{proof}
The assertion can be established by a simple generalization of the proof
of Eq.\ (5.20) of \cite{fjfrs} by just replacing relevant $A$-ribbons by
$X$-ribbons. The only non-obvious modification is that the equality
in (5.25) of \cite{fjfrs} needs to be replaced by
  \eqpic{dimdual}{220}{45}{
  \put(0,0){ \Includepic{6a}
  \put(25,75)   {\footnotesize $\bar\alpha$}
  \put(25,26)   {\footnotesize $\gamma$}
  \put(-6,80)   {\footnotesize $p$}
  \put(58,80)   {\footnotesize $q$}
  \put(31,41)   {\footnotesize $X$}
  }
  \put(80,50) {$=$}
  \put(110,6){ \Includepic{6b}
  \put(19,71)   {\footnotesize $\bar\gamma$}
  \put(19,22)   {\footnotesize $\alpha$}
  \put(8.5,37)  {\footnotesize $p$}
  \put(31,37)   {\footnotesize $q$}
  \put(-13,80)  {\footnotesize $X$}
  }
  \put(170,50) {$=~ \dim(X)\,\delta_{\gamma,\alpha}\,.$}
  }
The first of these equalities is obtained by combining that $A$ is special and
symmetric with a deformation of the graph. The second equality follows from
the duality \eqref{DF_dual}.
\endofproof
\end{proof}


\subsection{Structure constants of the defect two-point function}\label{app:cdef}

The structure constants of the defect two-point function described in
\eqref{WS_twofieldsdefect} are given by (see Section 4.5 of \cite{fuRs10})
  \eqpic{2pdef}{170}{46}{
  \put(1,51) {$\cdef {X}pq\alpha\beta ~=~ \displaystyle\frac1{S_{0,0}} $}
  \put(87,-3){ \Includepic{8}
  \put(24,81)      {\scriptsize $p $}
  \put(24,32)      {\scriptsize $\pb$}
  \put(53,81)      {\scriptsize $q $}
  \put(54,32)      {\scriptsize $\qb$}
  \put(37,42.5)    {\small $\beta$}
  \put(37,91)      {\small $\alpha$}
  \put(43.4,58)    {\small $X$}
  } }

\begin{lemma}\label{coeff_calc}
The matrix \eqref{2pdef} is non-degenerate. The inverse of
$c^{\mathrm{def}}_{X,pq}$ is given by
  \eqpic{2pinv}{260}{45}{
  \put(-3,54) {$\displaystyle \cdefinv {X}pq\alpha\beta ~=~
               S_{0,0}\,\frac{\dim(U_p)\;\dim(U_q)}{\dim(A)\;\dim(X)}$}
  \put(180,0){ \Includepic{5}
  \put(22,81)      {\scriptsize $\pb $}
  \put(24,34)      {\scriptsize $p$}
  \put(53,81)      {\scriptsize $\qb$}
  \put(49.5,34)    {\scriptsize $q$}
  \put(35.5,19)    {\small $\bar{\beta}$}
  \put(36,68.9)    {\small $\bar{\alpha}$}
  \put(41.7,52.8)  {\small $X$}
  } }
\end{lemma}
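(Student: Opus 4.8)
The plan is to verify directly that the morphism displayed in \eqref{2pinv} is a two-sided inverse of the matrix \eqref{2pdef}, exploiting the diagrammatic calculus together with the defining properties of the symmetric special Frobenius algebra $A$, the bimodule structure of $X$, and the duality \eqref{DF_dual} between the bases $\{\alpha\}$ and $\{\bar\alpha\}$ (and likewise for $\beta,\bar\beta$). First I would compose the two diagrams: stack the graph for $\cdef{X}pq\alpha\beta$ on top of the graph for $\cdefinv{X}pq{\gamma}{\delta}$ along the matching $U_p,U_\pb,U_q,U_\qb$-lines. The resulting closed diagram on the sphere carries the prefactor $\dim(U_p)\dim(U_q)/(\dim(A)\dim(X))$ and contains the composite $\alpha\cir\bar\gamma$ of a $\Homaa{U_p\oti^+A\oti^-U_q}{X}$-morphism with a $\Homaa X{U_p\oti^+A\oti^-U_q}$-morphism, together with an analogous $\beta,\bar\delta$ block wired through the $A$- and $X$-ribbons.

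The key manipulation is to disentangle the $A$-ribbon. Using that $A$ is special (so bubbles $m\cir\Delta$ and $\Delta\cir m$ can be removed up to scalars involving $\dim(A)$) and symmetric, together with the fact that $\alpha$, $\bar\gamma$, $\beta$, $\bar\delta$ are bimodule morphisms (so the $A$-action can be slid freely across them, as in the rewriting of \eqref{coeff-1} into \eqref{coeff-2} earlier in the paper), one reduces the closed $A$-ribbon to a factor of $\dim(A)$ and simultaneously frees the $X$-loop. What remains is a diagram of the same shape as \eqref{dimdual}: the composite $\bar\gamma\cir\alpha$ (now a bimodule endomorphism-type configuration) encircled once by an $X$-ribbon, which by the argument already recorded for \eqref{dimdual} evaluates to $\dim(X)\,\delta_{\gamma,\alpha}$, and similarly the $\beta$-block yields $\dim(X)\,\delta_{\delta,\beta}$ — here one invokes the non-degenerate pairing \eqref{DF_dual}, whose relevant trace normalization supplies precisely the remaining $\dim(X)$ factors. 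Collecting the scalars $\dim(U_p)\dim(U_q)/(\dim(A)\dim(X))$ from the prefactor against $\dim(A)$ from the $A$-loop and $\dim(X)$ from the $X$-loop cancels everything and leaves $\delta_{\gamma,\alpha}\,\delta_{\delta,\beta}$, which is the claimed identity $\sum_{\gamma,\delta}\cdef{X}pq\alpha\gamma\,\cdefinv{X}pq\gamma\delta = \delta_{\alpha\delta}$ etc. The opposite-order composition is handled by the mirror-image argument, using coassociativity/counitality in place of associativity/unitality; non-degeneracy of $\cdef{X}pq{}{}$ then follows since it has been exhibited to possess a two-sided inverse.

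The main obstacle I expect is bookkeeping of the braiding and twist factors that enter through the braided induction defining the bimodule structures on $U_p\oti^+A\oti^-U_q$ and $U_\pb\oti^+X\oti^-U_\qb$: when sliding the $A$-action past the morphisms and when contracting the $X$-loop, one must be careful that the $\oti^+$ versus $\oti^-$ placement of the insertions, i.e. the over- versus under-braidings, is respected so that no spurious $\theta_p,\theta_q$ or $\RR{}{}{}{}{}$ factors are generated — this is exactly the kind of subtlety that, handled differently, produces the extra factors in \eqref{fact_FFRS}. I would organise the computation so that the braidings on the $p,q$-lines in \eqref{2pdef} cancel against those on the $\pb,\qb$-lines in \eqref{2pinv} before the $A$- and $X$-loops are contracted, so that the final scalar count is manifestly just $\dim(U_p)\dim(U_q)$ together with the $\dim(A),\dim(X)$ cancellations described above. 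Everything else is a routine application of the graphical calculus already used repeatedly in the proof of Proposition \ref{surg_prop}.
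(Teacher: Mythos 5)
Your overall goal -- exhibit \eqref{2pinv} as a two-sided inverse of \eqref{2pdef} by a diagrammatic computation using the duality \eqref{DF_dual} and the Frobenius properties of $A$ -- is the right one, and it is what the paper does. But the concrete plan has gaps that would prevent the computation from closing. First, the operation you start from is not the one that computes the matrix product. Each of \eqref{2pdef} and \eqref{2pinv} is already a \emph{closed} ribbon graph (a number), so there are no free $U_p,U_\pb,U_q,U_\qb$-lines to stack along; the product $\sum_\gamma \cdefinv {X}pq\alpha\gamma\,\cdef {X}pq\gamma\beta$ must instead be performed by summing over the intermediate basis label $\gamma$ using the completeness of the dual bases \eqref{DF_dual}, which effectively glues the two diagrams along the legs of the $\gamma$- and $\bar\gamma$-coupons, i.e.\ along the $X$-line and the $U\oti A\oti U$-lines at the insertion, not along the external chiral lines. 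Second, your scalar bookkeeping is not correct as stated: a closed $A$-ribbon carrying coupons does not evaluate to $\dim(A)$ by specialness (specialness removes $m\cir\Delta$ with \emph{no} scalar in these conventions, and $\Delta\cir m$ is not removable at all). In the actual computation the factors $\dim(A)$ and $\dim(X)$ arise from the trace normalization \eqref{DF_dual} applied with target objects $A$ and $X$ respectively, not from contracting bare loops.

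The more serious omission is that the genuine content of the lemma is never established in your plan. The paper isolates it by introducing the auxiliary maps $f$ and $g$ of \eqref{f(phi)} between $\Homaa{U_p\oti^+X\oti^-U_q}{A}$ and $\Homaa X{U_\pb\oti^+A\oti^-U_\qb}$ and proving the identities \eqref{fog} and \eqref{gof}, namely that $f\cir g$ and $g\cir f$ equal $(\dim(U_p)\dim(U_q))^{-1}$ times the identity; everything else (the expansion \eqref{fogmat}, the pairing \eqref{dimdual}, and the assembly in \eqref{CCinvx}) is linear algebra plus \eqref{DF_dual}. This step is exactly where the symmetry, specialness and Frobenius property of $A$, the bimodule intertwining property of the coupons, and the $\oti^+$ versus $\oti^-$ braidings all have to be used together, and it is precisely the step you defer to ``routine disentangling'' and to a hope that the braidings cancel. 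You correctly flag the braiding/twist bookkeeping as the main danger, but you do not supply the argument that resolves it; without an analogue of \eqref{fog}--\eqref{gof} the reduction to $\delta_{\alpha,\beta}$ cannot be completed. I would recommend reorganizing your proof around such an explicit pair of mutually inverse maps (up to the scalar $\dim(U_p)\dim(U_q)$), after which non-degeneracy and the formula for the inverse follow as in the paper.
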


\begin{proof}
For any bimodule $X$, $\phi\iN\Homaa {U_p\,\oti^+X\oti^-U_q}A$ and
$\overline\phi\iN\Homaa X{U_\pb\,\oti^+A\oti^-U_\qb}$, define
  \eqpic{f(phi)}{420}{39}{
  \put(4,45)       {$f(\phi)~:=$}
  \put(67,0){ \Includepic{7a}
  \put(26,24)      {\scriptsize $p $}
  \put(0,92)       {\scriptsize $\pb$}
  \put(49,24)      {\scriptsize $q $}
  \put(69.5,92)    {\scriptsize $\qb$}
  \put(35.5,43.6)  {\small ${\phi}$}
  \put(33.7,-6.7)  {\small $X$}
  \put(35,92.2)    {\small $A$}
  }
  \put(191,45)     {and}
  \put(246,45)     {$g(\overline\phi)~:=$}
  \put(306,0){ \Includepic{7b}
  \put(26,69)      {\scriptsize $\pb $}
  \put(-2,-7)      {\scriptsize $p$}
  \put(53,69)      {\scriptsize $\qb$}
  \put(69.5,-7)    {\scriptsize $q$}
  \put(35.5,44.7)  {\small $\overline{\phi}$}
  \put(33.5,-9.8)  {\small $X$}
  \put(34.5,95.2)  {\small $A$}
  } }
Straightforward calculation shows that
  \be\label{fog}
  f\circ g = \frac{1}{\dim(U_p)\;\dim(U_q)}\; \id_{\Homaa X{U_\pb\oti^+A\oti^-U_\qb}} \,,
  \ee
and
  \be\label{gof}
  g\circ f = \frac{1}{\dim(U_p)\;\dim(U_q)}\; \id_{\Homaa {U_p\oti^+X\oti^-U_q}A} \,.
  \ee
Expanding
  \be
  f(\phi_\alpha^{pq})=\sum_{\gamma}\Delta_{\alpha\gamma}\, \bar\phi^{\pb\qb}_\gamma
  \qquad{\rm and} \qquad
  g(\bar\phi_\alpha^{\pb\qb})=\sum_{\gamma}\Omega_{\alpha\gamma}\, \phi^{pq}_\gamma \,,
  \ee
the equalities \eqref{fog}  and \eqref{gof} can be rewritten as
  \be\label{fogmat}
  \sum_\gamma\Delta_{\alpha\gamma}\,\Omega_{\gamma\beta}
  = \frac{1}{\dim(U_p)\,\dim(U_q)}\; \delta_{\alpha,\beta}
  = \sum_\gamma\Omega_{\alpha\gamma}\,\Delta_{\gamma\beta} \,.
  \ee
Also, combining the formulas \eqref{2pdef}, \eqref{2pinv} and \eqref{f(phi)}
it follows that
  \eqpic{CCinvx}{385}{40}{
  \put(0,47)  {$\dsty\sum_\gamma\cdefinv {X}pq\alpha\gamma \,\cdef {X}pq\gamma\beta
                ~=~\frac{\dim(U_p)\,\dim(U_q)}{\dim(A)\,\dim(X)}$}
  \put(248,-2){ \Includepic{6c}
  \put(4,75)       {\footnotesize $g(\bar\alpha)$}
  \put(8,26)       {\footnotesize $\bar\gamma$}
  \put(-3,42)      {\footnotesize $p$}
  \put(30,42)      {\footnotesize $q$}
  \put(15,50)      {\footnotesize $X$}
  }
  \put(307,-2){ \Includepic{6a}
  \put(22,75)      {\footnotesize $f(\gamma)$}
  \put(25,26)      {\footnotesize $\beta$}
  \put(-6,80)      {\footnotesize $\pb$}
  \put(58,80)      {\footnotesize $\qb$}
  \put(31,41)      {\footnotesize $X$}
  } }
This equality is most easily checked backwards. The rewriting of the left component
of the ribbon graph is immediate from the second equality in \eqref{f(phi)},
while the rewriting of the right component involves a deformation.
We then finally have
  \be
  \bearll
  \displaystyle \sum_\gamma \cdefinv {X_\mu}pq\alpha\gamma\,
  \cdef {X_\mu}pq\gamma\beta \!\!\!&\displaystyle
  = \dim(U_p)\;\dim(U_q)
  \,\sum_{\gamma,\delta,\eps} \Omega_{\alpha\delta}\,
  \Delta_{\gamma\eps}\, \delta_{\delta,\gamma}\, \delta_{\beta,\eps}
  \\&\displaystyle
  = \dim(U_p)\;\dim(U_q)\,\sum_{\gamma,\eps}\Omega_{\alpha\gamma}\,\Delta_{\gamma\beta}
  =\delta_{\alpha,\beta} \,,
  \eear
  \ee
where the first equality holds by the duality \eqref{DF_dual} of bases and the
third by \eqref{fogmat}.
\endofproof
\end{proof}

\vskip 1.5em

\noindent{\sc Acknowledgments:}
 \\
The authors were supported by the Research Links Programme of the Swedish Research
Council (VR) under project no.~348-2008-6049.
JFj is supported by the ESF network ``Interactions of Low-Dimensional Topology and
Geometry with Mathematical Physics (ITGP)'', the China Science Postdoc grant
no.~020400383, the Priority Academic Program Development
of Jiangsu Higher Education Institutions (PAPD), NSFC grant no.~10775067, and the
Chinese Central Government's 985 Project grants for Nanjing University.
JFu is largely supported by VR under project no.\ 621-2009-3993.
\\
JFj and JFu are grateful to Hamburg University, and in particular to
Christoph Schweigert and Astrid D\"orh\"ofer, for their hospitality when
parts of this paper were written.

\newpage


 \newcommand\wb{\,\linebreak[0]} \def\wB {$\,$\wb}
 \newcommand\Bi[1]    {\bibitem{#1}}
 \newcommand\Erra[3]  {\,[{\em ibid.}\ {#1} ({#2}) {#3}, {\em Erratum}]}
 \newcommand\J[5]     {{\em #5}, {#1} {#2} ({#3}) {#4} }
 \renewcommand\K[6]   {{\em #6}, {#1} {#2} ({#3}) {#4} {\tt[#5]} }
 \newcommand\Prep[2]  {{\em #2}, preprint {\tt #1} }
 \newcommand\PhD[2]   {{\em #2}, Ph.D.\ thesis (#1)}
 \newcommand\BOOK[4]  {{\em #1\/} ({#2}, {#3} {#4})}
 \newcommand\inBO[7]  {{\em #7}, in:\ {\em #1}, {#2}\ ({#3}, {#4} {#5}), p.\ {#6}}
 \def\adma  {Adv.\wb in Math.}
 \def\apcs  {Applied\wB Cate\-go\-rical\wB Struc\-tures}
 \def\atmp  {Adv.\wb Theor.\wb Math.\wb Phys.}
 \def\comp  {Com\-mun.\wb Math.\wb Phys.}
 \def\fiic  {Fields\wB Institute\wB Commun.}
 \def\jhep  {J.\wb High\wB Energy\wB Phys.}
 \def\njop  {New\wB J.\wb Phys.}
 \def\nuci  {Nuovo\wB Cim.}
 \def\nupb  {Nucl.\wb Phys.\ B}
 \def\phlb  {Phys.\wb Lett.\ B}
 \def\remp  {Rev.\wb Mod.\wb Phys.}
 \def\taac  {Theo\-ry\wB and\wB Appl.\wb Cat.}
 \def\cntp  {Com\-mun.\wB Number\wB Theory\wB Phys.}
 \def\pspm  {Proc.\wb Symp.\wB Pure\wB Math.}
\small 
\end{document}